\documentclass[a4paper,11pt,oneside]{article}
\usepackage[a4paper, margin=1in]{geometry}
\usepackage{setspace}
\usepackage{textcomp}
\usepackage{enumerate}
\usepackage{amsmath, amssymb,amsfonts,amsthm,mathrsfs}
\usepackage{stmaryrd} 
\usepackage{algpseudocode,algorithm}
\usepackage{cancel, cases}
\usepackage{soul}
\usepackage{float,graphicx,subcaption,epstopdf}
\usepackage{longtable,tabu,threeparttable,rotating,adjustbox,multicol,multirow}
\usepackage[normalem]{ulem}
\usepackage[dvipsnames]{xcolor}
\usepackage{tcolorbox} 

\usepackage{hyperref}

\usepackage{setspace,caption}

\usepackage[round, authoryear]{natbib}

\newcommand\tr{\text{tr}}

\newcommand{\bi}{\begin{itemize}}
\newcommand{\ei}{\end{itemize}}

\theoremstyle{plain}
\newtheorem{theorem}{Theorem}
\newtheorem{corollary}[theorem]{Corollary}
\newtheorem{lemma}[theorem]{Lemma}
\newtheorem{proposition}[theorem]{Proposition}

\theoremstyle{definition}
\newtheorem{definition}[theorem]{Definition}
\newtheorem{assumption}[theorem]{Assumption}

\theoremstyle{remark}
\newtheorem{remark}{Remark}

\numberwithin{equation}{section}
\numberwithin{theorem}{section}

\title{Exploratory Randomization for Discrete-Time Risk-Sensitive Benchmarked Investment Management with Reinforcement Learning}
\author{S\'ebastien Lleo\footnote{Finance Department and `AI, Data Science \& Business' AE, NEOMA Business School, France.} \and Wolfgang Runggaldier\footnote{University of Padova, Italy, and Fellow, Institut Louis Bachelier, Paris, France.}}
\date{\today}

\begin{document}
\maketitle

\begin{abstract}
This paper bridges reinforcement learning (RL) and risk-sensitive stochastic control by introducing a tractable exploration mechanism for policy search in risk-sensitive portfolio management, with known and unknown model parameters, that yields an endogenous relative-entropy regularization. We construct a discrete-time risk-sensitive benchmarked investment model. This model combines a factor-based asset universe with periodic portfolio rebalancing. Exploration is incorporated through user-specified Gaussian perturbations to baseline (exploitative) controls. The risk-sensitive stochastic control problem is solved analytically using the Free Energy-Entropy Duality. The Duality recasts the control problem as a linear-quadratic-Gaussian game and introduces a natural penalty for exploration. This approach yields simple sufficiency conditions for optimality. It also induces intuitive bounds on exploration based on risk sensitivity, asset covariance, and rebalancing frequency. Additionally, the optimal investment strategy can be interpreted through the lens of fractional Kelly strategies. By connecting risk-sensitive control theory and RL, this work provides a principled parametric family for policy-gradient implementations, guiding the design of RL methods.
\end{abstract}

\maketitle


\noindent\textbf{keywords}: risk-sensitive control, fractional Kelly strategies, free energy-entropy duality, stochastic games, Linear Quadratic Gaussian games, exploratory controls, portfolio optimization.

\noindent\textbf{JEL classification}: C32; C44; C61; C73; G11; G12.

\noindent\textbf{MSC classes}: 68T05; 91G10; 91G80; 93E20; 93E35.

\section{Introduction}

This paper addresses the gap between reinforcement learning and risk-sensitive stochastic control by introducing a tractable exploration mechanism for policy search in risk-sensitive portfolio problems, with known and unknown model parameters, that yields an endogenous relative-entropy regularization. The mechanism mirrors the exploration–exploitation trade-off in reinforcement learning. Reinforcement learning and stochastic control, including risk-sensitive control, trace their roots to Bellman's dynamic programming principle, but their divergent views on the availability of information and the necessity of exploration have created a significant gap between the two approaches. Reinforcement learning considers environments where there is insufficient information at the outset to estimate a model. The exploration-exploitation trade-off then becomes an essential tool for balancing the pursuit of new information (exploration) with the achievement of a goal based on existing knowledge (exploitation). On the other hand, the theory of stochastic control generally focuses on settings with perfect information about the model's coefficients. In such settings, exploration is typically unnecessary; the optimal policy is purely exploitative. Our work aims to bridge these perspectives within the setting of risk-sensitive portfolio management.

We start from a factor model for an investment universe with $m$ continuously traded assets. Drifts depend on $n$ valuation factors estimated at discrete times. Investors rebalance portfolios when factors are estimated, leading to a discrete-time control problem with continuous-time asset dynamics between rebalancing dates. This hybrid continuous/discrete-time approach, similar to that of \citet{stettnerRiskSensitivePortfolio1999}, is consistent with the empirical asset pricing literature, which usually estimates factors monthly, even though assets trade throughout the day. Then, we introduce an investment benchmark, as in \citet{dall_RSBench,LleoRunggaldierSeparation24}, serving as a performance measure that investors aim to outperform, depending on their degree of risk sensitivity. Finally, we incorporate exploration as a randomized control. Exploration is modeled as a user-specified Gaussian perturbation to the baseline (exploitative) control. This modelling choice is consistent with the treatment in \citet{LleoRunggaldier_ExploratoryRandomization_2026}, and with recent literature on continuous-time reinforcement learning \citep{wangReinforcementLearningContinuous2020,wangContinuousTimeMean2020,jiaQLearningContinuousTime,jiaPolicyEvaluationTemporalDifference2022,jiaPolicyGradientActor2022,jiaContinuoustimeRisksensitiveReinforcement2024a}.  

Then, we formulate the benchmark investment management model as a risk-sensitive control problem. Risk-sensitive control possesses appealing features for financial applications: it connects neatly to utility theory, provides a dynamic version of mean–variance optimization, converges to the Kelly criterion in the limit as the risk sensitivity $\theta \to 0$, and, for $\theta>0$, the risk-sensitive criterion coincides with the entropic risk measure \citep{lleoDualDominanceHow2025}. However, risk-sensitive investment management problems are not standard, so they cannot be solved directly. Our paper proposes a new solution technique that draws on the Free Energy–Entropy Duality (FEED) of \citet{daipraConnectionsStochasticControl1996}. We use the FEED to recast the non-standard control problem as an equivalent linear-quadratic-Gaussian (LQG) stochastic game under an equivalent measure, with a relative-entropy-based penalty. This penalty measures the cost implied by the distance between the physical probability measure and the dual measure, as well as the cost of exploration. Under verifiable sufficient conditions, we then derive an explicit analytical solution of the game: we establish the existence of a saddle point, obtain the optimal controls, and provide backward recursions for the value-function coefficients.

Our solution technique differs from \citet{stettnerRiskSensitivePortfolio1999}, who used a duality argument only to a limited extent to rewrite the original risk-sensitive control problem into one amenable to solution via contraction maps, and from Kuroda and Nagai (2002), who applied a change of measure to transform the risk-sensitive control problem into a linear-exponential-of-quadratic (LEQG) problem. It is similar to \citet{LleoRunggaldier_ExploratoryRandomization_2026}, who also apply the FEED to solve a generic, but standard, LEQG problem. By comparison, our approach fully exploits the FEED to produce an analytical solution to our non-standard risk-sensitive control problem with exploratory controls. Our approach is also advantageous because the penalty for exploration arises naturally as a byproduct of the solution technique, whereas in the continuous-time reinforcement learning literature, this penalty must be specified in the model.

Our approach makes four main contributions: it uncovers intuitive insights into the role of exploration, formulates easy-to-check sufficiency conditions for the existence of a solution, interprets the asset allocation, and reveals a clear connection between risk-sensitive control and reinforcement learning. 

First, our approach shows that optimal exploration is unbiased and provides an explicit upper bound on the user-specified exploration covariance. The FEED naturally treats the mean of the exploration distribution as a control. We show that at the saddle point, this control equals 0, so optimal exploration is unbiased. Regarding the user-specified covariance, the sufficient conditions provide an intuitive upper bound that depends inversely on the investor's risk sensitivity, the diffusion matrix of the risky assets, $\Sigma\Sigma'$, and the time interval between two rebalancing dates. Investors with greater risk sensitivity explore less, while those with lower risk sensitivity explore more, as one would expect. In the limiting case, Kelly investors face no risk-sensitivity-induced bounds on exploration. Moreover, investors should be more cautious with exploration when the covariance of financial asset returns is high. Intuitively, in a factor model, investors buy assets to achieve optimal exposure to the underlying factors. Thus, asset return volatility creates variability around the factor exposure, resulting in natural exploration. The exploration covariance bound accounts for this exploration naturally. Additionally, investors with shorter time between rebalancing dates can explore more. When rebalancing occurs more frequently, investors learn more from exploration, and they can correct suboptimal asset allocations more rapidly.

Second, the main characterization of the optimal controls provides simple and easily interpretable sufficient conditions: the investment universe cannot have redundant assets (a standard condition in portfolio optimization), and the projected curvature of the value function restricted to the noise subspace has all eigenvalues strictly smaller than one (a standard condition for LQG control problems).  Using an alternative characterization of the optimal controls, we also show that these simple sufficient conditions can be recast as a different set of sufficient conditions articulated around the risk-resistance condition \citep[a standard condition in the risk-sensitive control literature][]{ShaijuPetersenFormulasLQR_LQ_LEQG2008,Whittle1990}.  

Third, using our alternative strategy characterization, we interpret optimal portfolio allocations through the lens of fractional Kelly strategies.  In continuous-time investment, fractional Kelly strategies are a fixed-mix investment in the Kelly portfolio, a benchmark-tracking portfolio, and an intertemporal-hedging portfolio \citep{dall_RSBench,LleoRunggaldier_EntropyRegularizationinRLandRSIM_2026}. In our discrete-time setting, the optimal investment is a rotated and rescaled fractional Kelly strategy, with the mix determined by time-dependent matrices. Moreover, these matrices reflect the risk-resistance condition and the LQG curvature condition. 

Finally, our approach bridges risk-sensitive control and reinforcement learning. We illustrate this idea in the policy gradient framework of \citet{hamblyPolicyGradientMethods2021,hamblyRecentAdvancesReinforcement2023} and in an actor-critic approach. If our market and model parameters could be estimated with high accuracy, we could solve the risk-sensitive control problem directly using the recursions in Theorem \ref{theo:main:recursions}. However, in practice, estimating the model parameters is notoriously difficult, especially the expected returns. Reinforcement learning bypasses the traditional statistical estimation of the model parameters, learning a parameterized form of the policy and value functions directly from market data. Therefore, the main purpose of our study of the risk-sensitive benchmarked investment model is to capture the essential features of financial markets, to characterize their fundamental relations, and to provide a principled parametric family for policy-gradient implementations. These insights are invaluable in guiding the design of reinforcement learning methods.

 \section{Setting Up the Risk-Sensitive Benchmarked Asset Management Problem}\label{sec:setup}

Let $\left(\Omega,  \mathcal{F}, \left(\mathcal{F}_s\right)_{s \in [0,T]}, \mathbb{P} \right)$ be a filtered complete probability space. We introduce
a $\mathbb{R}^d$-valued $\mathcal{F}_s$-Wiener process $W_s$ with components $W^i_s, i = 1, \ldots d$, with $d := n + m + 1$, where $n\geq 1$ is the number of factors, $m\geq1$ is the number of financial assets, and the additional dimension accounts for the residual stochastic component of the benchmark return that cannot be replicated through the traded assets\footnote{If the benchmark is perfectly spanned by the traded assets, no additional source of uncertainty is needed and one may take $d:=n+m$.}.

We consider an investor who rebalances her portfolio on a fixed regular schedule (e.g., daily, weekly, or monthly).  Let $0 = t_0 < t_1 < \dots < t_k < \ldots < t_K = T$ denote the rebalancing dates, and let $\Delta t := t_{k+1}-t_k$. We model the investor's self-financing investment \emph{strategy} as a piecewise constant, right continuous  $\mathcal{F}_s$-adapted process $H = \left(h_s\right)_{s \in [0,T]} \in \mathbb{R}^m$ such that the \emph{control} $h_s = h_k$ on each interval $[t_k,t_{k+1}), k=0,\ldots,K-1$ and $h_k$ is chosen at time $t_k$.

\subsection{Model for the Financial Market}\label{sec:RSBAM:model}

We adopt a hybrid-time specification that combines discrete-time factor dynamics, continuous-time asset prices, and discrete-time control updates. Factors are observed, and controls are updated at the rebalancing dates, while asset and benchmark prices evolve continuously between rebalancing times.
\\

\noindent\textbf{1. Risky Financial Assets}\\
The investor can trade $m$ risky financial assets $\left(S^1_s, \ldots, S^m_s\right) =: \left(S_s\right)_{s \in [0,T]}$ to constitute a portfolio. In this paper, we assume that all prices, benchmark level, and wealth are expressed in discounted units, with discounting performed at the prevailing short-term interest rate. The prices of the risky financial assets follow geometric dynamics: 
\begin{align}\label{eq:dS}
    \frac{dS^i_t}{S^i_t} 
    = (a + A X_k)_i dt
    + \sum_{j=1}^{d} \sigma_{ij} dW^j_t,
    \quad
    i = 1, \ldots, m; \quad t \in [t_k,t_{k+1}), k=0,\ldots,K-1
\end{align}
where $a \in \mathbb{R}^m$, $A \in \mathbb{R}^{m\times n}$, and $\Sigma := (\sigma_{ij}) \in \mathbb{R}^{m \times d}$, and where the factor process $X_k$ is defined next. In particular, an application of It\^o's lemma shows that for $t \in [t_k,t_{k+1}), k=0,\ldots,K-1$, 
\begin{align}\label{eq:S}
    S^i_t 
    = S^i_{t_k} \exp \left\{ \left[(a + A X_k)_i  - \frac{1}{2} \sum_{j=1}^d \sigma_{ij}^2\right] (t-t_k)
    + \sum_{j=1}^{d} \sigma_{ij} (W^j_t - W^j_{t_k}) \right\},
    \quad
    i = 1, \ldots, m.
\end{align}

We assume that no two assets have an identical risk profile:   
\begin{assumption}\label{as:sigma:posdef}
    The matrix $\Sigma\Sigma'$ is positive definite.
\end{assumption}

\noindent\textbf{2. Factor Process}\\
The  $n$ valuation factors $\left(X^1_t, \ldots, X^n_t\right) =: \left(X_k\right)_{t \in \{t_0,\ldots,t_K\}}$ are typically empirical asset pricing factors, macroeconomic factors, fundamental factors, or statistical factors. Following \citet{stettnerRiskSensitivePortfolio1999}, we model the first difference of the factors as a discrete-time first-order autoregressive (AR(1)) process:
\begin{align}\label{eq:state} 
    X_{k+1} - X_k = (b + B X_k)\Delta t + \Lambda w_k,
\end{align}
where we used the notational shortcut $X_k$ for the value $X_{t_k}$ of the factor process at time $t_k$, and where $b \in \mathbb{R}^n$, $B \in \mathbb{R}^{n \times n}$, and $\Lambda \in \mathbb{R}^{n \times d}$. The noise term $w_{k} := (W_{k+1}-W_k)$ is the increment of the Brownian motion $W$, where $W_k$ denotes $W_{t_k}$, the value of the Brownian motion at time $t_k$. Consequently, $w_{k}$ is Gaussian with mean equal to 0 and covariance equal to $\Delta t I_d$, where $I_d$ is the $d$-dimensional identity matrix. Hence, the drift term is scaled by $\Delta t$, while the innovation term inherits covariance $\Delta t\,I_d$ from Brownian increments.

This assumption is less restrictive than it appears. The empirical asset pricing literature typically estimates its models with a monthly frequency. A monthly frequency is also a popular choice in the portfolio optimization literature. However, asset prices evolve continuously.

We clean up the notation by introducing $\tilde{B} := I_n + B \Delta t$ to express\footnote{No stability restriction on $\tilde B$ is needed for the finite-horizon analysis below. Stronger assumptions may be imposed in infinite-horizon extensions.} \eqref{eq:state} as
 \begin{align}\label{eq:state:AR} 
    X_{k+1} = b \Delta t + \tilde{B} X_k + \Lambda w_k.
\end{align}
We shall refer to the process $(X_{t_k})_{k =0,\ldots,K}$ as the \emph{state} process.
\\

\noindent\textbf{3. Benchmark Index}\\
The investor manages a portfolio of financial assets against a benchmark index, typically a financial index or a custom-built passive portfolio. We model the benchmark's (discounted) level $\left(L_s\right)_{s \in [0,T]}$ as:
\begin{align}\label{eq:dL}
\frac{dL_t}{L_t} = (c + C X_k)dt + \Xi' dW_t,
\quad t \in [t_k,t_{k+1}), k=0,\ldots,K-1
\end{align}
where $c \in \mathbb{R}$, $C \in \mathbb{R}^n$, and $\Xi\in \mathbb{R}^{d}$.

\subsection{Risk-Sensitive Control Problem}\label{sec:RSBAM:RSC_problem}

\noindent\textbf{1. Wealth Process}\\
The wealth process $\left(V_s\right)_{s \in [0,T]}$ solves the SDE 
\begin{align}\label{eq:Phi:byinterval}
    \frac{dV_s}{V_s} 
    &= h_k'\left(a+ A X_k\right)ds
    + h_k' \Sigma	 dW_s,
\end{align}
for $s \in [t_k,t_{k+1}), k=0,\ldots,K-1$ and where $h_k$ denotes the value of the control at time $t_k$. A precise admissibility definition for the control is given in Section \ref{sec:sol:exploratorycontrols}. Here $h_k$ denotes the vector of portfolio weights in discounted risky assets at time $t_k$. The residual wealth is invested in the num\'eraire asset, with a discounted price constant at 1. Unless otherwise stated, short-selling and leverage are allowed. Throughout this section, we assume controls are non-anticipative and satisfy the required integrability conditions.
\\

\noindent\textbf{2. Log Price Relative and Log Excess Return}\\
To measure the investment portfolio's cumulative outperformance relative to its benchmark, we first compute the \emph{log price relative} process $\left(R_s\right)_{s \in [0,T]}$, defined as $R_s := \ln \frac{V_s}{L_s}$. This process solves the SDE:
\begin{align}\label{eq:excess_return:byinterval}
dR_s &= \left[ 
  \left(- \frac{1}{2} h_k'\Sigma\Sigma'h_k 
  + h_k'a 
  + \frac{1}{2}\Xi'\Xi- c \right)  
  + \left(h_k' A - C \right)X_k\right] ds
  + \left(h_k'\Sigma - \Xi' \right) dW_s. 
\end{align}
for $s \in [t_k,t_{k+1}), k=0,\ldots,K-1$. Hence, at the terminal time $T$,
\begin{align}\label{eq:excess_return}
R_T =& R_0 + \sum_{k=0}^{K-1}   \int_{t_k}^{t_{k+1}}\left\{ 
    \left(- \frac{1}{2} h_k'\Sigma\Sigma'h_k 
    + h_k'a 
    + \frac{1}{2}\Xi'\Xi- c \right)  
    + \left(h_k' A - C \right)X_k\right\} ds
                            \nonumber\\
    &+ \sum_{k=0}^{K-1} \int_{t_k}^{t_{k+1}}\ \left(h_k'\Sigma - \Xi' \right) dW_s
                            \nonumber\\
    =& R_0 + \sum_{k=0}^{K-1} \left\{ 
    \left(- \frac{1}{2} h_k'\Sigma\Sigma'h_k 
    + h_k'a 
    + \frac{1}{2}\Xi'\Xi- c \right)  
    + \left(h_k' A - C \right)X_k\right\} \Delta t
                                \nonumber\\
    &+ \sum_{k=0}^{K-1} \left(h_k'\Sigma - \Xi' \right) w_k. 
\end{align}
Using a benchmarked log-relative process is natural for active portfolio management, as it directly evaluates cumulative outperformance net of the benchmark and yields a state reduction through elimination of wealth. From the log price relative process, we obtain the cumulative \emph{log excess return} of the portfolio over its benchmark as the difference $ R_T - R_0$ of log price relatives. Conditional on the sequence $\left\{X_k,h_k\right\}_{k=0}^{K-1}$, the increments of $R_t-R_0$ over the rebalancing intervals are independent Gaussian random variables. It follows that $R_T-R_0$ is conditionally Gaussian, so its conditional exponential moment admits a closed-form expression.
\\

\noindent\textbf{3. Risk-Sensitive Benchmarked
Criteria}\\
The investor seeks to maximize the risk-sensitive benchmarked criterion: 
\begin{align}\label{eq:J} 
J(H,\theta)
&:= -\frac{1}{\theta} \ln \mathbf{E} \left[ e^{-\theta (R_T-R_0)} \right]
                        \nonumber\\
&= -\frac{1}{\theta} \ln \mathbf{E} \left[ \exp\left\{ -\theta \sum_{k=0}^{K-1} \left\{ 
    \left(- \frac{1}{2} h_k'\Sigma\Sigma'h_k 
    + h_k'a 
    + \frac{1}{2}\Xi'\Xi- c \right)  
    + \left(h_k' A - C \right)X_k\right\} \Delta t
    \right.\right.
                                \nonumber\\
    &\left.\left.
    - \theta \sum_{k=0}^{K-1} \left(h_k'\Sigma - \Xi' \right) w_k\right\}\right]
,
\end{align}
where $\theta$ is the risk sensitivity parameter, $T < \infty$ is a fixed time horizon. Here, we focus on the risk-averse case $\theta>0$. Note that the state (factor) process $X_k$ is the sufficient state variable. Wealth has been eliminated through the log-relative formulation.

We also introduce the exponentially-transformed criterion $ I(H,\theta)$ to be minimized,
\begin{align}\label{eq:I}
I(H,\theta)
&:= e^{-\theta J(H,\theta)}
= \mathbf{E} \left[ e^{-\theta (R_T-R_0)}\right].
\end{align}

Together, the discrete-time state process at \eqref{eq:state}, and the criteria  \eqref{eq:J}-\eqref{eq:I} constitute a non-standard discrete-time risk-sensitive control problem. 
The criterion is non-standard in two respects: 1) the control does not steer the state dynamics; and 2) the control and Brownian increments interact inside the exponential payoff. Consequently, unlike in more standard formulations, the stochastic integral term cannot be eliminated by taking conditional expectations.
\\

\noindent\textbf{4. Randomized Exploratory Control}\\
Up to this point, $h_k$ denotes a standard $\mathcal{F}_{t_k}$-adapted control. We now introduce randomized controls to model exploration in the spirit of reinforcement learning. Specifically, we consider randomized controls of the form $h_k=\bar h_k+v_k$, where a baseline control $\bar h_k$ is perturbed by a Gaussian shock $v_k \in \mathbb{R}^{m}$, independent of $\mathcal{F}_{t_k}$, thereby capturing an exploitation, via $\bar h_k$, and exploration, via $v_k$, within the risk-sensitive benchmarked portfolio problem.

\begin{definition}[Randomized Exploratory Control]\label{def:randomizedcontrol}
     A piecewise-constant $\mathbb{R}^m$-valued control process $(h_s)_{s\in[0,T]}$ is a \emph{randomized exploratory control} if, for each $k=0,\dots,K-1$ and $s \in [t_k, t_{k+1})$, it admits the representation
    \begin{align}
        h_s =\bar h_k + v_k,
    \end{align}
    where $\bar h_k$ is $\mathcal{F}_{t_k}$-measurable, and $v_k \sim \mathcal{N}(0, \Psi_k)$, with deterministic positive definite covariance $\Psi_k \in \mathbb{R}^{m \times m}$, is serially independent across $k$ and independent of $\mathcal{F}_{t_k}$.   
\end{definition}

In particular, the perturbation $v_k$ is independent of $w_k$, and more generally independent of the Brownian motion $W$. Hence, exploration does not implicitly anticipate market noise. Additionally, admissible strategies remain non-anticipative and compatible with Markov dynamic programming. Hence, the randomized control is of the form $h_k = \bar h_k + v_k \sim \mathcal{N} \left(\bar h_k, \Psi_k\right)$  and it admits a representation as a measure $\pi\left(dh;\bar h_k\right) \sim \mathcal{N} \left(\bar h_k, \Psi_k\right)$.

\begin{remark}
Allowing the covariance matrix $\Psi_k$ to vary with time $t_k$ opens the possibility to reduce the breadth of exploration over time, consistent with the reinforcement learning literature. Exploration is more valuable in early times when little information is available. As more information becomes available, exploration gradually loses its value.
\end{remark}

\subsection{Free Energy, Relative Entropy, and the Energy-Entropy Duality}\label{sec:energy-entropy:defs}

The following notions from \citet{daipraConnectionsStochasticControl1996} will be used in Section \ref{sec:sol:FEED} to reformulate the risk-sensitive control problem with criterion $I(H,\theta)$ as an equivalent Linear-Quadratic-Gaussian stochastic differential game. 

\begin{definition}[Free Energy]
The \emph{free energy} of a random variable $\psi$ with respect to a reference measure $\mathbb{P}$ is defined, under the integrability condition $\mathbf{E}^{\mathbb{P}}[e^\psi] < \infty$, as
\begin{align}\label{eq:free_energy}
  \mathcal{E}^{\mathbb{P}}\{\psi\} := \ln \mathbf{E}^{\mathbb{P}} \left[e^\psi\right] 
  = \ln \left( \int e^\psi \, d\mathbb{P} \right).
\end{align}
\end{definition}

\begin{definition}[Relative Entropy / Kullback-Leibler Divergence]
Consider a probability measure $\mathbb{P}^\gamma$ that is \emph{absolutely continuous} with respect to $\mathbb{P}$. The \emph{relative entropy} of $\mathbb{P}^\gamma$ with respect to $\mathbb{P}$ is
\begin{align}\label{eq:relative_entropy}
  D_{\rm KL}\left(\mathbb{P}^\gamma \,\|\, \mathbb{P}\right)
  := \mathbf{E}^{\gamma} \left[ \ln \frac{d\mathbb{P}^\gamma}{d\mathbb{P}} \right]
  = \mathbf{E}^{\gamma} \left[ \ln \mathcal{L}^\gamma \right],
\end{align}
where $\mathcal{L}^\gamma := \frac{d\mathbb{P}^\gamma}{d\mathbb{P}}$ denotes the Radon-Nikodym derivative and $\mathbf{E}^{\gamma}[\cdot]$ is the expectation under $\mathbb{P}^\gamma$.
\end{definition}

\begin{proposition}[Free Energy-Entropy Duality, {\cite[Prop.~2.3(ii)]{daipraConnectionsStochasticControl1996}}]\label{prop:FEEDuality}
Let $\psi$ be a measurable random variable such that $\mathbf{E}^{\mathbb{P}}[e^\psi] < \infty$. Then
\begin{align}\label{eq:free-energy-entropy-duality}
  \mathcal{E}^{\mathbb{P}}\{\psi\} 
  = \sup_{\mathbb{P}^\gamma \ll \mathbb{P}} 
    \left\{ 
      \mathbf{E}^{\gamma}[\psi] - D_{\rm KL}(\mathbb{P}^\gamma \,\|\, \mathbb{P})
    \right\},
\end{align}
where the supremum is taken over all probability measures $\mathbb{P}^\gamma$ absolutely continuous with respect to $\mathbb{P}$.
\end{proposition}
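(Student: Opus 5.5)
The plan is to prove the Gibbs variational principle directly. I will exhibit the maximizing measure explicitly. Then I will show that every other admissible measure does no better, using nonnegativity of relative entropy.

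First, assume $\psi$ is bounded above, or at least that $\mathbf{E}^{\mathbb{P}}[\psi^+ e^{\psi}]<\infty$, so that all quantities are finite. Put $Z:=\mathbf{E}^{\mathbb{P}}[e^\psi]\in(0,\infty)$ and define the Gibbs (tilted) measure by $d\mathbb{P}^{*}/d\mathbb{P}:=e^{\psi}/Z$. For any $\mathbb{P}^\gamma\ll\mathbb{P}$ with $D_{\rm KL}(\mathbb{P}^\gamma\|\mathbb{P})<\infty$, the key identity is
\begin{align*}
D_{\rm KL}(\mathbb{P}^\gamma\,\|\,\mathbb{P}^{*})
= \mathbf{E}^{\gamma}\!\left[\ln \mathcal{L}^\gamma - \psi + \ln Z\right]
= D_{\rm KL}(\mathbb{P}^\gamma\,\|\,\mathbb{P}) - \mathbf{E}^{\gamma}[\psi] + \mathcal{E}^{\mathbb{P}}\{\psi\}.
\end{align*}
This uses that $\mathbb{P}^{*}$ and $\mathbb{P}$ are equivalent, because $e^\psi>0$; hence $\mathbb{P}^\gamma\ll\mathbb{P}^*$ and $d\mathbb{P}^\gamma/d\mathbb{P}^* = \mathcal{L}^\gamma Z e^{-\psi}$. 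Rearranging gives
\begin{align*}
\mathbf{E}^{\gamma}[\psi]-D_{\rm KL}(\mathbb{P}^\gamma\|\mathbb{P}) = \mathcal{E}^{\mathbb{P}}\{\psi\} - D_{\rm KL}(\mathbb{P}^\gamma\|\mathbb{P}^{*}).
\end{align*}
Measures with $D_{\rm KL}(\mathbb{P}^\gamma\|\mathbb{P})=\infty$ contribute $-\infty$, or are excluded under the convention that the expression is undefined. They therefore do not affect the supremum.

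Second, I will invoke Gibbs' inequality, $D_{\rm KL}(\mathbb{Q}\|\mathbb{P}^*)\ge 0$, with equality iff $\mathbb{Q}=\mathbb{P}^*$. This follows from Jensen's inequality applied to the convex function $x\ln x$ under $\mathbb{P}^*$. Together with the identity above, it yields "$\le$" in \eqref{eq:free-energy-entropy-duality} for every $\mathbb{P}^\gamma$. Choosing $\mathbb{P}^\gamma=\mathbb{P}^*$ gives equality, so the supremum is attained.

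Third, I will remove the auxiliary integrability assumption by truncation. Set $\psi_N:=\psi\wedge N$. The upper bound $\mathbf{E}^\gamma[\psi]-D_{\rm KL}\le \mathcal{E}^{\mathbb{P}}\{\psi\}$ holds directly for each $\mathbb{P}^\gamma$. To see this, apply Jensen's inequality to $\mathbf{E}^\gamma[\ln(e^{\psi}/\mathcal{L}^\gamma)]\le \ln \mathbf{E}^\gamma[e^\psi/\mathcal{L}^\gamma]\le \ln\mathbf{E}^{\mathbb{P}}[e^\psi]$, restricting attention to the set $\{\mathcal{L}^\gamma>0\}$. For the lower bound, the tilted measures $\mathbb{P}^*_N$ achieve $\mathcal{E}^{\mathbb{P}}\{\psi_N\}\le\mathbf{E}^{\mathbb{P}^*_N}[\psi]-D_{\rm KL}(\mathbb{P}^*_N\|\mathbb{P})$, since $\psi\ge\psi_N$. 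Monotone convergence then sends the left side to $\mathcal{E}^{\mathbb{P}}\{\psi\}$.

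The main obstacle is this measure-theoretic bookkeeping. One must ensure that $\mathbf{E}^\gamma[\psi]$ is well defined, since $\psi^-$ may fail to be $\mathbb{P}^\gamma$-integrable, and avoid $\infty-\infty$ expressions. I will handle it by adopting the convention that the bracket equals $-\infty$ whenever $\mathbf{E}^\gamma[\psi^-]=\infty$ or the entropy is infinite, and by working with truncations as described. Since the statement is quoted from \citet{daipraConnectionsStochasticControl1996}, the paper may simply cite it, but the above is the self-contained argument.
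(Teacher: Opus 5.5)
Your proof is correct. The paper gives no proof of this proposition: it imports it from Dai Pra, Meneghini and Runggaldier (Prop.~2.3(ii)). It adds only the remark that, ``under suitable conditions'', the supremum is attained at the Gibbs measure $e^{\psi}/\mathbf{E}^{\mathbb{P}}[e^{\psi}]$.

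Your route is the standard self-contained proof of the Gibbs (Donsker--Varadhan) variational principle. It combines the identity $\mathbf{E}^{\gamma}[\psi]-D_{\rm KL}(\mathbb{P}^\gamma\|\mathbb{P})=\mathcal{E}^{\mathbb{P}}\{\psi\}-D_{\rm KL}(\mathbb{P}^\gamma\|\mathbb{P}^*)$ with Gibbs' inequality, a direct Jensen bound for the upper estimate, and truncation $\psi\wedge N$ with monotone convergence for the lower estimate.

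What it buys over the citation is a precise reading of that remark. Your identity holds, without the auxiliary assumption, for every $\mathbb{P}^\gamma$ with finite entropy and finite $\mathbf{E}^\gamma[\psi]$. By the equality case of Gibbs' inequality, any maximizer must therefore be $\mathbb{P}^*$. Hence the supremum is attained if and only if $D_{\rm KL}(\mathbb{P}^*\|\mathbb{P})<\infty$, i.e.\ $\mathbf{E}^{\mathbb{P}}[\psi^+e^{\psi}]<\infty$. Under the bare hypothesis $\mathbf{E}^{\mathbb{P}}[e^{\psi}]<\infty$ it is in general only approached, by your tilted measures $\mathbb{P}^*_N$. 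Your argument also makes explicit the $\infty-\infty$ convention that the proposition, as stated, leaves implicit.

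Two small remarks.

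First, Steps 1--2 are not needed for the final result. Your Jensen estimate already gives the upper bound. For the lower bound it suffices to compute from the density that $\mathbf{E}^{\mathbb{P}^*_N}[\psi_N]-D_{\rm KL}(\mathbb{P}^*_N\|\mathbb{P})=\ln\mathbf{E}^{\mathbb{P}}[e^{\psi_N}]$, and then use $\psi\ge\psi_N$.

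Second, your convention sets the bracket to $-\infty$ only when $\mathbf{E}^\gamma[\psi^-]=\infty$ or the entropy is infinite. You should therefore state explicitly that finite entropy forces $\mathbf{E}^\gamma[\psi^+]<\infty$. Otherwise a measure with finite entropy and $\mathbf{E}^\gamma[\psi]=+\infty$ would break the upper bound. This does follow from your Jensen step. With $Y=e^{\psi}/\mathcal{L}^\gamma$ one has $(\ln Y)^+\le Y$ and $\mathbf{E}^\gamma[Y]\le\mathbf{E}^{\mathbb{P}}[e^{\psi}]$. Moreover, $\ln\mathcal{L}^\gamma$ is $\mathbb{P}^\gamma$-integrable whenever the entropy is finite, since $\mathbf{E}^\gamma[(\ln\mathcal{L}^\gamma)^-]\le 1/e$.
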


\begin{remark}
Under suitable conditions, the supremum in \eqref{eq:free-energy-entropy-duality} is attained at a measure $\mathbb{P}^{\gamma^*}$ whose Radon-Nikodym derivative with respect to $\mathbb{P}$ is given by
\begin{align}\label{eq:RNderivative-gamma-optimal}
  \mathcal{L}^{\gamma^*} := \frac{d\mathbb{P}^{\gamma^*}}{d\mathbb{P}} 
  = \frac{e^\psi}{\mathbf{E}^{\mathbb{P}}[e^\psi]}.
\end{align}
This expresses the optimal change of measure in terms of the exponential of the random variable $\psi$ and ensures that the transformed measure is properly normalized.
\end{remark}

\section{Solution Via The Free Energy-Entropy Duality}\label{sec:solution}

We now derive a solution for the randomized risk-sensitive control problem associated with the criteria \eqref{eq:J}-\eqref{eq:I} and the state dynamics at \eqref{eq:state:AR}. As announced above, we use the energy-entropy duality to associate with the risk-sensitive control problem, set with respect to an initial measure $\mathbb{P}$, a risk-neutral randomized stochastic game, set with respect to a transformed measure $\mathbb{P}^{\bar\gamma, \bar\eta}$. This risk-neutral randomized stochastic game turns out to be penalized by an appropriate relative entropy term.

As a reference for the reader, Table~\ref{tab:variables:summary} lists the key variables and parameters used in Sections 2 and 3.

\begin{table}
    \centering
    \resizebox{\textwidth}{!}{\begin{tabular}{ccclc}
        \hline\\
        Variable/ & Dimension   &  Update time   & Definition & Introduced \\
        parameter & & (discrete/continuous) & & in section \\
        \hline\\

        \multicolumn{5}{l}{\underline{Market variables:}} \\
         
        $X_k$  & $\mathbb{R}^{n}$   &    Discrete      & Valuation factor (state variable) & 2.1\\
        $S_s$  & $\mathbb{R}^{m}$   &    Continuous    & Financial asset discounted price & 2.1\\
        $L_s$  & $\mathbb{R}$       &    Continuous    & Benchmark level & 2.1\\
        $V_s$  & $\mathbb{R}$       &    Continuous    & Wealth process & 2.1\\
        $R_s$  & $\mathbb{R}$       &    Continuous    & Log price relative $R_s := \ln \left( V_s / L_s\right) $ & 2.1\\
        
        \hline\\
         
        \multicolumn{2}{l}{\underline{Under the measure $\mathbb{P}$:}} & 
        \multicolumn{3}{l}{$X_{k+1} = b\Delta t + \tilde{B} X_k + \Lambda w_k$}\\

         \multicolumn{2}{l}{ } &
         \multicolumn{3}{l}{\small $R_T - R_0 = \sum_{k=0}^{K-1} \left\{ 
            \left(- \frac{1}{2} h_k'\Sigma\Sigma'h_k 
            + h_k'a 
            + \frac{1}{2}\Xi'\Xi- c \right)  
            + \left(h_k' A - C \right)X_k\right\} \Delta t
            + \sum_{k=0}^{K-1} \left(h_k'\Sigma - \Xi' \right) w_k$} \\

        \\
        
        $W_s$ & $\mathbb{R}^{d}$ & Continuous  & $\mathcal{F}_s$-Wiener process on $\left(\Omega,  \mathcal{F}, \left(\mathcal{F}_s\right)_{s \in [0,T]}, \mathbb{P} \right)$ & 2 \\
        & $d=n+m+1$ & & & \\
   
        $h_k$ & $\mathbb{R}^{m}$ & Discrete  & 1. Generic control for the initial problem & 2.1 \\
                  &  & & 2. Exploratory control: $h_k = \bar{h}_k + v_t \sim \mathcal{N} \left(\bar{h}_k, \Psi_k\right)$  & 2.2 \\
        $\bar{h}_t$ & $\mathbb{R}^{m}$ & Discrete  & Deterministic control & 2.2 \\
        $v_k$   & $\mathbb{R}^{m}$ & Discrete  & Exploration noise: & 2.2 \\
                &  & & $v_k \sim \mathcal{N}\left(0,\Psi_k \right)$ & \\
        
        $J(H,\theta)$ & $\mathbb{R}$ & Discrete & Control criterion: $J(H,\theta):= -\frac{1}{\theta} \ln \mathbf{E} \left[ e^{-\theta (R_T-R_0)} \right]$ & 2.1 \\   
        $I(H,\theta)$ & $\mathbb{R}$ & Discrete & Control criterion: $I(H,\theta) := \exp \left\{-\theta J(H,\theta)\right\} = \mathbf{E} \left[ e^{-\theta (R_T-R_0)}\right]$ & 2.1 \\
        
        \\

                  \hline\\
         \multicolumn{2}{l}{\underline{Under the measure $\mathbb{P}^{\bar\gamma,\bar\eta}$:}} &
         \multicolumn{3}{l}{$X_{k+1} = b \Delta t + \tilde{B} X_k +\Lambda \bar\gamma_k \Delta t + \Lambda w^{\bar\gamma}_k$}\\

         \multicolumn{2}{l}{ } &
         \multicolumn{3}{l}{\small $\bar R^\pi_T - R_0 = \sum_{k=0}^{K-1}  \Bigg\{
            \left(- \frac{1}{2} \left(\bar h_k +\bar\eta_k\right)'\Sigma\Sigma'\left(\bar h_k +\bar\eta_k\right)
             - \frac{1}{2} \tr\left( \Psi_k\Sigma\Sigma'\right) 
            + \left(\bar h_k +\bar\eta_k\right)'a 
            + \frac{1}{2}\Xi'\Xi- c \right)$} \\
        \multicolumn{2}{l}{ } &
        \multicolumn{3}{l}{\small $ \phantom{R_T - R_0 =} 
            + \left(\left(\bar h_k +\bar\eta_k\right)' A - C \right)X_k 
            + \left((\bar h_k + \bar\eta_k)'\Sigma - \Xi' \right)\bar\gamma_k
            \Bigg\} \Delta t
            + \sum_{k=0}^{K-1} \left((\bar h_k + \bar\eta_k)'\Sigma - \Xi' \right) w^{\bar\gamma}_k.$} \\

        \\ 
         
        $\bar{\gamma}_k$ & $\mathbb{R}^{d}$  &  Discrete   & Change of measure process (control) & 3.2\\
         
        $\bar{\eta}_k$   & $\mathbb{R}^{m}$ &  Discrete    & Change of measure process (control) & 3.2\\
         
        $W^{\bar\gamma}_s$ & $\mathbb{R}^{d}$ & Continuous & $\mathcal{F}_s$-Wiener process $W^{\bar\gamma}_s := W_s - \int_0^s \bar\gamma_u\, du,$ & 3.2\\

        $v^{\bar\eta}_k$ & $\mathbb{R}^{m}$ & Continuous   & Exploration noise: $v^{\bar\eta}_k \sim \mathcal{N} \left(\bar\eta_k, \Psi_k\right)$ & 3.2\\

        \\
        
        \hline\\

        \multicolumn{5}{l}{\underline{Value function for the stochastic differential game:}} \\
        $u_0(x_0)$  & $\mathbb{R}$    & Discrete   & Value of the stochastic differential game: & 3.3 \\
        & & & {\small $u_0(x_0) := \inf_{\bar h \in \bar{\mathcal A}^{H}_{\mathrm{expl}}}\sup_{\bar\gamma,\bar\eta}
                \mathbf{E}^{\bar\gamma, \bar\eta} \Bigg[  \theta \sum_{k=0}^{K-1} g(X_k, \bar h_k, \bar\eta_k, \bar \gamma_k)\Delta t \Bigg]$} & \\
        $u_k(X_k)$ & $\mathbb{R}$    & Discrete    & Recursive quadratic expression:  & 3.3\\
        & & & $u_k(X_k) = \inf_{\bar h \in \bar{\mathcal A}^{H}_{\mathrm{expl}}}\sup_{\bar\gamma,\bar\eta}  \left\{
        \theta g(X_k, \bar h_k, \bar\eta_k, \bar \gamma_k)\Delta t
        + \mathbf{E}_{k,X_k}^{\bar\gamma, \bar\eta} \left[ V_{k+1}(X_{k+1}) \right]\right\}$ & 3.3\\
        & & & $\phantom{u_k(X_k) :} = X_k' P_k X_k + X_k' p_k + r_k$ & 3.3\\

        \\
        
        \hline\\

        \multicolumn{5}{l}{\underline{Shorthand notation}} \\
        $\mathcal{A}_{k+1}$ &   $\mathbb{R}^d$ & Discrete & $\mathcal{A}_{k+1} = 
        \Lambda' P_{k+1} \Lambda \Delta t - I_d$ & 3.3 \\
        $\mathcal{B}_{k+1}$ & $\mathbb{R}^d$ & Discrete & $\mathcal{B}_{k+1} := 
        \theta \Sigma'\left(\Sigma\Sigma'\right)^{-1}\Sigma
        - \mathcal{A}_{k+1}$ & 3.3
        \\
        $\mathcal{C}_{k+1}\!(\theta)$ & $\mathbb{R}^m$ & Discrete & $\mathcal{C}_{k+1}\!(\theta) = \frac{1}{\theta+1}\Sigma\left[
        I_d
        - \theta \mathcal{A}_{k+1}^{-1} 
        \right]\Sigma'$
        \\
         \hline
    \end{tabular}}
    \caption{Key variables and parameters appearing in Sections 2 and 3}
    \label{tab:variables:summary}
\end{table}

\subsection{Exploratory Controls and the Log-Relative Return Process}\label{sec:sol:exploratorycontrols}

We define the class of exploratory \emph{strategies} and derive a policy-averaged representation of the terminal log-relative return that explicitly reveals the effect of exploration.

\begin{definition}[Class of admissible exploratory strategies $\mathcal{A}^{H}_{\mathrm{expl}}$]\label{def:class:AH:expl}
A strategy $H = (h_s)_{s \in [0,T]}$ with values in $\mathbb{R}^m$  
belongs to the exploratory admissible class $\mathcal{A}^{H}_{\mathrm{expl}}$ if the control process $h_s$ is $\mathcal{F}_s$-measurable and satisfies the following conditions:

\begin{enumerate}[(i)]
    \item $(h_s)_{s \in [0,T]}$ is a \emph{randomized control} according to Definition \ref{def:randomizedcontrol}.
    
    \item \emph{Square-integrability:}  
    \begin{align}
        \mathbb{E}[\|h_k\|^2] = \mathbb{E} [\|\bar h_k\|^2] + \mathrm{tr}(\Psi_k) < +\infty, 
        \quad k=0,\dots,K-1,
    \end{align}
    where $\|h_k\|^2 := h_k'h_k$.
\end{enumerate}

\end{definition}

For our finite-horizon discrete-time representation used below, it is convenient to work on a canonical product-space representation and write the collection of Brownian increments and exploration shocks. We denote elements $\omega \in \Omega$ of the underlying probability space as
\begin{align*}
\omega = (\omega^w, \omega^v) := (w_0, \ldots, w_{K-1}, v_0, \ldots, v_{K-1}),
\end{align*}
where $w_k = W_{k+1} - W_k$ are the Brownian increments and $v_k$ are the random perturbations from exploration.

Using the randomized control representation in Definition \ref{def:randomizedcontrol}, and conditioning on the state process and Brownian increments, we average over the Gaussian exploratory policy at each rebalancing date to obtain the following policy-averaged terminal log-relative return, which we denote by $\bar R_T^\pi$:
\begin{align}\label{eq:excess_return:expl}
\bar R^\pi_T 
=& 
    R_0 + \sum_{k=0}^{K-1} \int_{\mathbb{R}^m} 
\Bigg\{ 
    - \frac{1}{2} h_k'\Sigma\Sigma'h_k
    + h_k' a
    + \frac{1}{2}\Xi'\Xi - c
    + (h_k' A - C) X_k
\Bigg\} \pi(dh;\bar h_k) \Delta t
                        \nonumber\\
&
+ \sum_{k=0}^{K-1} \int_{\mathbb{R}^m} (h_k'\Sigma - \Xi') w_k \, \pi(dh;\bar h_k)
\nonumber\\[2mm]
=& R_0 + \sum_{k=0}^{K-1} 
\Bigg\{
- \frac{1}{2} \bar h_k' \Sigma \Sigma' \bar h_k
- \frac{1}{2} \mathrm{tr}(\Psi_k \Sigma \Sigma')
+ \bar h_k' a
+ \frac{1}{2} \Xi'\Xi - c
+ (\bar h_k' A - C) X_k
\Bigg\} \Delta t
\nonumber\\
&+ \sum_{k=0}^{K-1} (\bar h_k' \Sigma - \Xi') w_k.
\end{align}

\begin{remark}[Interpretation of Exploration]
The additive term
\begin{align}
-\frac{1}{2}\mathrm{tr}(\Psi_k\Sigma\Sigma')
\end{align}
lowers expected log-relative performance and can be interpreted as the instantaneous cost of exploration. The trace term $\mathrm{tr}(\Psi_k \Sigma \Sigma')$ shows that exploration is penalized more strongly when the exploratory covariance $\Psi_k$ allocates mass to directions with larger asset return volatility, as encoded by $\Sigma\Sigma'$.
\end{remark}

We define the risk-sensitive benchmarked criteria for the policy-averaged log excess return $(\bar R^\pi_T-R_0)$ as: 
\begin{align} 
J^\pi(H,\theta)
&:= -\frac{1}{\theta} \ln \mathbf{E} \left[ e^{-\theta (\bar R^\pi_T - R_0)} \right]  \label{eq:J:pi} \\
I^\pi(H,\theta)
&:= e^{-\theta J^\pi(H,\theta)}
= \mathbf{E} \left[ e^{-\theta (\bar R^\pi_T-R_0)}\right].
\label{eq:I:pi}
\end{align}
From this point onward, we work with the policy-averaged criterion and write $J(H,\theta)$ and $I(H,\theta)$ for $J^\pi(H,\theta)$ and $I^\pi(H,\theta)$ when no confusion arises.

\subsection{Free Energy-Entropy Duality for the Randomized Risk-Sensitive Investment Management Problem}\label{sec:sol:FEED}

We introduce on the measurable space
$(\Omega,\mathcal{F},(\mathcal{F}_s)_{s\in[0,T]})$
a probability measure $\mathbb{P}^{\bar\gamma,\bar\eta}$,
distinct from $\mathbb{P}$ and parameterised by two processes:

\begin{enumerate}[(i)]
    \item \emph{Duality–driven continuous-time shift:}  
    a piecewise-constant, right-continuous process 
    $\bar\Gamma=(\bar\gamma_s)_{s\in[0,T]} \in \mathbb{R}^d$  
    such that $\bar\gamma_s$ is constant on each interval 
    $[t_k,t_{k+1})$, $k=0,\ldots,K-1$. 

    \item \emph{Exploration–driven discrete mean shift:} discrete-time process 
    $\bar\eta=(\bar\eta_k)_{k=0,\ldots,K-1}\in\mathbb{R}^m$
    defined on the rebalancing dates.
\end{enumerate}


\begin{remark}[Piecewise-constant process $\bar\gamma$]
Since the state $X$ and control $h$ are updated only on the discrete rebalancing grid $\{t_k\}$ and the objective aggregates interval contributions, we restrict attention to piecewise-constant dual drifts $\bar\gamma$. This is the natural class for the discrete-time dynamic programming formulation developed below.
\end{remark}

 We first define admissible dual variables and the corresponding equivalent measures. The discrete exploratory noise $v_k$ is treated through a finite-dimensional Gaussian mean shift.  
Because $W$ and $v$ are independent under $\mathbb{P}$, the Radon–Nikodym derivative factorises into the product of a continuous-time Girsanov term and a discrete-time exponential of a Gaussian term.


\begin{definition}[Admissible duality strategies $\mathcal{A}^{\bar\Gamma}$]
\label{def:class:AGamma}

A piecewise-constant, right-continuous process 
$\bar\Gamma=(\bar\gamma_s)_{s\in[0,T]}$ belongs to $\mathcal A^{\bar\Gamma}$ if,
for each $k=0,\ldots,K-1$, the value $\bar\gamma_k:=\bar\gamma_{t_k}$ is 
$\mathcal F_{t_k}$-measurable, and it satisfies:

\begin{enumerate}[(i)]
    \item \emph{Piecewise-constant:}  
    $\bar\gamma_s$ is right-continuous and constant on each interval $[t_k,t_{k+1})$.

    \item \emph{Square-integrability:}  
    \begin{align*}
    \mathbb{E}\big[ \lVert \bar\gamma_k \rVert^2 \big] < \infty,
    \qquad k=0,\ldots,K-1,
    \end{align*}
    where $\bar\gamma_k := \bar\gamma_{t_k}$.

    \item \emph{Exponential martingale property:}  
    The process
    \begin{align*}
    \mathcal{L}^{\bar\gamma,W}_T
    := 
    \exp\!\left(
        -\frac{1}{2} \int_0^T \lVert \bar\gamma_s\rVert^2 ds
        + \int_0^T \bar\gamma_s' \, dW_s
    \right)
    \end{align*}
    is a $\mathbb{P}$-exponential martingale.
\end{enumerate}
\end{definition}

\begin{definition}[Admissible exploration shifts $\mathcal{A}^{\bar\eta}$]\label{def:class:Aeta}

A discrete-time process $\bar\eta=(\bar\eta_k)_{k=0,\ldots,K-1}\in\mathbb R^m$
belongs to $\mathcal A^{\bar\eta}$ if:
\begin{enumerate}[(i)]
    \item $\bar\eta_k$ is $\mathcal F_{t_k}$-measurable for $k=0,\ldots,K-1$.
    \item \emph{Square-integrability:} 
        \begin{align*}
            \mathbb{E}\big[ \lVert \bar\eta_k\rVert^2 \big] < \infty,
        \qquad k=0,\ldots,K-1.
        \end{align*}
\end{enumerate}
\end{definition}


For $\bar\Gamma\in \mathcal{A}^{\bar\Gamma}$, the continuous-time change of measure is
\begin{align}
    \frac{d\mathbb{P}^{\bar\gamma}}{d\mathbb{P}}\Big|_{\mathcal{F}_T}
= \mathcal{L}^{\bar\gamma,W}_T.
\end{align}

Define the stochastic process
\begin{align}\label{eq:def:Wgamma}
    W^{\bar\gamma}_s 
    := 
    W_s - \int_0^s \bar\gamma_u\, du,
    \qquad s\in[0,T].
\end{align}
Then $W^{\bar\gamma}$ is a Brownian motion under $\mathbb{P}^{\bar\gamma}$. Note that in \eqref{eq:def:Wgamma}, $\bar\gamma$ is piecewise constant, so the integral over $[0,s]$ reduces to a sum of constant contribution over each interval $[t_k,t_{k+1}),k=0,\ldots,K-1$. 


For $\bar\eta\in\mathcal{A}^{\bar\eta}$, define the discrete-time Radon–Nikodym derivative
\begin{align}
    \frac{d\mathbb{P}^{\bar\eta}}{d\mathbb{P}}\Big{|}_{\mathcal{F}_T}
    &:= 
        \prod_{k=0}^{K-1}
        \exp \left\{
            -\frac{1}{2} \bar\eta_k'\Psi_k^{-1}\bar\eta_k  
            + \bar\eta_k' \Psi_k^{-1}v_k
        \right\}
    =: \mathcal{L}^{\bar\eta,v}.
\end{align}
Under $\mathbb{P}^{\bar\eta}$ (and therefore under $\mathbb P^{\bar\gamma,\bar\eta}$, since the additional change of measure acts only on $W$), conditionally on the filtration $\mathcal{F}_{t_k}$, the random variable $v^{\bar\eta}_k$ is Gaussian with mean $\bar\eta_k$ and covariance $\Psi_k$, that is,  
$v^{\bar\eta}_k \sim \mathcal{N}(\bar\eta_k,\Psi_k)$, and with $v^{\bar\eta}_j$ independent of $v^{\bar\eta}_k$ for $j \neq k$.


Since $W$ and $v$ are independent under $\mathbb{P}$, the joint Radon–Nikodym derivative is
\begin{align}
    \frac{d\mathbb{P}^{\bar\gamma,\bar\eta}}{d\mathbb{P}} \Big|_{\mathcal{F}_T}
    = 
    \mathcal{L}^{\bar\gamma,W}_T \,\mathcal{L}^{\bar\eta,v}
    =:
    \mathcal{L}^{\bar\gamma,\bar\eta}_T.
\end{align}

Definition \ref{def:randomizedcontrol} establishes the randomized exploratory control $(h_s)_{s \in [0,T]}$ as $h_k = \bar h_k + v_k$ under the measure $\mathbb{P}$. Following the change of measure to $\mathbb{P}^{\bar\gamma,\bar\eta}$, we now have $h_k = \bar h_k + v_k^{\bar\eta}$, which therefore depends on the process $\bar\eta$ via the noise term $v^{\bar\eta}_k \sim \mathcal{N} \left(\bar\eta_k, \Psi_k\right)$. Equivalently, the randomized control can  be represented by the measure 
$\pi\left(dh;\bar h_k, \bar\eta_k\right) \sim \mathcal{N} \left(\bar h_k+\bar\eta_k, \Psi_k\right)$.

Under $\mathbb{P}^{\bar\gamma, \bar\eta}$, the state dynamics at \eqref{eq:state} becomes
\begin{align}\label{eq:state:Pbar:rand}
    X_{k+1} = b \Delta t + \tilde{B} X_k +\Lambda \bar\gamma_k \Delta t + \Lambda w^{\bar\gamma}_k,
\end{align}
for $k=0, \ldots, K-1$, and where $w^{\bar\gamma}_k = W^{\bar\gamma}_{k+1} - W^{\bar\gamma}_{k}$, using the shorthand notation $W^{\bar\gamma}_{k} = W^{\bar\gamma}_{t_k}$. 

\begin{remark}[Scaling of exploratory shifts]
In this paper, we model the exploratory perturbations $v^{\bar\eta}_k$ as instantaneous shocks with $$v^{\bar\eta}_k\sim\mathcal{N}(\bar\eta_k,\Psi_k),$$ that occur before each rebalancing date and are independent of the rebalancing interval length $\Delta t$. Equivalently, $\bar\eta_k$ and $\Psi_k$ represent the mean and covariance of the instantaneous exploratory shock applied at time $t_k$.
\end{remark}

We then compute the \emph{relative entropy} under the product change of measure. We express the relative entropy of $\mathbb{P}^{\bar\gamma,\bar\eta}$ with respect to $\mathbb{P}$ as
\begin{align}\label{eq:relativentropy}
D_{\rm KL}(\mathbb{P}^{\bar\gamma,\bar\eta}\|\mathbb{P})
= \mathbf{E}^{\bar\gamma, \bar\eta} \left[ \ln \left( \frac{d\mathbb{P}^{\bar\gamma, \bar\eta}}{d\mathbb{P}} \right) \right]
= \mathbf{E}^{\bar\gamma,\bar\eta}[\ln \mathcal{L}^{\bar\gamma,W}_T]
+ \mathbf{E}^{\bar\gamma,\bar\eta}[\ln \mathcal{L}^{\bar\eta,v}].
\end{align}

Using Girsanov's theorem, the continuous-time part reduces to
\begin{align}
\mathbf{E}^{\bar\gamma,\bar\eta}[\ln \mathcal{L}^{\bar\gamma,W}_T]
&= \mathbf{E}^{\bar\gamma, \bar\eta} \left[ -\frac{1}{2} \int_0^T \lVert \bar\gamma_s\rVert^2 ds
        + \int_0^T \bar\gamma_s' \, dW_s \right]
                                \nonumber\\
&= \mathbf{E}^{\bar\gamma, \bar\eta} \left[ -\frac{1}{2} \int_0^T \lVert \bar\gamma_s\rVert^2 ds
        + \int_0^T \bar\gamma_s' \left( dW^{\bar\gamma}_s 
        + \bar\gamma_s ds\right) \right]
                                \nonumber\\
&= \mathbf{E}^{\bar\gamma, \bar\eta} \left[ \frac{1}{2} \int_0^T \lVert \bar\gamma_s\rVert^2 ds
        + \int_0^T \bar\gamma_s' \, dW^{\bar\gamma}_s \right]
                                \nonumber\\
&= \frac{1}{2} \sum_{k=0}^{K-1}  \mathbf{E}^{\bar\gamma, \bar\eta} \left[ \|\bar\gamma_k\|^2 \Delta t\right],
\end{align}
where we used the fact that $\bar\gamma$ is piecewise constant on $[t_k,t_{k+1}), k=0,\ldots,K-1$, and that $W^{\bar\gamma}$ is a $\mathbb{P}^{\bar\gamma}$-martingale and therefore a $\mathbb{P}^{\bar\gamma, \bar\eta}$-martingale. Notice that since $\frac{d\mathbb{P}^{\bar\gamma,\bar\eta}}{d\mathbb{P}^{\bar\gamma}}\big|_{\mathcal{F}_T}
=\mathcal{L}^{\bar\eta,v}_T$ is independent of $W^{\bar\gamma}$,
the change from $\mathbb{P}^{\bar\gamma}$ to $\mathbb{P}^{\bar\gamma,\bar\eta}$ does not affect the dynamics of $W^{\bar\gamma}$. Therefore $W^{\bar\gamma}$, which is already a $\mathbb{P}^{\bar\gamma}$-martingale, is also a $\mathbb{P}^{\bar\gamma,\bar\eta}$-martingale.

For the discrete-time exploration part,
\begin{align}
\mathbf{E}^{\bar\gamma,\bar\eta}[\ln \mathcal{L}^{\bar\eta,v}]
&= \sum_{k=0}^{K-1} \mathbf{E}^{\bar\gamma, \bar\eta} \left[  
        -\frac{1}{2} \bar\eta_k'\Psi_k^{-1}\bar\eta_k  
        + \bar\eta_k' \Psi_k^{-1}v_k \right]
                        \nonumber\\
&= \frac{1}{2} \sum_{k=0}^{K-1} \mathbf{E}^{\bar\gamma, \bar\eta} \left[ \bar\eta_k' \Psi_k^{-1} \bar\eta_k \right],
\end{align}
where, since $\bar \eta_k$ is $\mathcal{F}_{t_k}$-measurable and $v_k$ has conditional mean $\bar\eta_k$ under $\mathbb{P}^{\bar\gamma,\bar\eta}$, we have $\mathbf{E}^{\bar\gamma, \bar\eta} \left[\bar\eta_k' \Psi_k^{-1}v_k \right] = \mathbf{E}^{\bar\gamma, \bar\eta} \left[\bar\eta_k' \Psi_k^{-1} \bar \eta_k \right]$. 

Hence, the total relative entropy is
\begin{align}\label{eq:relativentropy:total}
D_{\rm KL}(\mathbb{P}^{\bar\gamma,\bar\eta}\|\mathbb{P})
= \frac{1}{2} \sum_{k=0}^{K-1} \mathbf{E}^{\bar\gamma, \bar\eta} \Big[ \, \|\bar\gamma_k\|^2 \Delta t + \bar\eta_k' \Psi_k^{-1} \bar\eta_k \Big].
\end{align}

\begin{remark}
The role of $\bar \gamma$ is to steer Brownian increments in the dual (adversarial) direction, whereas $\bar \eta$ biases the exploratory noise distribution. The resulting relative entropy penalty therefore decomposes into a continuous-time quadratic penalty and a discrete-time Mahalanobis penalty.
\end{remark}

Next, we rewrite the policy-averaged log-relative return under the $\mathbb{P}^{\bar\gamma, \bar\eta}$
\begin{align}\label{eq:reward:rand:Pbar}
\bar R^\pi_T - R_0
=&  \sum_{k=0}^{K-1} \Bigg\{ \int_{\mathbb{R}^m} \left[ 
    \left(- \frac{1}{2} h_k'\Sigma\Sigma'h_k 
    + h_k'a 
    + \frac{1}{2}\Xi'\Xi- c \right)  
    + \left(h_k' A - C \right)X_k \right] \Delta t
                                				\nonumber\\
    &+ \left(h_k'\Sigma - \Xi' \right) \left( \bar\gamma_k \Delta t + w^{\bar\gamma}_k\right)
    \Bigg\} \pi\left(dh;\bar h_k, \bar\eta_k\right)
    							\nonumber\\ 
=&  \sum_{k=0}^{K-1} \int_{\mathbb{R}^m} \Bigg\{ 
    \left(- \frac{1}{2} h_k'\Sigma\Sigma'h_k 
    + h_k'a 
    + \frac{1}{2}\Xi'\Xi- c \right)  
    + \left(h_k' A - C \right)X_k 
                                            \nonumber\\
    &+ \left(h_k'\Sigma - \Xi' \right)\bar\gamma_k
    \Bigg\} \pi\left(dh;\bar h_k, \bar\eta_k\right) \Delta t
    + \sum_{k=0}^{K-1} \int_{\mathbb{R}^m} \left\{ \left(h_k'\Sigma - \Xi' \right)w^{\bar\gamma}_k
    \right\} \pi\left(dh;\bar h_k, \bar\eta_k\right)
    							\nonumber\\ 
    =& \sum_{k=0}^{K-1}  \Bigg\{
    \left(- \frac{1}{2} \left(\bar h_k +\bar\eta_k\right)'\Sigma\Sigma'\left(\bar h_k +\bar\eta_k\right)
     - \frac{1}{2} \tr\left( \Psi_k\Sigma\Sigma'\right) 
    + \left(\bar h_k +\bar\eta_k\right)'a 
    + \frac{1}{2}\Xi'\Xi- c 
    \right)  
                                \nonumber\\
    &+ \left(\left(\bar h_k +\bar\eta_k\right)' A - C \right)X_k 
    + \left((\bar h_k + \bar\eta_k)'\Sigma - \Xi' \right)\bar\gamma_k
    \Bigg\} \Delta t
    + \sum_{k=0}^{K-1} \left((\bar h_k + \bar\eta_k)'\Sigma - \Xi' \right) w^{\bar\gamma}_k.
\end{align}

In the last equality, we used the Gaussian moments under 
$\pi(dh;\bar h_k,\bar\eta_k)\sim\mathcal N(\bar h_k+\bar\eta_k,\Psi_k)$:
\begin{align}
\int_{\mathbb R^m} h\,\pi(dh;\bar h_k,\bar\eta_k)=\bar h_k+\bar\eta_k,
\end{align}
and
\begin{align}
\int_{\mathbb R^m} h'\Sigma\Sigma' h\,\pi(dh;\bar h_k,\bar\eta_k)
=(\bar h_k+\bar\eta_k)'\Sigma\Sigma'(\bar h_k+\bar\eta_k)+\tr(\Psi_k\Sigma\Sigma').
\end{align}
Moreover, conditional on $\mathcal{F}_{t_k}$, the Brownian increment $w^{\bar \gamma}_k$ is independent of the exploratory shock $v^{\bar \eta}_k$, or equivalently, of $h_k$ under the measure $\mathbb{P}^{\bar \gamma, \bar \eta}$. Hence $w_k^{\bar\gamma}$ can be treated as constant with respect to the integration in $h$.

\begin{remark}[Integrability]
Under the Gaussian state dynamics and the square-integrability conditions in Definitions
\ref{def:class:AH:expl}, \ref{def:class:AGamma}, and \ref{def:class:Aeta}, all expectations appearing below are well-defined. In particular, the quadratic and linear terms entering the policy-averaged log-relative return and the entropy penalties are integrable on the finite horizon $[0,T]$.
\end{remark}

Finally, we apply free energy-entropy duality to represent the risk-sensitive control problem as a stochastic game. Applying the definition of free energy at \eqref{eq:free_energy} to $\psi := -\theta (\bar R^\pi_T-R_0)$, we have
\begin{align}
  \mathcal{E}^{\mathbb{P}}\{-\theta (\bar R^\pi_T-R_0)\} = \ln I(H,\theta),
\end{align}
where $I$ is the risk-sensitive criterion defined at \eqref{eq:I:pi}. 

Recalling also the formula for the total entropy at \eqref{eq:relativentropy:total}, the \emph{free energy-entropy duality} from Proposition \ref{prop:FEEDuality} then leads to
\begin{align}\label{eq:criterion:I:Pbar:step1}
	& 	\ln I(H,\theta)
	 								\nonumber\\
	=&	\sup_{\bar \gamma \in \mathcal{A}^{\bar \Gamma}, \bar\eta \in \mathcal{A}^{\bar \eta}}
    		\mathbf{E}^{\bar\gamma, \bar\eta} \left[ -\theta (\bar R^\pi_T-R_0)  
    		-\frac{1}{2} \sum_{k=0}^{K-1} \left( \|\bar\gamma_k\|^2 \Delta t + \bar\eta_k' \Psi_k^{-1} \bar\eta_k \right)
   	 \right]
\end{align}

At this point, we make the following assumption:
\begin{assumption}\label{as:Gamma:Eta}
\begin{align}\label{eq:RS:duality:I:restriction}
    & \sup_{\mathbb{P}^{\gamma, \eta}} \mathbf{E}^{\bar\gamma, \bar\eta} \left[ -\theta (\bar R^\pi_T-R_0)  
    		-\frac{1}{2} \sum_{k=0}^{K-1} \left( \|\bar\gamma_k\|^2 \Delta t + \bar\eta_k' \Psi_k^{-1} \bar\eta_k \right)
   	 \right]
                                \nonumber\\
    &=\sup_{\bar \gamma \in \mathcal{A}^{\bar \Gamma}, \bar\eta \in \mathcal{A}^{\bar \eta}} \mathbf{E}^{\bar\gamma, \bar\eta} \left[ -\theta (\bar R^\pi_T-R_0)  
    		-\frac{1}{2} \sum_{k=0}^{K-1} \left( \|\bar\gamma_k\|^2 \Delta t + \bar\eta_k' \Psi_k^{-1} \bar\eta_k \right)
   	 \right].
\end{align}
\end{assumption}
Theorem \ref{theo:main:recursions} below justifies this restriction by showing that the optimal controls are within the admissible classes $\mathcal A^{\bar\Gamma}$ and $\mathcal A^{\bar\eta}$.

Under Assumption \ref{as:Gamma:Eta},
\begin{align}\label{eq:criterion:I:Pbar:step2}
	& 	\ln I(H,\theta)
	 								\nonumber\\
	=&	\sup_{\bar \gamma \in \mathcal{A}^{\bar \Gamma}, \bar\eta \in \mathcal{A}^{\bar \eta}}
    		\mathbf{E}^{\bar\gamma, \bar\eta} \left[ -\theta (\bar R^\pi_T-R_0)  
    		-\frac{1}{2} \sum_{k=0}^{K-1} \left( \|\bar\gamma_k\|^2 \Delta t + \bar\eta_k' \Psi_k^{-1} \bar\eta_k \right)
   	 \right]
	 								\nonumber\\
	=&	\sup_{\bar \gamma \in \mathcal{A}^{\bar \Gamma}, \bar\eta \in \mathcal{A}^{\bar \eta}}
    		\mathbf{E}^{\bar\gamma, \bar\eta} \Bigg[ -\theta \sum_{k=0}^{K-1}  \Bigg\{
                \left(- \frac{1}{2} \left(\bar h_k +\bar\eta_k\right)'\Sigma\Sigma'\left(\bar h_k +\bar\eta_k\right)
                - \frac{1}{2} \tr\left( \Psi_k\Sigma\Sigma'\right) 
                + \left(\bar h_k +\bar\eta_k\right)'a 
                + \frac{1}{2}\Xi'\Xi- c 
                \right)  
                                \nonumber\\
    &           + \left(\left(\bar h_k +\bar\eta_k\right)' A - C \right)X_k 
                + \left((\bar h_k + \bar\eta_k)'\Sigma - \Xi' \right)\bar\gamma_k
                \Bigg\} \Delta t
            + \sum_{k=0}^{K-1} \left((\bar h_k + \bar\eta_k)'\Sigma - \Xi' \right) w^{\bar\gamma}_k  
                                \nonumber\\
    &           
            -\frac{1}{2} \sum_{k=0}^{K-1} \left( \|\bar\gamma_k\|^2 \Delta t + \bar\eta_k' \Psi_k^{-1} \bar\eta_k \right)
   	        \Bigg]
	 								\nonumber\\
	=&	\sup_{\bar \gamma \in \mathcal{A}^{\bar \Gamma}, \bar\eta \in \mathcal{A}^{\bar \eta}} \left\{
    		\mathbf{E}^{\bar\gamma, \bar\eta} \Bigg[ -\theta \sum_{k=0}^{K-1}  \Bigg\{
                \left(- \frac{1}{2} \left(\bar h_k +\bar\eta_k\right)'\Sigma\Sigma'\left(\bar h_k +\bar\eta_k\right)
                - \frac{1}{2} \tr\left( \Psi_k\Sigma\Sigma'\right) 
                + \left(\bar h_k +\bar\eta_k\right)'a 
                + \frac{1}{2}\Xi'\Xi- c 
                \right) 
            \right.
                                \nonumber\\
    &       \left.
                + \left(\left(\bar h_k +\bar\eta_k\right)' A - C \right)X_k 
                + \left((\bar h_k + \bar\eta_k)'\Sigma - \Xi' \right)\bar\gamma_k
                + \frac{1}{2\theta} \|\bar\gamma_k\|^2 
                + \frac{1}{2\theta\Delta t} \bar\eta_k' \Psi_k^{-1} \bar\eta_k
                \Bigg\} \Delta t 
   	        \Bigg]
            \right.
                                \nonumber\\
    &       \left.    
            -\theta \sum_{k=0}^{K-1} \underbrace{\mathbf{E}^{\bar\gamma, \bar\eta} \Bigg[ 
                 \left((\bar h_k + \bar\eta_k)'\Sigma - \Xi' \right) w^{\bar\gamma}_k            
            \Bigg]}_{=0}
            \right\}
	 								\nonumber\\
    	=&	\sup_{\bar \gamma \in \mathcal{A}^{\bar \Gamma}, \bar\eta \in \mathcal{A}^{\bar \eta}}
    		\theta \sum_{k=0}^{K-1} \mathbf{E}^{\bar\gamma, \bar\eta} \Bigg[   \Bigg\{
                \left(\frac{1}{2} \left(\bar h_k +\bar\eta_k\right)'\Sigma\Sigma'\left(\bar h_k +\bar\eta_k\right)
                + \frac{1}{2} \tr\left( \Psi_k\Sigma\Sigma'\right) 
                - \left(\bar h_k +\bar\eta_k\right)'a 
                - \frac{1}{2}\Xi'\Xi
                + c 
                \right) 
                                    \nonumber\\
    &       
                - \left(\left(\bar h_k +\bar\eta_k\right)' A - C \right)X_k 
                - \left((\bar h_k + \bar\eta_k)'\Sigma - \Xi' \right)\bar\gamma_k
                - \frac{1}{2\theta} \|\bar\gamma_k\|^2 
                - \frac{1}{2\theta\Delta t} \bar\eta_k' \Psi_k^{-1} \bar\eta_k
                \Bigg\} \Delta t 
   	        \Bigg]
	 								\nonumber\\
	=&	\sup_{\bar \gamma \in \mathcal{A}^{\bar \Gamma}, \bar\eta \in \mathcal{A}^{\bar \eta}} 
    		 \mathbf{E}^{\bar\gamma, \bar\eta} \Bigg[  \theta \sum_{k=0}^{K-1} g(X_k, \bar h_k, \bar\eta_k, \bar \gamma_k)\Delta t \Bigg]
\end{align}
where we defined the function $g$ as
\begin{align}\label{eq:g}
	& g(X_k, \bar h_k, \bar\eta_k, \bar\gamma_k)
									\nonumber\\
	=&  \left(\frac{1}{2} \left(\bar h_k          +\bar\eta_k\right)'\Sigma\Sigma'\left(\bar h_k +\bar\eta_k\right)
        + \frac{1}{2} \tr\left( \Psi_k\Sigma\Sigma'\right) 
        - \left(\bar h_k +\bar\eta_k\right)'a 
        - \frac{1}{2}\Xi'\Xi
        + c \right) 
                                    \nonumber\\
    &       
        - \left(\left(\bar h_k +\bar\eta_k\right)' A - C \right)X_k 
        - \left((\bar h_k + \bar\eta_k)'\Sigma - \Xi' \right)\bar\gamma_k
        - \frac{1}{2\theta} \|\bar\gamma_k\|^2 
        - \frac{1}{2\theta\Delta t} \bar\eta_k' \Psi_k^{-1} \bar\eta_k,
\end{align}
and where we used the fact that since $\bar h_k$,$\bar\eta_k$,$\bar\gamma_k$ and $X_k$ are all $\mathcal F_{t_k}$-measurable, and
$w_k^{\bar\gamma}$ is a Brownian increment independent of $\mathcal F_{t_k}$ under
$\mathbb P^{\bar\gamma,\bar\eta}$, we have
$\mathbf E^{\bar\gamma,\bar\eta}\left[
\left((\bar h_k+\bar\eta_k)'\Sigma-\Xi'\right) w_k^{\bar\gamma}
\right]=0$.

We can thus interpret the energy-entropy duality's $\inf\sup$ as a two-player game against Nature. The agent applies control $\bar h$ to minimize the expectation while Nature (via the duality) applies control $\bar\nu := \begin{pmatrix} \bar\gamma' &\bar\eta'\end{pmatrix}'$ to maximize it. The control $\bar \gamma$ steers market noise while the control $\bar \eta$ biases exploration.

We denote by
\[
\bar{\mathcal A}^{H}_{\mathrm{expl}}
:= \left\{ \bar h=(\bar h_k)_{k=0,\ldots,K-1} \;:\; \exists\, H=(h_s)_{s\in[0,T]} \in \mathcal A^{H}_{\mathrm{expl}}
\text{ such that } h_k=\bar h_k+v_k,\; k=0,\ldots,K-1 \right\}
\]
the set of baseline controls induced by admissible exploratory strategies. Taking the infimum over $\bar h$ to minimize the logarithm of the criterion $I$, and with a slight abuse of notation write $I(\bar h,\theta)$ for $I(H,\theta)$ when $H$ is any admissible exploratory strategy inducing $\bar h$,
we have 
\begin{align}\label{eq:EEDuality:inf} 
    & \inf_{\bar h \in \bar{\mathcal A}^{H}_{\mathrm{expl}}} \ln I(\bar h,\theta) 
    = \inf_{\bar h \in \bar{\mathcal A}^{H}_{\mathrm{expl}}} \sup_{\bar\gamma \in \mathcal{A}^{\bar \Gamma},\bar\eta \in \mathcal{A}^{\bar \eta}} \mathbf{E}^{\bar\gamma, \bar\eta} \Bigg[ \theta \sum_{k=0}^{K-1} g(X_k, \bar h_k, \bar\eta_k, \bar \gamma_k)\Delta t \Bigg] 
            \nonumber\\ 
    \Leftrightarrow&
    \ln \inf_{\bar h \in \bar{\mathcal A}^{H}_{\mathrm{expl}}} I(\bar h,\theta)
    = \inf_{\bar h \in \bar{\mathcal A}^{H}_{\mathrm{expl}}} \sup_{\bar\gamma \in \mathcal{A}^{\bar \Gamma},\bar\eta \in \mathcal{A}^{\bar \eta}} \mathbf{E}^{\bar\gamma, \bar\eta} \Bigg[ \theta \sum_{k=0}^{K-1} g(X_k, \bar h_k, \bar\eta_k, \bar \gamma_k)\Delta t \Bigg] 
            \nonumber\\ 
    \Leftrightarrow&
    \inf_{\bar h \in \bar{\mathcal A}^{H}_{\mathrm{expl}}} I(\bar h,\theta)
    = \exp\left\{ \inf_{\bar h \in \bar{\mathcal A}^{H}_{\mathrm{expl}}} \sup_{\bar\gamma \in \mathcal{A}^{\bar \Gamma},\bar\eta \in \mathcal{A}^{\bar \eta}} \mathbf{E}^{\bar\gamma, \bar\eta} \Bigg[ \theta \sum_{k=0}^{K-1} g(X_k, \bar h_k, \bar\eta_k, \bar \gamma_k)\Delta t \Bigg] \right\}, 
\end{align} 
and where the first equivalence follows from Lemma 5.3.1 in \citet{Meneghini1994}.

Focusing on the term inside the exponential on the right-hand side of \eqref{eq:EEDuality:inf}, we consider the game with optimal value  
\begin{align}\label{eq:criterion:I:Pbar}
    u(T;\theta) 
    := \inf_{\bar h \in \bar{\mathcal A}^{H}_{\mathrm{expl}}}\sup_{\bar\gamma \in \mathcal{A}^{\bar \Gamma},\bar\eta \in \mathcal{A}^{\bar \eta}}
    \mathbf{E}^{\bar\gamma, \bar\eta} \Bigg[  \theta \sum_{k=0}^{K-1} g(X_k, \bar h_k, \bar\eta_k, \bar \gamma_k)\Delta t \Bigg]
\end{align}
so that
\begin{align}\label{criterion1}
 \inf_{\bar h \in \bar{\mathcal A}^{H}_{\mathrm{expl}}}I(\bar h,\theta)=\exp\left\{u(T;\theta)\right\}
 \end{align}

\subsection{Saddle Point Representation and Optimal Controls for the Penalized Stochastic Game}

We now solve the stochastic game with optimal value $u(T;\theta)$ as in \eqref{eq:criterion:I:Pbar}. Let $u_k(X_k)$ be the optimal value of the game at the generic time $t_k, \quad k=0,\ldots,K-1$ when the state takes the value $X_k$, and denote $u_0(X_0) := u(T;\theta)$. We apply the Dynamic Programming Principle (DPP) to express the value of the game recursively as:
\begin{align}\label{eq:DPP:TC}
 u_K(X_K)=& 0
\end{align}
and, for $k=K-1,\ldots,0$, 
\begin{align}\label{eq:DPP:DPP} 
u_k(X_k)
    =&\inf_{\bar h_k}\sup_{\bar\gamma_k,\bar\eta_k} \mathbf{E}_{k,X_k}^{\bar\gamma, \bar\eta} \left[
        \theta g(X_k, \bar h_k, \bar\eta_k, \bar \gamma_k)\Delta t
        + u_{k+1}(X_{k+1}) \right]
                                  \nonumber\\ 
    =&\inf_{\bar h_k}\sup_{\bar\gamma_k,\bar\eta_k}  \left\{
        \theta g(X_k, \bar h_k, \bar\eta_k, \bar \gamma_k)\Delta t
        + \mathbf{E}_{k,X_k}^{\bar\gamma, \bar\eta} \left[ u_{k+1}(X_{k+1}) \right]\right\}, 
\end{align}
where $\mathbf{E}_{k,X_k}^{\bar\gamma, \bar\eta}$ denotes the expectation with respect to the measure $\mathbb{P}^{\bar\gamma, \bar\eta}$, conditional on time $t_k$, state process value $X_k$, and with admissible controls at time $t_k$. The function $g$ is defined at \eqref{eq:g}. 

In what follows, we shall show that $u_k(X_k)$ has a quadratic expression in $X_k$  of the form 
\begin{equation}\label{eq:Phi:quadform0}
    u_k(X_k) = \frac{1}{2}X_k' P_k X_k + X_k' p_k + r_k,
\end{equation}
and at the same time, we shall derive the expressions for a stationary point $(\bar h_k^*,\bar\gamma_k^*,\bar\eta_k^*)$ as a candidate saddle point.

On the basis of \eqref{eq:DPP:DPP}, we define the Hamiltonian $\mathcal{H}$
\begin{align}\label{eq:Hamiltonian}
    \mathcal{H}(X_k,\bar h_k,\bar\gamma_k,\bar\eta_k) 
    :=& \theta g(X_k, \bar h_k, \bar\eta_k, \bar \gamma_k)\Delta t
        + \mathbf{E}_{k,X_k}^{\bar\gamma, \bar\eta} \left[ u_{k+1}(X_{k+1}) \right].
\end{align}

Our main result, Theorem \ref{theo:main:recursions}, is preceded by a lemma of independent interest and four propositions.

First we have
\begin{lemma}\label{L1}

Assuming that, at the generic time $t_{k+1}$, the optimal value $u_{k+1}(X_{k+1})$ has a quadratic expression as in \eqref{eq:Phi:quadform0}, for a control triple $(\bar h,\bar\gamma,\bar\eta)$ we have 
\begin{align}\label{eq:Phi:quadform:xt}
    &\mathbf{E}_{k,X_k}^{\bar\gamma, \bar\eta} \left[u_{k+1}(X_{k+1})\right]
                \nonumber\\
    =& \frac{1}{2}\left(b \Delta t + \tilde{B} X_k +\Lambda \bar\gamma_k \Delta t\right)' 
        P_{k+1} 
        \left(b \Delta t + \tilde{B} X_k +\Lambda \bar\gamma_k \Delta t\right) 
                                    \nonumber\\
    &+\frac{1}{2} \tr\left(\Lambda' P_{k+1}\Lambda\right)\Delta t
    + \left(b \Delta t + \tilde{B} X_k +\Lambda \bar\gamma_k \Delta t\right)'p_{k+1}
    + r_{k+1}
\end{align}

\end{lemma}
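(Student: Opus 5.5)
The plan is to compute a Gaussian expectation directly, using the state dynamics under the transformed measure. By \eqref{eq:state:Pbar:rand}, under $\mathbb{P}^{\bar\gamma,\bar\eta}$ we have $X_{k+1} = m_k + \Lambda w^{\bar\gamma}_k$, where
\begin{align*}
m_k := b\Delta t + \tilde B X_k + \Lambda\bar\gamma_k\Delta t .
\end{align*}
Since $X_k$ and $\bar\gamma_k$ are $\mathcal F_{t_k}$-measurable, $m_k$ is known at time $t_k$. Since $W^{\bar\gamma}$ is a Brownian motion under $\mathbb{P}^{\bar\gamma}$, and hence under $\mathbb{P}^{\bar\gamma,\bar\eta}$ (the $\bar\eta$-density depends only on $v$, which is independent of $W$), the increment $w^{\bar\gamma}_k$ is independent of $\mathcal F_{t_k}$ with law $\mathcal N(0,\Delta t\, I_d)$. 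The $\bar\eta$-shift does not enter the state dynamics, so $\bar\eta_k$ (and $\bar h_k$) play no role in this expectation.

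Next, substitute this decomposition into the assumed quadratic form $u_{k+1}(x) = \frac12 x'P_{k+1}x + x'p_{k+1} + r_{k+1}$ and expand. The quadratic part gives
\begin{align*}
\tfrac12 m_k'P_{k+1}m_k + m_k'P_{k+1}\Lambda w^{\bar\gamma}_k + \tfrac12 (w^{\bar\gamma}_k)'\Lambda'P_{k+1}\Lambda w^{\bar\gamma}_k ,
\end{align*}
and the linear part gives $m_k'p_{k+1} + (w^{\bar\gamma}_k)'\Lambda'p_{k+1}$.

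Then take the conditional expectation $\mathbf E^{\bar\gamma,\bar\eta}_{k,X_k}$ term by term. The cross terms that are linear in $w^{\bar\gamma}_k$ vanish because they have conditional mean zero and $m_k$ is measurable at time $t_k$. The remaining quadratic noise term is handled with the trace identity:
\begin{align*}
\mathbf E\big[w'Mw\big] = \tr\big(M\,\mathbf E[ww']\big) = \Delta t\,\tr\big(\Lambda'P_{k+1}\Lambda\big), \qquad M = \Lambda'P_{k+1}\Lambda .
\end{align*}
Collecting the surviving terms gives exactly \eqref{eq:Phi:quadform:xt}.

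The only delicate point is justifying that $w^{\bar\gamma}_k$ is conditionally centred Gaussian with covariance $\Delta t\, I_d$ under the product measure. This is where $\bar\gamma$ being piecewise constant and $\mathcal F_{t_k}$-measurable is used, together with the Girsanov and independence discussion preceding \eqref{eq:relativentropy:total}. Integrability of the terms follows from the square-integrability conditions in Definitions \ref{def:class:AGamma}--\ref{def:class:Aeta}. Everything else is routine algebra; I would take $P_{k+1}$ symmetric without loss of generality.
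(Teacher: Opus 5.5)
Your proposal is correct and matches the paper's proof essentially line by line: substitute the $\mathbb{P}^{\bar\gamma,\bar\eta}$-dynamics \eqref{eq:state:Pbar:rand} into the quadratic form, drop the terms linear in $w^{\bar\gamma}_k$ because their conditional mean is zero, and evaluate the quadratic noise term as $\tr(\Lambda'P_{k+1}\Lambda)\Delta t$, taking $P_{k+1}$ symmetric without loss of generality. Your remarks on why $w^{\bar\gamma}_k$ stays centred with covariance $\Delta t\, I_d$ under the product measure only make explicit what the paper takes from its earlier Girsanov and independence discussion.
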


\begin{proof} We use the state dynamics at \eqref{eq:state:Pbar:rand} to obtain an analytic expression for $\mathbf{E}_{k,X_k}^{\bar\gamma, \bar\eta} \left[u_{k+1}(X_{k+1})\right]$ in terms of $X_k$. Recalling that under $\mathbb{P}^{\bar\gamma, \bar\eta}$, $w^{\bar\gamma}_k = W^{\bar\gamma}_{k+1} - W^{\bar\gamma}_{k}$, we have $\mathbf{E}_{k,X_k}^{\bar\gamma, \bar\eta} \left[w^{\bar\gamma}_k \right]=0$ and $\mathbf{E}_{k,X_k}^{\bar\gamma, \bar\eta} \left[w^{\bar\gamma}_k(w^{\bar\gamma}_k)' \right]= \Delta t \ I_d$, so
\begin{align}
    &   \mathbf{E}_{k,X_k}^{\bar\gamma, \bar\eta} \left[u_{k+1}(X_{k+1})\right]
                                    \nonumber\\
    =&    \mathbf{E}_{k,X_k}^{\bar\gamma, \bar\eta} \left[
        \frac{1}{2}X_{k+1}' P_{k+1} X_{k+1} + X_{k+1}' p_{k+1} + r_{k+1}
    \right]
                                    \nonumber\\
    =&  \mathbf{E}_{k,X_k}^{\bar\gamma, \bar\eta} \left[ 
        \frac{1}{2}\left(b \Delta t + \tilde{B} X_k +\Lambda \bar\gamma_k \Delta t + \Lambda w^{\bar\gamma}_k\right)' 
        P_{k+1} 
        \left(b \Delta t + \tilde{B} X_k +\Lambda \bar\gamma_k \Delta t + \Lambda w^{\bar\gamma}_k\right)
    \right. 
                                    \nonumber\\
    &\left.        
        + \left(b \Delta t + \tilde{B} X_k +\Lambda \bar\gamma_k \Delta t + \Lambda w^{\bar\gamma}_k\right)' p_{k+1} + r_{k+1}
    \right]
                                    \nonumber\\
    =&\frac{1}{2}\left(b \Delta t + \tilde{B} X_k +\Lambda \bar\gamma_k \Delta t\right)' 
        P_{k+1} 
        \left(b \Delta t + \tilde{B} X_k +\Lambda \bar\gamma_k \Delta t\right) 
                                    \nonumber\\
    &\underbrace{+ \left(b \Delta t + \tilde{B} X_k +\Lambda \bar\gamma_k \Delta t\right)' 
        P_{k+1} 
        \Lambda \mathbf{E}_{k,X_k}^{\bar\gamma, \bar\eta} \left[w^{\bar\gamma}_k
        \right]}_{=0}
    +\frac{1}{2} \underbrace{\mathbf{E}_{k,X_k}^{\bar\gamma, \bar\eta} \left[ 
         (w^{\bar\gamma}_k)'\Lambda' P_{k+1} \Lambda w^{\bar\gamma}_k
        \right]}_{= \tr\left(\Lambda' P_{k+1}\Lambda\right)\Delta t}
                                    \nonumber\\
    &+ \left(b \Delta t + \tilde{B} X_k +\Lambda \bar\gamma_k \Delta t\right)'p_{k+1}
    +\underbrace{\mathbf{E}_{k,X_k}^{\bar\gamma, \bar\eta} \left[ \left(w^{\bar\gamma}_k\right)'\right]  \Lambda'p_{k+1}}_{=0}
    + r_{k+1}
                                    \nonumber\\
    =& \frac{1}{2}\left(b \Delta t + \tilde{B} X_k +\Lambda \bar\gamma_k \Delta t\right)' 
        P_{k+1} 
        \left(b \Delta t + \tilde{B} X_k +\Lambda \bar\gamma_k \Delta t\right) 
                                    \nonumber\\
    &+\frac{1}{2} \tr\left(\Lambda' P_{k+1}\Lambda\right)\Delta t
    + \left(b \Delta t + \tilde{B} X_k +\Lambda \bar\gamma_k \Delta t\right)'p_{k+1}
    + r_{k+1},
\end{align}
assuming without loss of generality that $P_{k+1}$ is symmetric\footnote{Since $x' P_{k+1}x = \frac{1}{2} x' (P_{k+1} + P_{k+1}') x $, we may replace $P_{k+1}$ by its symmetric part without changing the quadratic form}. This concludes the proof.
\end{proof}

Next, we derive the saddle point for the game and derive two equivalent representations of the optimal controls. The first representation, in Proposition \ref{prop:Controls}, yields optimality conditions that are easy to check. These conditions are presented in Assumption \ref{as:saddlepoint:cond}. The second representation, in Proposition \ref{prop:Controls:Alt}, produces an optimal asset allocation that is more easily interpretable. We discuss the optimality conditions in Subsection \ref{sec:OptimalityConditions} and interpretability from the perspective of Fractional Kelly Strategies in Subsection \ref{sec:KellyandFKS}. 

For notational convenience, we let
\begin{align}\label{eq:def:CalA}
    \mathcal{A}_{k+1} := 
        \Lambda' P_{k+1} \Lambda \Delta t - I_d,
\end{align}
for $k=0, \ldots, K-1$.
\begin{assumption}\label{as:saddlepoint:cond}
    For $k=0,\ldots,K-1$, assume
    \begin{align}
    - \begin{pmatrix}
        \mathcal{A}_{k+1}    & 0     \\
        0   &  \theta \Sigma\Sigma' \Delta t - \Psi_k^{-1}
    \end{pmatrix} 
    > 0
\end{align}

\end{assumption}
Assumption \ref{as:saddlepoint:cond}, together with Assumption \ref{as:sigma:posdef}, provides sufficient conditions for the existence of a saddle point. We defer the interpretation of this condition until Section \ref{sec:interpretation}.

For a generic control triple $(\bar h, \bar\gamma, \bar\eta)$ and for given $k$, $X$, and a set of model parameters, in what follows, we shall also consider the auxiliary function
\begin{align}\label{eq:auxfunc:F}
    F_k(\bar h, \bar\gamma, \bar\eta ; X)
    :=& \frac{ \theta}{2} \left(\bar h +\bar\eta\right)'\Sigma\Sigma'\left(\bar h +\bar\eta\right)\Delta t
        - \theta\left(\bar h +\bar\eta\right)'(a + A X) \Delta t
       - \theta\left((\bar h + \bar\eta)'\Sigma - \Xi' \right)\bar\gamma \Delta t
                                    \nonumber\\
    &   + \frac{1}{2} \bar\gamma' \mathcal{A}_{k+1} \bar\gamma \Delta t
        + \bar\gamma' \Lambda' \left[ P_{k+1} 
        \left(b \Delta t + \tilde{B} X \right) +  p_{k+1} \right] \Delta t
        - \frac{1}{2} \bar\eta' \Psi_k^{-1} \bar\eta
\end{align}
Under Assumptions \ref{as:sigma:posdef} and \ref{as:saddlepoint:cond}, the search for a saddle point for $u_k(X_k)$ reduces to the search for a saddle point for $F_k(\bar h,\bar\gamma,\bar\eta;X_k)$, as shown in the next proposition.

\begin{proposition}[Saddle Point Representation]\label{prop:SaddlePoint}

If Assumptions \ref{as:sigma:posdef} and \ref{as:saddlepoint:cond} hold, then:

\begin{enumerate}[(i)]
\item The optimal value function $u_k(X_k), k=0,\cdots, K-1$ at \eqref{eq:DPP:DPP} has the following saddle point representation:  
\begin{align}\label{eq:DPPPhiquadform1}
    u_k(X_k) 
    =& \inf_{\bar h_k} \sup_{\bar \gamma_k, \bar \eta_k} \left\{ F_k(\bar h,\bar\gamma,\bar\eta;X_k)
    \right\} 
        + \frac{1}{2}\left(b \Delta t + \tilde{B} X_k\right)' P_{k+1} \left(b \Delta t + \tilde{B} X_k\right)
        + \left(b \Delta t + \tilde{B} X_k \right)'p_{k+1}    
                                    \nonumber\\
    &  + \theta (c + C X_k) \Delta t   
        + \frac{\theta}{2} \tr\left( \Psi_k\Sigma\Sigma'\right)\Delta t 
        - \frac{\theta}{2}\Xi'\Xi \Delta t
        +\frac{1}{2} \tr\left(\Lambda' P_{k+1}\Lambda\right)\Delta t
        + r_{k+1}
\end{align}
where $P_{k+1}, p_{k+1}, r_{k+1}$ are the coefficients of the quadratic form given at \eqref{eq:Phi:quadform0}.

\item The minimax condition for the Hamiltonian defined at \eqref{eq:Hamiltonian} holds:
\begin{align}\label{eq:DPP:minmax}
    \inf_{\bar h_k}\sup_{\bar\gamma_k,\bar\eta_k}
    \mathcal{H}(X_k,\bar h_k,\bar\gamma_k,\bar\eta_k)
    =
    \sup_{\bar\gamma_k,\bar\eta_k}\inf_{\bar h_k}
    \mathcal{H}(X_k,\bar h_k,\bar\gamma_k,\bar\eta_k)
\end{align}

\item We have
\begin{align} \label{eq:saddle:F}
     F_k(h^*_k,\bar\gamma_k,\bar\eta_k;X_k) 
     \leq
     F_k(h^*_k,\gamma^*_k,\eta^*_k;X_k) 
     \leq 
     F_k(\bar h_k,\gamma^*_k,\eta^*_k;X_k).  
\end{align}

\end{enumerate}

\end{proposition}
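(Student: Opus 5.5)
Part (i) is obtained by substituting Lemma~\ref{L1} into the Hamiltonian \eqref{eq:Hamiltonian} and regrouping terms. With $u_{k+1}$ quadratic, I will expand
$\frac12(b\Delta t+\tilde B X_k+\Lambda\bar\gamma_k\Delta t)'P_{k+1}(\cdots)$
into three pieces:
\begin{itemize}
\item the $\bar\gamma$-free part $\frac12(b\Delta t+\tilde BX_k)'P_{k+1}(b\Delta t+\tilde BX_k)$;
\item the cross term $\bar\gamma_k'\Lambda'P_{k+1}(b\Delta t+\tilde BX_k)\Delta t$;
\item the quadratic term $\frac12\bar\gamma_k'\Lambda'P_{k+1}\Lambda\bar\gamma_k\Delta t^2$.
\end{itemize}
The linear term $(\cdots)'p_{k+1}$ splits similarly.

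The quadratic term should combine with $-\frac{1}{2}\|\bar\gamma_k\|^2\Delta t$ from $\theta g\Delta t$ to give $\frac12\bar\gamma_k'\mathcal A_{k+1}\bar\gamma_k\Delta t$. This explains the definition \eqref{eq:def:CalA}; I will check the power of $\Delta t$ there, since as written the combination carries $\Delta t^2$ in front of $\Lambda'P\Lambda$, and I will adopt the paper's convention. Likewise, $-\frac{1}{2\theta\Delta t}\bar\eta'\Psi_k^{-1}\bar\eta$ multiplied by $\theta\Delta t$ gives $-\frac12\bar\eta'\Psi_k^{-1}\bar\eta$. All remaining control-dependent terms of $\theta g\Delta t$ are exactly those in $F_k$. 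The control-free terms are
$\theta(c+CX_k)\Delta t$, $\frac\theta2\mathrm{tr}(\Psi_k\Sigma\Sigma')\Delta t$, $-\frac\theta2\Xi'\Xi\Delta t$, $\frac12\mathrm{tr}(\Lambda'P_{k+1}\Lambda)\Delta t$, $r_{k+1}$,
together with the $\bar\gamma$-free parts above. These can be pulled outside the $\inf\sup$, which yields \eqref{eq:DPPPhiquadform1}.

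For (ii) and (iii), I will show that $F_k$ is a quadratic function that is strictly convex in $\bar h$ and strictly concave jointly in $(\bar\gamma,\bar\eta)$, and then apply the standard finite-dimensional minimax argument for quadratic games.
\begin{itemize}
\item \emph{Convexity in $\bar h$:} the Hessian in $\bar h$ is $\theta\Sigma\Sigma'\Delta t>0$ by Assumption~\ref{as:sigma:posdef}.
\item \emph{Concavity in $(\bar\gamma,\bar\eta)$:} the Hessian is
$\begin{pmatrix}\mathcal A_{k+1}\Delta t & -\theta\Sigma'\Delta t\\ -\theta\Sigma\Delta t & \theta\Sigma\Sigma'\Delta t-\Psi_k^{-1}\end{pmatrix}$.
\end{itemize}

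The concavity step is where I expect the difficulty. Assumption~\ref{as:saddlepoint:cond} only gives negative definiteness of the block-diagonal part, so the off-diagonal coupling $-\theta\Sigma'\bar\eta\Delta t$ between $\bar\gamma$ and $\bar\eta$ must be handled. My first attempt is a change of variables that absorbs the coupling with $\bar h$. Setting $z:=\bar h+\bar\eta$ makes the $\bar h$-player's problem depend only on $z$. For fixed $\bar h$, concavity in $(\bar\gamma,\bar\eta)$ then holds only up to that cross term. Failing that, I will argue sequentially:
\begin{itemize}
\item for fixed $(\bar h,\bar\eta)$, $F_k$ is strictly concave in $\bar\gamma$ because $\mathcal A_{k+1}<0$, so $\sup_{\bar\gamma}$ is attained explicitly;
\item substituting back gives a function of $(\bar h,\bar\eta)$ whose $\bar\eta$-Hessian is $\theta\Sigma\Sigma'\Delta t-\Psi_k^{-1}-\theta^2\Delta t\,\Sigma\mathcal A_{k+1}^{-1}\Sigma'$;
\item since $-\mathcal A_{k+1}^{-1}>0$, this last term is positive, so it must be dominated, and I will verify this using Assumption~\ref{as:saddlepoint:cond}, possibly after completing the square in $\bar h$ first. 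Minimizing over $\bar h$ first, for fixed dual controls, gives $\bar h^*=-\bar\eta+(\Sigma\Sigma')^{-1}(a+AX+\Sigma\bar\gamma)$, which removes $\bar\eta$ entirely from the quadratic part, leaving $-\frac12\bar\eta'\Psi_k^{-1}\bar\eta$ strictly concave.
\end{itemize}
This decoupling shows that the sup-inf problem is well posed, with $\bar\eta^*=0$ and $\bar\gamma^*$ solving a linear system whose matrix is $\mathcal A_{k+1}-\theta\Sigma'(\Sigma\Sigma')^{-1}\Sigma=-\mathcal B_{k+1}$, which is negative definite.

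To conclude, I will solve the first-order conditions $\nabla F_k=0$ (a nonsingular linear system) for the stationary triple $(h^*,\gamma^*,\eta^*)$. For the right inequality of \eqref{eq:saddle:F}, I fix $(\gamma^*,\eta^*)$: $F_k$ is strictly convex in $\bar h$ and stationary at $h^*$, so $h^*$ is the minimizer. For the left inequality, I fix $h^*$ and show $F_k(h^*,\cdot,\cdot)$ is concave, which follows from Assumption~\ref{as:saddlepoint:cond} together with the $\bar\gamma$–$\bar\eta$ cross-term analysis above; stationarity then gives the maximizer. Once \eqref{eq:saddle:F} holds, the saddle point yields $\inf\sup F_k=\sup\inf F_k=F_k(h^*,\gamma^*,\eta^*)$. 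Since $\mathcal H$ differs from $F_k$ by terms independent of the controls, this gives \eqref{eq:DPP:minmax}.
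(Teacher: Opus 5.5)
Your part (i) is the paper's argument: substitute Lemma~\ref{L1} into \eqref{eq:Hamiltonian} and regroup. The powers of $\Delta t$ are consistent, since $\frac12\bar\gamma'\mathcal A_{k+1}\bar\gamma\,\Delta t=\frac12\bar\gamma'\Lambda'P_{k+1}\Lambda\bar\gamma\,(\Delta t)^2-\frac12\|\bar\gamma\|^2\Delta t$.

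The genuine gap is in (ii)--(iii), exactly at the step you flagged, and it cannot be closed as you plan. Given $\mathcal A_{k+1}<0$, the $(\bar\gamma,\bar\eta)$-Hessian you wrote is negative definite if and only if its Schur complement is, i.e.
\[
\Psi_k^{-1}>\theta\,\Sigma\bigl(I_d-\theta\mathcal A_{k+1}^{-1}\bigr)\Sigma'\,\Delta t .
\]
Assumption~\ref{as:saddlepoint:cond} only gives $\Psi_k^{-1}>\theta\Sigma\Sigma'\Delta t$. The extra term $-\theta^2\Delta t\,\Sigma\mathcal A_{k+1}^{-1}\Sigma'$ is positive definite under Assumption~\ref{as:sigma:posdef}, so it can never be ``dominated using Assumption~\ref{as:saddlepoint:cond}''.

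Here is a concrete failure. Take $K=1$ (so $P_1=0$ and $\mathcal A_1=-I_d$), $m=1$, $\Sigma=(1,0,\dots,0)$, $\theta=\Delta t=1$, and $\Psi_0=0.9$. Both assumptions hold. Yet along $\bar\gamma=(t,0,\dots,0)'$, $\bar\eta=-t$, the quadratic part of $F_0(h^*,\bar\gamma,\bar\eta)$ equals $\frac49t^2$. Hence $F_0(h^*,\cdot,\cdot)$ is unbounded above and the left inequality in \eqref{eq:saddle:F} fails. Moreover $\inf_{\bar h}\sup_{\bar\gamma,\bar\eta}\mathcal H=+\infty$ while $\sup_{\bar\gamma,\bar\eta}\inf_{\bar h}\mathcal H$ is finite, so \eqref{eq:DPP:minmax} fails as well.

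Your $\bar h$-first decoupling is correct, but it only shows that the $\sup\inf$ is finite, with $\bar\eta^*=0$ and $\bar\gamma^*$ governed by $-\mathcal B_{k+1}<0$. It says nothing about $F_k(h^*,\cdot,\cdot)$: there $\bar h$ is frozen and can no longer absorb $\bar\eta$ through $z=\bar h+\bar\eta$.

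The paper's own proof has the same hole. It asserts concavity of $F_k$ in $(\bar\gamma,\bar\eta)$ from Assumption~\ref{as:saddlepoint:cond}, which only controls the diagonal blocks. What is missing is a stronger hypothesis: $\mathcal A_{k+1}<0$ together with \eqref{eq:SOC:etastar:Alt}, the condition the paper itself obtains in the proof of Proposition~\ref{prop:Controls:Alt}. That condition implies the second block of Assumption~\ref{as:saddlepoint:cond}, and \eqref{eq:SOC:hstar:Alt} is then automatic. Under it, your plan goes through as written: strict convexity in $\bar h$, joint strict concavity in $(\bar\gamma,\bar\eta)$, a unique stationary point, hence \eqref{eq:saddle:F} and then \eqref{eq:DPP:minmax}.
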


\begin{proof}
\begin{enumerate}[(i)]

\item Substituting \eqref{eq:g} and \eqref{eq:Phi:quadform:xt} into the dynamic programming equation at \eqref{eq:DPP:DPP}, we obtain
\begin{align}\label{eq:DPPPhiquadform2} 
    u_k(X_k)
    =&\inf_{\bar h_k}\sup_{\bar\gamma_k,\bar\eta_k}  \left\{
        \theta g(X_k, \bar h_k, \bar\eta_k, \bar \gamma_k)\Delta t
        + \mathbf{E}_{k,X_k}^{\bar\gamma, \bar\eta} \left[ u_{k+1}(X_{k+1}) \right]\right\}
                            \nonumber\\
    =& \inf_{\bar h_k}\sup_{\bar\gamma_k,\bar\eta_k}  \Bigg\{
        \frac{ \theta}{2} \left(\bar h_k +\bar\eta_k\right)'\Sigma\Sigma'\left(\bar h_k +\bar\eta_k\right)\Delta t
        + \frac{\theta}{2} \tr\left( \Psi_k\Sigma\Sigma'\right)\Delta t 
        - \theta\left(\bar h_k +\bar\eta_k\right)'a \Delta t
        - \frac{\theta}{2}\Xi'\Xi \Delta t
                                    \nonumber\\
    &       
        + \theta c \Delta t
        - \theta \left(\left(\bar h_k +\bar\eta_k\right)' A - C \right)X_k \Delta t
        - \theta\left((\bar h_k + \bar\eta_k)'\Sigma - \Xi' \right)\bar\gamma_k \Delta t
        - \frac{1}{2} \|\bar\gamma_k\|^2 \Delta t
        - \frac{1}{2} \bar\eta_k' \Psi_k^{-1} \bar\eta_k
                                    \nonumber\\
    &  
        + \frac{1}{2}\left(b \Delta t + \tilde{B} X_k +\Lambda \bar\gamma_k \Delta t\right)' 
        P_{k+1} 
        \left(b \Delta t + \tilde{B} X_k +\Lambda \bar\gamma_k \Delta t\right) 
                                    \nonumber\\
    &   +\frac{1}{2} \tr\left(\Lambda' P_{k+1}\Lambda\right)\Delta t
        + \left(b \Delta t + \tilde{B} X_k +\Lambda \bar\gamma_k \Delta t\right)'p_{k+1}
        + r_{k+1}
        \Bigg\}
                            \nonumber\\
    =& \inf_{\bar h_k}\sup_{\bar\gamma_k,\bar\eta_k}  \Bigg\{
        \frac{ \theta}{2} \left(\bar h_k +\bar\eta_k\right)'\Sigma\Sigma'\left(\bar h_k +\bar\eta_k\right)\Delta t
        + \frac{\theta}{2} \tr\left( \Psi_k\Sigma\Sigma'\right)\Delta t 
        - \theta\left(\bar h_k +\bar\eta_k\right)'a \Delta t
                                    \nonumber\\
    &       
        - \frac{\theta}{2}\Xi'\Xi \Delta t
        + \theta c \Delta t
        - \theta \left(\left(\bar h_k +\bar\eta_k\right)' A - C \right)X_k \Delta t
       - \theta\left((\bar h_k + \bar\eta_k)'\Sigma - \Xi' \right)\bar\gamma_k \Delta t
                                    \nonumber\\
    &       
       - \frac{1}{2} \bar\gamma_k'\bar\gamma_k  \Delta t
       - \frac{1}{2} \bar\eta_k' \Psi_k^{-1} \bar\eta_k
        + \frac{1}{2} \bar\gamma_k' \Lambda' P_{k+1} \Lambda \bar\gamma_k (\Delta t)^2 
        + \bar\gamma_k' \Lambda' P_{k+1} 
        \left(b \Delta t + \tilde{B} X_k \right) \Delta t
                                    \nonumber\\
    &  
        + \frac{1}{2}\left(b \Delta t + \tilde{B} X_k\right)' 
        P_{k+1} 
        \left(b \Delta t + \tilde{B} X_k\right) 
        +\frac{1}{2} \tr\left(\Lambda' P_{k+1}\Lambda\right)\Delta t
        +  \bar\gamma_k' \Lambda' p_{k+1} \Delta t
                                    \nonumber\\
    &  
        + \left(b \Delta t + \tilde{B} X_k \right)'p_{k+1}
        + r_{k+1}
        \Bigg\}
                                    \nonumber\\
    =& \inf_{\bar h_k}\sup_{\bar\gamma_k,\bar\eta_k}  \Bigg\{
        \frac{ \theta}{2} \left(\bar h_k +\bar\eta_k\right)'\Sigma\Sigma'\left(\bar h_k +\bar\eta_k\right)\Delta t
        - \theta\left(\bar h_k +\bar\eta_k\right)'(a + A X_k) \Delta t
                                    \nonumber\\
    &   
        - \theta\left((\bar h_k + \bar\eta_k)'\Sigma - \Xi' \right)\bar\gamma_k \Delta t
        + \frac{1}{2} \bar\gamma_k' \mathcal{A}_{k+1} \bar\gamma_k \Delta t
        + \bar\gamma_k' \Lambda' \left[ P_{k+1} 
        \left(b \Delta t + \tilde{B} X_k \right) +  p_{k+1} \right] \Delta t
                                    \nonumber\\
    &   - \frac{1}{2} \bar\eta_k' \Psi_k^{-1} \bar\eta_k
        \Bigg\}
        + \frac{1}{2}\left(b \Delta t + \tilde{B} X_k\right)' P_{k+1} \left(b \Delta t + \tilde{B} X_k\right)
        + \left(b \Delta t + \tilde{B} X_k \right)'p_{k+1}      
                                    \nonumber\\
    &   + \theta (c + C X_k) \Delta t 
        + \frac{\theta}{2} \tr\left( \Psi_k\Sigma\Sigma'\right)\Delta t 
        - \frac{\theta}{2}\Xi'\Xi \Delta t
        +\frac{1}{2} \tr\left(\Lambda' P_{k+1}\Lambda\right)\Delta t
        + r_{k+1}
\end{align}
from which, using the definition of $F_k$ in \eqref{eq:auxfunc:F}, we obtain the desired representation of $u_k(X_k)$, which proves the first part of the proposition.

\item The function $F_k$ at \eqref{eq:auxfunc:F} is quadratic in the controls $\bar h$, $\bar\gamma$, and $\bar\eta$. Under Assumptions \ref{as:sigma:posdef} and \ref{as:saddlepoint:cond}, $F_k$ is convex in $\bar h$ and concave in $\bar\gamma$ and $\bar\eta$. Moreover, the Hessian is independent of the value of the controls. So $F_k$ admits a unique stationary point $(h^*, \gamma^*, \eta^*)$, which is affine in the state $X_k$. Since $F_k$ is jointly continuous, convex in $\bar h$, concave in $(\bar\gamma, \bar\eta)$, and admits a unique stationary point, that stationary point is the unique saddle point of $F_k$. Therefore the minimax equality holds for the Hamiltonian inside \eqref{eq:DPPPhiquadform2}, that is, 
\begin{align}\label{eq:DPP:minmax:proof}
    \inf_{\bar h_k}\sup_{\bar\gamma_k,\bar\eta_k}
    \mathcal{H}(X_k,\bar h_k,\bar\gamma_k,\bar\eta_k)
    =
    \sup_{\bar\gamma_k,\bar\eta_k}\inf_{\bar h_k}
    \mathcal{H}(X_k,\bar h_k,\bar\gamma_k,\bar\eta_k).
\end{align}

\item It follows from \eqref{eq:DPP:minmax} that we can apply the saddle point condition
\begin{align}\label{eq:saddle:Ham:1}
    \mathcal{H}(X_k,h^*_k,\bar\gamma_k,\bar\eta_k)
\leq \mathcal{H}(X_k,h^*_k,\gamma^*_k,\eta^*_k)
\leq \mathcal{H}(X_k,\bar h_k,\gamma^*_k,\eta^*_k)
\end{align} 
to solve the game. Applying \eqref{eq:DPPPhiquadform1} together with the definition of $F_k$ at \eqref{eq:auxfunc:F} and the definition of supremum/infimum, the saddle point condition \eqref{eq:saddle:Ham:1} simplifies to
\begin{align}
     F_k(h^*_k,\bar\gamma_k,\bar\eta_k;X_k) 
     \leq
     F_k(h^*_k,\gamma^*_k,\eta^*_k;X_k) 
     \leq 
     F_k(\bar h_k,\gamma^*_k,\eta^*_k;X_k).  
\end{align}
Hence, the search for a saddle point for $u_k(X_k)$ reduces to the search for a saddle point for $F_k(\bar h,\bar\gamma,\bar\eta;X_k)$.
\end{enumerate}

\end{proof}

\begin{remark}
    Because $F_k$ is quadratic with constant Hessian and is convex-concave under Assumptions \ref{as:sigma:posdef} and \ref{as:saddlepoint:cond}, the stationary point can be computed by solving the first-order conditions in any order of the variables.
\end{remark}

\begin{proposition}\label{prop:Controls}

Under Assumptions \ref{as:sigma:posdef} and \ref{as:saddlepoint:cond}, the candidate optimal controls are given by the stationary point  $(h^*_k,\gamma^*_k,\eta^*_k), k=0,\ldots K-1$ of the quadratic function
$F_k(\bar h,\bar\gamma,\bar\eta;X_k)$ at \eqref{eq:auxfunc:F} and can be explicitly expressed as:
\begin{align}\label{eq:hstar}
h^*_k
    =&  \left(\Sigma\Sigma'\right)^{-1}\Bigg\{ 
        \left\{
            I_m
            - \theta \Sigma \mathcal{B}_{k+1}^{-1} \Sigma'\left(\Sigma\Sigma'\right)^{-1}
        \right\}(a + A X_k)
        + \Sigma \mathcal{B}_{k+1}^{-1}\left\{  
            \Lambda' P_{k+1} \left(b \Delta t + \tilde{B} X_k \right)
            + \Lambda'   p_{k+1} 
            + \theta \Xi
        \right\}
       \Bigg\}
\end{align}
\begin{align}\label{eq:gammastar}
    \gamma^*_k 
    =&  \mathcal{B}_{k+1}^{-1}\left\{
        - \theta \Sigma'\left(\Sigma\Sigma'\right)^{-1} (a + A X_k) 
        + \Lambda' P_{k+1} \left(b \Delta t + \tilde{B} X_k \right) 
        + \Lambda' p_{k+1}
        + \theta \Xi
    \right\}
\end{align}
\begin{align}\label{eq:etastar} 
    \eta^*_k =& 0.
\end{align}
where
\begin{align}\label{eq:def:CalB}
    \mathcal{B}_{k+1} := 
        \theta \Sigma'\left(\Sigma\Sigma'\right)^{-1}\Sigma
        - \mathcal{A}_{k+1}, 
\end{align}
for $k=0, \ldots, K-1$.

\end{proposition}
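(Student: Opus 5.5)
The plan is to compute the stationary point of $F_k(\bar h,\bar\gamma,\bar\eta;X_k)$ in \eqref{eq:auxfunc:F} directly from its first-order conditions. By Proposition \ref{prop:SaddlePoint} and the remark following it, under Assumptions \ref{as:sigma:posdef} and \ref{as:saddlepoint:cond} the function $F_k$ is quadratic and convex--concave with constant Hessian. Its unique stationary point is therefore the saddle point, and we may solve the first-order conditions in whatever order is most convenient. I will solve for $\bar h$ first, then $\bar\eta$, then $\bar\gamma$, and back-substitute at the end.

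\emph{Step 1 (gradient in $\bar h$).} Differentiating gives
\[
\theta\Sigma\Sigma'(\bar h+\bar\eta)\Delta t-\theta(a+AX_k)\Delta t-\theta\Sigma\bar\gamma\Delta t=0 .
\]
Since $\Sigma\Sigma'$ is invertible by Assumption \ref{as:sigma:posdef}, this yields
\[
\bar h+\bar\eta=(\Sigma\Sigma')^{-1}\bigl(a+AX_k+\Sigma\bar\gamma\bigr).
\]

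\emph{Step 2 (gradient in $\bar\eta$).} The gradient in $\bar\eta$ is exactly the $\bar h$-gradient minus $\Psi_k^{-1}\bar\eta$. This is because $\bar\eta$ enters $F_k$ only through $\bar h+\bar\eta$, apart from the Mahalanobis penalty. Hence at a stationary point $\Psi_k^{-1}\bar\eta=0$, and since $\Psi_k$ is positive definite, $\eta_k^*=0$. This gives \eqref{eq:etastar}, and it also gives $h^*_k=(\Sigma\Sigma')^{-1}(a+AX_k+\Sigma\gamma^*_k)$.

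\emph{Step 3 (gradient in $\bar\gamma$).} Differentiating gives
\[
-\theta\Sigma'(\bar h+\bar\eta)+\theta\Xi+\mathcal{A}_{k+1}\bar\gamma+\Lambda'\bigl[P_{k+1}(b\Delta t+\tilde B X_k)+p_{k+1}\bigr]=0 .
\]
Substituting Step 1 turns this into a linear equation for $\bar\gamma$:
\[
\bigl(\theta\Sigma'(\Sigma\Sigma')^{-1}\Sigma-\mathcal{A}_{k+1}\bigr)\bar\gamma
=-\theta\Sigma'(\Sigma\Sigma')^{-1}(a+AX_k)+\Lambda'P_{k+1}(b\Delta t+\tilde BX_k)+\Lambda'p_{k+1}+\theta\Xi .
\]
The left-hand matrix is exactly $\mathcal{B}_{k+1}$ from \eqref{eq:def:CalB}. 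Solving for $\bar\gamma$ gives \eqref{eq:gammastar}.

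\emph{Step 4 (back-substitution).} Inserting $\gamma^*_k$ into $h^*_k=(\Sigma\Sigma')^{-1}(a+AX_k+\Sigma\gamma^*_k)$ and collecting the $(a+AX_k)$ terms yields the factor $I_m-\theta\Sigma\mathcal{B}_{k+1}^{-1}\Sigma'(\Sigma\Sigma')^{-1}$, which is \eqref{eq:hstar}.

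The main point needing care is the invertibility of $\mathcal{B}_{k+1}$. I will show it is positive definite, as the sum of two pieces. The first piece, $\theta\Sigma'(\Sigma\Sigma')^{-1}\Sigma$, is $\theta$ times an orthogonal projection; this uses Assumption \ref{as:sigma:posdef} to make $(\Sigma\Sigma')^{-1}$ well defined, and with $\theta>0$ the piece is positive semidefinite. The second piece, $-\mathcal{A}_{k+1}$, is positive definite by the first diagonal block of Assumption \ref{as:saddlepoint:cond}. Hence $\mathcal{B}_{k+1}>0$.

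Uniqueness of the stationary point, and hence its identification with the saddle point in \eqref{eq:saddle:F}, then follows from Proposition \ref{prop:SaddlePoint}(ii)--(iii).
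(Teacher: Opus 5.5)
Your proposal is correct and follows the paper's proof essentially step for step: the first-order condition in $\bar h$, then in $\bar\eta$ (giving $\Psi_k^{-1}\eta^*_k=0$), then in $\bar\gamma$ after substituting the $\bar h$-condition to get the $\mathcal{B}_{k+1}$ system, then back-substitution. Your argument that $\mathcal{B}_{k+1}$ is positive definite ($\theta$ times an orthogonal projection plus $-\mathcal{A}_{k+1}>0$) is a useful addition, since the paper inverts $\mathcal{B}_{k+1}$ without comment, and your elimination already yields uniqueness of the stationary point directly. One caution, about your last sentence only: identifying this stationary point with a saddle point in \eqref{eq:saddle:F} needs joint concavity in $(\bar\gamma,\bar\eta)$, and because of the cross term $-\theta\Sigma'\Delta t$ that concavity is governed by the stronger Schur-complement condition \eqref{eq:SOC:etastar:Alt} rather than by the block-diagonal Assumption \ref{as:saddlepoint:cond}; this does not affect the stationary-point formulas the statement asks for.
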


\begin{proof}
According to the proof of the previous Proposition, it suffices to look for a saddle point for $F_k(\bar h,\bar\gamma,\bar\eta;X_k)$. By the first order condition, $h^*$ satisfies
\begin{align}\label{eq:hstar:1}
    &\frac{\partial F_k}{\partial \bar h}(\bar h, \bar\gamma, \bar\eta; X_k)\Big|_{\bar h = h^*, \bar \gamma = \gamma^*, \bar\eta = \eta^*}
    = 0                                                                             
                                                        \nonumber\\
    \Leftrightarrow&
    \theta \Sigma\Sigma'\left(h^* + \eta^* \right)\Delta t
        - \theta (a + A X_k) \Delta t
       - \theta \Sigma \gamma^* \Delta t
        = 0
                                                    \nonumber\\
\Leftrightarrow&
    \left(h^* + \eta^*\right)
        =  \left(\Sigma\Sigma'\right)^{-1}\left[ 
            (a + A X_k)
       +  \Sigma \gamma^*
       \right],
\end{align}
Assumption \ref{as:sigma:posdef} ensures that $F_k$ reaches a minimum in its argument $\bar h$ at $h^*$.

Next, we apply the first order condition to $\bar\eta$:
\begin{align}\label{eq:etastar:1}
    &\frac{\partial F_k}{\partial \bar \eta}(\bar h, \bar\gamma, \bar\eta; X_k)\Big|_{\bar h = h^*, \bar \gamma = \gamma^*, \bar\eta = \eta^*}
    = 0                                                                             
                                                        \nonumber\\
    \Leftrightarrow&
    \theta \Sigma\Sigma'\left(h^* + \eta^*\right)\Delta t
        - \theta (a + A X_k) \Delta t
        - \theta \Sigma \gamma^* \Delta t
        - \Psi_k^{-1} \eta^*
        = 0
                                                        \nonumber\\
    \Leftrightarrow&
     \Psi_k^{-1} \eta^*
     = 
     \underbrace{\theta \Sigma\Sigma'\left(h^* + \eta^*\right)\Delta t
        - \theta (a + A X_k) \Delta t
        - \theta \Sigma \gamma^* \Delta t}_{=0 \text{ by } \eqref{eq:hstar:1}}
        = 0
\end{align}
Hence, the candidate exploratory bias $\eta^* = 0$. By Assumption \ref{as:saddlepoint:cond}, $F_k$ reaches a maximum in its argument $\bar \eta$ at $\eta^* = 0$.

Finally, 
\begin{align}\label{eq:gammastar:1}
    &\frac{\partial F_k}{\partial \bar \gamma}(\bar h, \bar\gamma, \bar\eta; X_k)\Big|_{\bar h = h^*, \bar \gamma = \gamma^*, \bar\eta = \eta^*}
    = 0                                                                             
                                                \nonumber\\
    \Leftrightarrow&
        - \theta\left(\Sigma'(h^* + \eta^*) - \Xi \right)\Delta t
        + \mathcal{A}_{k+1} \gamma^* \Delta t
        + \Lambda' \left[ P_{k+1} 
        \left(b \Delta t + \tilde{B} X_k \right) +  p_{k+1} \right]\Delta t
        = 0                        
                                                \nonumber\\
    \Leftrightarrow&
        \mathcal{A}_{k+1} \gamma^*
        - \theta\Sigma'(h^* + \eta^*)
        + \theta \Xi 
        + \Lambda' \left[ P_{k+1} 
        \left(b \Delta t + \tilde{B} X_k \right) +  p_{k+1} \right]
        = 0
\end{align}
By Assumption \ref{as:saddlepoint:cond}, $F_k$ reaches a maximum in its argument $\bar \gamma$ at $\gamma^*$.

From \eqref{eq:hstar:1}, we have 
\begin{align}
    \theta \Sigma'\left(h^* + \eta^*\right)
    = \theta \Sigma'\left(\Sigma\Sigma'\right)^{-1}\left[ 
            (a + A X_k)
       +  \Sigma \gamma^*
       \right]
    ,
\end{align}
thus \eqref{eq:gammastar:1} becomes 
{\small
\begin{align}\label{eq:gammastar:2}
    &   \mathcal{A}_{k+1} \gamma^*
        - \theta \Sigma'\left(\Sigma\Sigma'\right)^{-1}\left[ 
            (a + A X_k)
            +  \Sigma \gamma^*
        \right]
        + \theta \Xi 
        + \Lambda' \left[ P_{k+1} 
        \left(b \Delta t + \tilde{B} X_k \right) +  p_{k+1} \right]
        = 0
                    \nonumber\\
    \Leftrightarrow&
        \left[\Lambda' P_{k+1} \Lambda \Delta t - I_d - \theta \Sigma'\left(\Sigma\Sigma'\right)^{-1}\Sigma
        \right]\gamma^*
        - \theta \Sigma'\left(\Sigma\Sigma'\right)^{-1} (a + A X_k)
        + \theta \Xi 
        + \Lambda' \left[ P_{k+1} 
        \left(b \Delta t + \tilde{B} X_k \right) +  p_{k+1} \right]
        = 0
                    \nonumber\\
    \Leftrightarrow&
        \gamma^* = \left[\Lambda' P_{k+1} \Lambda \Delta t - I_d - \theta \Sigma'\left(\Sigma\Sigma'\right)^{-1}\Sigma
        \right]^{-1}
        \left\{
            \theta \Sigma'\left(\Sigma\Sigma'\right)^{-1} (a + A X_k)
            - \theta \Xi 
            - \Lambda' \left[ P_{k+1} 
            \left(b \Delta t + \tilde{B} X_k \right) +  p_{k+1} \right]
        \right\}
\end{align}
}
Thus, 
\begin{align}\label{eq:gammastar:3}
    \gamma^* = \mathcal{B}_{k+1}^{-1}\left\{
        - \theta \Sigma'\left(\Sigma\Sigma'\right)^{-1} (a + A X_k) 
        + \Lambda' P_{k+1} \left(b \Delta t + \tilde{B} X_k \right) 
        + \Lambda' p_{k+1}
        + \theta \Xi
    \right\},
\end{align}
with $\mathcal{B}_{k+1}$ defined at \eqref{eq:def:CalB}.

Under Assumption \ref{as:saddlepoint:cond}, $F_k$ reaches a maximum in its argument $\bar \gamma$ at $\gamma^*$. Substituting $\eta^*=0$ and \eqref{eq:gammastar:3} into \eqref{eq:hstar:1}, we get:
\begin{align}\label{eq:hstar:2}
    h^*
    =&  \left(\Sigma\Sigma'\right)^{-1}\Bigg\{ 
            (a + A X_k)
       +  \Sigma \mathcal{B}_{k+1}^{-1}\left\{
        - \theta \Sigma'\left(\Sigma\Sigma'\right)^{-1} (a + A X_k)
        + \theta \Xi 
        + \Lambda' \left[ P_{k+1} 
            \left(b \Delta t + \tilde{B} X_k \right) +  p_{k+1} \right]
    \right\}
                    \nonumber\\
    =&  \left(\Sigma\Sigma'\right)^{-1}\Bigg\{ 
        \left\{
            I_m
            - \theta \Sigma \mathcal{B}_{k+1}^{-1} \Sigma'\left(\Sigma\Sigma'\right)^{-1}
        \right\}(a + A X_k)
        + \Sigma \mathcal{B}_{k+1}^{-1}\left\{  
            \Lambda' P_{k+1}\left(b \Delta t + \tilde{B} X_k \right)
            + \Lambda'   p_{k+1} 
            + \theta \Xi
        \right\}
       \Bigg\}
\end{align}

\end{proof}

The following Proposition presents an alternative characterization of the candidate controls which will prove convenient to interpret the optimal asset allocation as a Fractional Kelly Strategy in Section \ref{sec:KellyandFKS} below.

\begin{proposition}\label{prop:Controls:Alt}

Under Assumptions \ref{as:sigma:posdef} and \ref{as:saddlepoint:cond}, the candidate optimal controls are given by the stationary point  $(h^*_k,\gamma^*_k,\eta^*_k), k=0,\ldots K-1$ of the quadratic function
$F_k(\bar h,\bar\gamma,\bar\eta;X_k)$ at \eqref{eq:auxfunc:F} and can be explicitly expressed as:
\begin{align}\label{eq:hstar:alt}
h^*_k
    =&  \frac{1}{\theta+1}\mathcal{C}_{k+1}^{-1}\!(\theta)
    \Bigg[
        (a + A X_k)
        + \theta \mathcal{A}_{k+1}^{-1} \Sigma
        \left\{
            \Xi
            - \Lambda'\left[ 
                P_{k+1} \left(b \Delta t + \tilde{B} X_k \right) 
                +  p_{k+1} 
            \right]
        \right\}
    \Bigg]   
\end{align}
\begin{align}\label{eq:gammastar:alt} 
    \gamma^*_k 
    =&  \mathcal{A}_{k+1}^{-1}
        \Bigg\{ 
            \frac{\theta}{\theta+1} \Sigma'\mathcal{C}_{k+1}^{-1}\!(\theta) (a + A X_k)
        + \theta \left( \frac{\theta}{\theta+1}
                \Sigma' \mathcal{C}_{k+1}^{-1}\!(\theta) 
            \mathcal{A}_{k+1}^{-1} - I_d \right)\Sigma\Xi
                                            \nonumber\\
        & 
            - \left(\frac{\theta}{\theta+1}\Sigma'
                \mathcal{C}_{k+1}^{-1}\!(\theta)
            \mathcal{A}_{k+1}^{-1} \Sigma 
            + I_d \right) 
            \Lambda'
            \left[ 
                    P_{k+1} \left(b \Delta t + \tilde{B} X_k \right) 
                    +  p_{k+1} 
                \right] 
            \Bigg\}.
\end{align}
\begin{align}\label{eq:etastar:alt} 
    \eta^*_k =& 0.
\end{align}
where
\begin{align}\label{eq:CalC}
    \mathcal{C}_{k+1}\!(\theta)
    :=
        \frac{1}{\theta+1}\Sigma\left[
        I_d
        - \theta \mathcal{A}_{k+1}^{-1} 
        \right]\Sigma',
\end{align}
for $k=0, \ldots, K-1$.
\end{proposition}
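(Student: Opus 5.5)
The plan is to recompute the stationary point of the quadratic function $F_k(\bar h,\bar\gamma,\bar\eta;X_k)$ in \eqref{eq:auxfunc:F}, eliminating the variables in a different order from the proof of Proposition \ref{prop:Controls}. The remark following Proposition \ref{prop:SaddlePoint} allows any order: under Assumptions \ref{as:sigma:posdef} and \ref{as:saddlepoint:cond}, $F_k$ has a constant Hessian and is convex--concave, so its stationary point is unique and is the saddle point. The result must therefore coincide with Proposition \ref{prop:Controls}, and only the algebraic representation changes.

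\textbf{Step 1: the exploration bias.} The first-order conditions in $\bar h$ and $\bar\eta$ differ only by the term $-\Psi_k^{-1}\bar\eta$. Subtracting them gives $\Psi_k^{-1}\eta^*_k=0$, hence $\eta^*_k=0$, exactly as in \eqref{eq:etastar:1}.

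\textbf{Step 2: the dual control in terms of $h^*_k$.} Instead of eliminating $\bar h$ first, I solve the $\bar\gamma$ first-order condition \eqref{eq:gammastar:1} for $\gamma^*$ as a function of $h^*$. Assumption \ref{as:saddlepoint:cond} gives $-\mathcal{A}_{k+1}>0$, so $\mathcal{A}_{k+1}$ is invertible and
\begin{align*}
\gamma^*_k=\mathcal{A}_{k+1}^{-1}\Big\{\theta(\Sigma'h^*_k-\Xi)-\Lambda'\big[P_{k+1}(b\Delta t+\tilde B X_k)+p_{k+1}\big]\Big\}.
\end{align*}

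\textbf{Step 3: the allocation.} Substituting this into the $\bar h$ condition $\Sigma\Sigma'h^*_k-(a+AX_k)-\Sigma\gamma^*_k=0$ and collecting the $h^*_k$ terms gives
\begin{align*}
\Sigma\big[I_d-\theta\mathcal{A}_{k+1}^{-1}\big]\Sigma'\,h^*_k=(a+AX_k)+(\text{terms in }\Sigma\mathcal{A}_{k+1}^{-1}\Xi\text{ and }\Sigma\mathcal{A}_{k+1}^{-1}\Lambda'[\cdots]).
\end{align*}
The left-hand matrix is $(\theta+1)\mathcal{C}_{k+1}(\theta)$ by \eqref{eq:CalC}.

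To invert it, note that $-\mathcal{A}_{k+1}^{-1}$ is positive definite because $-\mathcal{A}_{k+1}$ is. Hence $I_d-\theta\mathcal{A}_{k+1}^{-1}$ is positive definite for $\theta>0$. Since $\Sigma$ has full row rank by Assumption \ref{as:sigma:posdef}, $\Sigma[I_d-\theta\mathcal{A}_{k+1}^{-1}]\Sigma'$ is positive definite and therefore invertible. This yields \eqref{eq:hstar:alt}.

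\textbf{Step 4: the dual control in closed form.} Plugging \eqref{eq:hstar:alt} back into the expression of Step 2 and grouping the $(a+AX_k)$, $\Xi$ and $\Lambda'[\cdots]$ terms produces \eqref{eq:gammastar:alt}, with coefficient matrices of the form $\frac{\theta}{\theta+1}\Sigma'\mathcal{C}_{k+1}^{-1}(\theta)\mathcal{A}_{k+1}^{-1}\Sigma\pm I_d$.

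\textbf{Main obstacle.} The expected difficulty is bookkeeping rather than ideas: tracking the signs of the $\Xi$ and $\Lambda'[P_{k+1}(\cdot)+p_{k+1}]$ terms and the placement of $\Sigma$ relative to $\mathcal{A}_{k+1}^{-1}$, so that the bracket in \eqref{eq:hstar:alt} is an $m$-vector. A careless derivation could produce a result that conflicts with Proposition \ref{prop:Controls}. As a cross-check, I will verify equivalence with \eqref{eq:hstar} through the Woodbury identity applied to $\mathcal{B}_{k+1}=\theta\Sigma'(\Sigma\Sigma')^{-1}\Sigma-\mathcal{A}_{k+1}$. This relates $\Sigma\mathcal{B}_{k+1}^{-1}\Sigma'$ to $\big(\Sigma[I_d-\theta\mathcal{A}_{k+1}^{-1}]\Sigma'\big)^{-1}$, using that $\Sigma'(\Sigma\Sigma')^{-1}\Sigma$ is a projection. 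Uniqueness of the stationary point guarantees that the two representations must agree, so any discrepancy flags an algebra slip.
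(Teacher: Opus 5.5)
Your route matches the paper's own proof. You solve the $\bar\gamma$ first-order condition for $\gamma^*$ in terms of $h^*$, insert it into the $\bar h$ condition, get $\eta^*=0$ from the difference of the $\bar h$ and $\bar\eta$ conditions, and back-substitute. Your derivation of $\Sigma[I_d-\theta\mathcal{A}_{k+1}^{-1}]\Sigma'>0$ from Assumptions \ref{as:sigma:posdef} and \ref{as:saddlepoint:cond} is a small improvement, since the paper only states \eqref{eq:SOC:hstar:Alt} as a condition.

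The gap is the sentence ``This yields \eqref{eq:hstar:alt}'': it does not, and the displayed formula itself is wrong. Write $\zeta_k:=\Lambda'[P_{k+1}(b\Delta t+\tilde B X_k)+p_{k+1}]\in\mathbb{R}^d$. Your Step 3 then gives
\[
(\theta+1)\,\mathcal{C}_{k+1}(\theta)\,h^*_k=(a+AX_k)-\Sigma\mathcal{A}_{k+1}^{-1}\big(\theta\Xi+\zeta_k\big),
\]
whereas the displayed bracket is $(a+AX_k)+\theta\mathcal{A}_{k+1}^{-1}\Sigma(\Xi-\zeta_k)$. That bracket is ill-typed: $\mathcal{A}_{k+1}^{-1}$ is $d\times d$, $\Sigma$ is $m\times d$, and $d=n+m+1\neq m$. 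Even after reordering to $\Sigma\mathcal{A}_{k+1}^{-1}$, it has the wrong sign on $\Xi$ and a spurious factor $\theta$ on $\zeta_k$.

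A quick test at $k=K-1$ confirms this. There $P_K=0$ and $p_K=0$, so $\mathcal{A}_K=-I_d$ and $\mathcal{C}_K(\theta)=\Sigma\Sigma'$. The correct formula gives $h^*_{K-1}=\frac{1}{\theta+1}(\Sigma\Sigma')^{-1}(a+AX_{K-1})+\frac{\theta}{\theta+1}(\Sigma\Sigma')^{-1}\Sigma\Xi$. This agrees with \eqref{eq:hstar} and with the fractional Kelly split of Proposition \ref{prop:FKS}. The display instead puts weight $-\frac{\theta}{\theta+1}$ on the benchmark-tracking portfolio. The paper's proof makes the same slip in the linear term of $F_k$ after substituting $\gamma^*$, so it offers no way around this.

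Step 4 fails the same way. The matrix $\Sigma'\mathcal{C}_{k+1}^{-1}(\theta)\mathcal{A}_{k+1}^{-1}\Sigma$ you quote is undefined, and back-substitution actually gives
\[
\gamma^*_k=\mathcal{A}_{k+1}^{-1}\Big\{\tfrac{\theta}{\theta+1}\Sigma'\mathcal{C}_{k+1}^{-1}(\theta)(a+AX_k)-\big(\tfrac{\theta}{\theta+1}\Sigma'\mathcal{C}_{k+1}^{-1}(\theta)\Sigma\mathcal{A}_{k+1}^{-1}+I_d\big)\big(\theta\Xi+\zeta_k\big)\Big\}.
\]
The $\Xi$ and $\zeta_k$ pieces carry the same sign, not $\pm I_d$.

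Your Woodbury cross-check is the right tool, and here it settles the matter. It gives $\frac{1}{\theta+1}\mathcal{C}_{k+1}^{-1}(\theta)=(\Sigma\Sigma')^{-1}-\theta(\Sigma\Sigma')^{-1}\Sigma\mathcal{B}_{k+1}^{-1}\Sigma'(\Sigma\Sigma')^{-1}$. Using $\theta\Sigma'(\Sigma\Sigma')^{-1}\Sigma=\mathcal{B}_{k+1}+\mathcal{A}_{k+1}$, it also gives $\frac{1}{\theta+1}\mathcal{C}_{k+1}^{-1}(\theta)\Sigma\mathcal{A}_{k+1}^{-1}=-(\Sigma\Sigma')^{-1}\Sigma\mathcal{B}_{k+1}^{-1}$. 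Hence the corrected formula reproduces \eqref{eq:hstar} term by term, and the displayed one cannot.

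The discrepancy lies in the statement, not in your algebra. What your argument proves is the corrected proposition, and you should present it that way. Do not assert that the computation lands on \eqref{eq:hstar:alt} and \eqref{eq:gammastar:alt}.
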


\begin{proof}
Propositions \ref{prop:SaddlePoint} and \ref{prop:Controls} already showed that $F_k$ is convex in $\bar h$ and concave in $\bar \gamma$ and $\bar \eta$ provided that Assumptions \ref{as:sigma:posdef} and \ref{as:saddlepoint:cond} hold.

To prove Proposition \ref{prop:Controls:Alt}, we proceed in the reverse order of Proposition \ref{prop:Controls}, applying the first order condition with respect to $\bar \gamma$ and then with respect to $\bar h$:

\begin{align}\label{eq:FOC:gammastar}
    &\frac{\partial F_k}{\partial \bar \gamma}(h^*, \bar\gamma, \eta^*; X_k)\Big|_{\bar h = h^*, \bar \gamma = \gamma^*, \bar\eta = \eta^*}
    = 0                                                                             
                                                \nonumber\\
    \Leftrightarrow&
        \mathcal{A}_{k+1} \gamma^*
        - \theta\Sigma'(h^* + \eta^*)
        + \theta \Xi 
        + \Lambda' \left[ P_{k+1} 
        \left(b \Delta t + \tilde{B} X_k \right) +  p_{k+1} \right]
        = 0
                                                \nonumber\\
    \Leftrightarrow&
         \gamma^*
        = \mathcal{A}_{k+1}^{-1}
        \left\{
            \theta\Sigma'(h^* + \eta^*)
            - \theta \Xi 
            - \Lambda' \left[ P_{k+1} 
            \left(b \Delta t + \tilde{B} X_k \right) +  p_{k+1} \right] 
        \right\},
\end{align}
By Assumption \ref{as:saddlepoint:cond}, $\mathcal{A}_{k+1} < 0$, hence it is invertible. The function $F_k$ reaches a maximum in its argument $\bar \gamma$ if $\mathcal{A}_{k+1} <0$. In the following, we also assume that $P_{k+1}$ is symmetric to be able to write $x'P_{k+1}y = y'P_{k+1}x$.

We substitute this expression into $F_k$
\begin{align}
    & F_k(h^*, \gamma^*, \eta^* ; X_k)
                                    \nonumber\\
    =& \frac{ \theta}{2} \left(h^* +\eta^*\right)'\Sigma\Sigma'\left(h^* +\eta^*\right)\Delta t
        - \theta\left(h^* +\eta^*\right)'(a + A X_k) \Delta t
                                    \nonumber\\
    &   - \theta\left((h^* + \eta^*)'\Sigma - \Xi' \right)
        \mathcal{A}_{k+1}^{-1}
        \left\{
            \theta\Sigma'(h^* + \eta^*)
            - \theta \Xi 
            - \Lambda' \left[ P_{k+1} 
            \left(b \Delta t + \tilde{B} X_k \right) +  p_{k+1} \right] 
        \right\} \Delta t
                                    \nonumber\\
    &   + \frac{1}{2} 
        \left\{
            \theta\Sigma'(h^* + \eta^*)
            - \theta \Xi 
            - \Lambda' \left[ P_{k+1} 
            \left(b \Delta t + \tilde{B} X_k \right) +  p_{k+1} \right] 
        \right\}'
                                     \nonumber\\       
    &   \cdot
        \cancel{\mathcal{A}_{k+1}^{-1}}
        \cancel{\mathcal{A}_{k+1}} 
        \mathcal{A}_{k+1}^{-1}
        \left\{
            \theta\Sigma'(h^* + \eta^*)
            - \theta \Xi 
            - \Lambda' \left[ P_{k+1} 
            \left(b \Delta t + \tilde{B} X_k \right) +  p_{k+1} \right] 
        \right\}\Delta t
                                    \nonumber\\
    &   +\left[ P_{k+1} 
        \left(b \Delta t + \tilde{B} X_k \right) +  p_{k+1} \right]' \Lambda
        \mathcal{A}_{k+1}^{-1}
        \left\{
            \theta\Sigma'(h^* + \eta^*)
            - \theta \Xi 
            - \Lambda' \left[ P_{k+1} 
            \left(b \Delta t + \tilde{B} X_k \right) +  p_{k+1} \right] 
        \right\} \Delta t
                                    \nonumber\\
    &   - \frac{1}{2} (\eta^*)' \Psi_k^{-1} \eta^*
                                    \nonumber\\
    =&
        \frac{ \theta}{2} \left(h^* +\eta^*\right)'\Sigma
        \left\{
            I_d
            - \theta \mathcal{A}_{k+1}^{-1} 
        \right\}
        \Sigma'\left(h^* +\eta^*\right)\Delta t
                                    \nonumber\\
    &   - \theta \left(h^* +\eta^*\right)' 
        \Bigg[
            (a + A X_k)
            + \theta \mathcal{A}_{k+1}^{-1} \Sigma
            \left\{
                \Xi
                - \Lambda'\left[ 
                    P_{k+1} \left(b \Delta t + \tilde{B} X_k \right) 
                    +  p_{k+1} 
                \right]
            \right\}           
        \Bigg]\Delta t
    - \frac{\theta^2}{2} \Xi'
        \mathcal{A}_{k+1}^{-1}
        \Xi \Delta t
                                    \nonumber\\
    &   - \frac{1}{2} 
            \left[ 
                P_{k+1}\left(b \Delta t + \tilde{B} X_k \right)    
                +  p_{k+1}
            \right]'\Lambda 
            \mathcal{A}_{k+1}^{-1}
            \Lambda' \left[ 
                P_{k+1}\left(b \Delta t + \tilde{B} X_k \right) 
                +  p_{k+1} 
            \right]\Delta t
    - \frac{1}{2} (\eta^*)' \Psi_k^{-1} \eta^*
\end{align} 

By the first order condition, $h^*$ satisfies
\begin{align}
    &\frac{\partial F_k}{\partial \bar h}(\bar h, \bar\gamma, \bar\eta; X_k)\Big|_{\bar h = h^*, \bar \gamma = \gamma^*, \bar\eta = \eta^*}
    = 0                                                                        \nonumber\\
    \Leftrightarrow&
    \theta \Sigma
        \left\{
            I_d
            - \theta \mathcal{A}_{k+1}^{-1} 
        \right\}
        \Sigma'\left(h^* +\eta^*\right)\Delta t
                                        \nonumber\\
    &   - \theta 
        \Bigg[
            (a + A X_k)
            + \theta \mathcal{A}_{k+1}^{-1} \Sigma
            \left\{
                \Xi
                - \Lambda'\left[ 
                    P_{k+1} \left(b \Delta t + \tilde{B} X_k \right) 
                    +  p_{k+1} 
                \right]
            \right\}           
        \Bigg]\Delta t
        = 0 
\end{align}
and $\eta^*$ satisfies
\begin{align}
    &\frac{\partial F_k}{\partial \bar \eta}(\bar h, \bar\gamma, \bar\eta; X_k)\Big|_{\bar h = h^*, \bar \gamma = \gamma^*, \bar\eta = \eta^*}
    = 0                                                                        \nonumber\\
    \Leftrightarrow&
    \theta \Sigma
        \left\{
            I_d
            - \theta \mathcal{A}_{k+1}^{-1} 
        \right\}
        \Sigma'\left(h^* +\eta^*\right)\Delta t
                                        \nonumber\\
    &   - \theta 
        \Bigg[
            (a + A X_k)
            + \theta \mathcal{A}_{k+1}^{-1} \Sigma
            \left\{
                \Xi
                - \Lambda'\left[ 
                    P_{k+1} \left(b \Delta t + \tilde{B} X_k \right) 
                    +  p_{k+1} 
                \right]
            \right\}           
        \Bigg]\Delta t
        - \Psi_k^{-1}\eta^* 
        = 0 
                                        \nonumber\\
    \Leftrightarrow&
         \frac{\partial F_k}{\partial \bar h}(\bar h, \bar\gamma, \bar\eta; X_k)\Big|_{\bar h = h^*, \bar \gamma = \gamma^*, \bar\eta = \eta^*} - \Psi_k^{-1}\eta^*
    = 0
                                        \nonumber\\
    \Leftrightarrow&
    \Psi_k^{-1}\eta^* = 0
\end{align}
which shows that $\eta^* = 0$, as expected. Here, $\eta^*$ maximizes $F_k$ in its argument $\bar \eta$ if 
\begin{align}\label{eq:SOC:etastar:Alt}
    \theta \Sigma
        \left\{
            I_d
            - \theta \mathcal{A}_{k+1}^{-1} 
        \right\}\Sigma'\Delta t
- \Psi_k^{-1} < 0.
\end{align} 
With $\eta^* = 0$, the first order condition for $\bar h$ becomes 
\begin{align}\label{eq:hstar:2:alt:app}
    &\frac{\partial F_k}{\partial \bar h}(\bar h, \bar\gamma, \bar\eta; X_k)\Big|_{\bar h = h^*, \bar \gamma = \gamma^*, \bar\eta = 0}
    = 0                                                                        \nonumber\\
    \Leftrightarrow&
    h^* = \frac{1}{\theta+1}\mathcal{C}_{k+1}^{-1}\!(\theta)
    \Bigg[
        (a + A X_k)
        + \theta \mathcal{A}_{k+1}^{-1} \Sigma
        \left\{
            \Xi
            - \Lambda'\left[ 
                P_{k+1} \left(b \Delta t + \tilde{B} X_k \right) 
                +  p_{k+1} 
            \right]
        \right\}
    \Bigg] 
\end{align}
with $\mathcal{C}_{k+1}\!(\theta)$ defined at \eqref{eq:CalC}. The function $F_k$ reaches a minimum in its argument $\bar h$ if 
\begin{align}\label{eq:SOC:hstar:Alt}
    (\theta+1) \mathcal{C}_{k+1}\!(\theta)     
    =
    \Sigma\left[
        I_d
        - \theta \left(\Lambda' P_{k+1} \Lambda \Delta t - I_d\right)^{-1} 
        \right]\Sigma' > 0.
\end{align}
This condition ensures strict convexity of $F_k$ in 
$\bar h$, hence uniqueness of $h^*$.

To get \eqref{eq:gammastar:alt}, we substitute the expression we obtained for $h^*$ and $\eta^* = 0$ into \eqref{eq:FOC:gammastar}, concluding the proof.

\end{proof}

Although Proposition \ref{prop:Controls:Alt} is stated under Assumptions \ref{as:sigma:posdef} and \ref{as:saddlepoint:cond}, the derivation naturally reveals the equivalent sufficient conditions \eqref{eq:SOC:hstar:Alt}, \eqref{eq:SOC:etastar:Alt}, and $\mathcal{A}_{k+1}<0$, which are summarized in the remark below.

\begin{remark}
    The conditions at \eqref{eq:SOC:hstar:Alt} and \eqref{eq:SOC:etastar:Alt}, together with the condition $\mathcal{A}_{k+1}<0$, provide a set of sufficient conditions for the existence of a saddle point that is equivalent to the sufficient conditions stated in Assumptions \ref{as:sigma:posdef} and \ref{as:saddlepoint:cond}. Section \ref{sec:interpretation} discusses the intuition and implications of these conditions.
\end{remark}

\subsection{Main Result}

Before proving the main result, we use the first characterization of the controls in Proposition \ref{prop:Controls} to derive a quadratic representation of $F_k$ at the saddle point $(h_k^*,\gamma_k^*,\eta_k^*)$.

\begin{proposition}\label{P2}
Under Assumptions \ref{as:sigma:posdef} and \ref{as:saddlepoint:cond}, at the saddle point $(h_k^*,\gamma_k^*,\eta_k^*)$, the function $F_k(\bar h_k,\bar\gamma_k,\bar\eta_k;X_k)$ from \eqref{eq:auxfunc:F} has the quadratic representation
\begin{align}\label{eq:auxfunc:F:optim:final}
    F_k(h_k^*, \gamma_k^*, \eta_k^*;X_k)  
    =& \frac{1}{2} X_k' \mathfrak{Q}_{k+1} X_k 
        + X_k' \mathfrak{q}_{k+1}  
        + \mathfrak{l}_{k+1},      
\end{align}
with
\begin{align}\label{eq:Qfrak}
\mathfrak{Q}_{k+1}
=&  -\theta A'\left(\Sigma\Sigma'\right)^{-1}A\Delta t
    + \theta^2 
        A'\left(\Sigma\Sigma'\right)^{-1} \Sigma 
        \mathcal{B}_{k+1}^{-1}
        \Sigma'\left(\Sigma\Sigma'\right)^{-1} A 
        \Delta t
                                    \nonumber\\
&   - 2\theta A'\left(\Sigma\Sigma'\right)^{-1}\Sigma   
        \mathcal{B}_{k+1}^{-1}
        \Lambda' P_{k+1} \tilde{B}
        \Delta t   
    + \tilde{B}' P_{k+1} \Lambda 
        \mathcal{B}_{k+1}^{-1}
        \Lambda' P_{k+1} \tilde{B}
        \Delta t,
\end{align}   
\begin{align}\label{eq:qfrak}
\mathfrak{q}_{k+1}
=&  
    - \theta A'\left(\Sigma\Sigma'\right)^{-1} a \Delta t
    + \left\{
        - \theta A'\left(\Sigma\Sigma'\right)^{-1} \Sigma 
        + \tilde{B}' P_{k+1} \Lambda 
    \right\}
    \mathcal{B}_{k+1}^{-1}
                        \nonumber\\
&   \cdot 
        \left\{
        - \theta \Sigma'\left(\Sigma\Sigma'\right)^{-1} a
        + \Lambda' \left[ 
            P_{k+1}b \Delta t +  p_{k+1}  
        \right] 
        + \theta \Xi
    \right\} 
        \Delta t 
\end{align}
\begin{align}\label{eq:lfrak}
\mathfrak{l}_{k+1}
=&  -\frac{\theta}{2}
        a'\left(\Sigma\Sigma'\right)^{-1} a \Delta t
    + \frac{1}{2} 
        \left\{
            - \theta a'\left(\Sigma\Sigma'\right)^{-1} \Sigma
            + b' P_{k+1} \Lambda \Delta t
            + p_{k+1}'\Lambda
            + \theta \Xi'
        \right\} 
        \mathcal{B}_{k+1}^{-1}
                        \nonumber\\   
&   \cdot \left\{
            - \theta \Sigma'\left(\Sigma\Sigma'\right)^{-1} a 
            + \Lambda' P_{k+1} b \Delta t 
            + \Lambda'  p_{k+1} 
            + \theta \Xi
        \right\}
        \Delta t
\end{align}
\end{proposition}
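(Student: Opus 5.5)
We evaluate $F_k$ at the stationary point of Proposition \ref{prop:Controls} and exploit the fact that $F_k$ is quadratic. First set $\eta^*_k=0$ by \eqref{eq:etastar}, which removes the term $-\frac12\bar\eta'\Psi_k^{-1}\bar\eta$. Then introduce the shorthands
\begin{align*}
\mu_k := a + A X_k, \qquad \beta_k := \Lambda'\left[P_{k+1}(b\Delta t+\tilde B X_k)+p_{k+1}\right] + \theta\Xi - \theta\Sigma'(\Sigma\Sigma')^{-1}\mu_k .
\end{align*}
With these, \eqref{eq:gammastar} reads $\gamma^*_k=\mathcal{B}_{k+1}^{-1}\beta_k$, and \eqref{eq:hstar:1} reads $h^*_k=(\Sigma\Sigma')^{-1}(\mu_k+\Sigma\gamma^*_k)$.

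Rather than substituting the full expression \eqref{eq:hstar} into $F_k$, we first eliminate $\bar h$ exactly. The $\bar h$-part of $F_k$ is $\frac\theta2 h'\Sigma\Sigma'h-\theta h'(\mu_k+\Sigma\gamma)$, so at $h^*$ it equals
\begin{align*}
-\frac\theta2(\mu_k+\Sigma\gamma)'(\Sigma\Sigma')^{-1}(\mu_k+\Sigma\gamma)\Delta t .
\end{align*}
The concentrated function of $\gamma$ is then
\begin{align*}
-\frac\theta2\mu_k'(\Sigma\Sigma')^{-1}\mu_k\Delta t-\frac12\gamma'\mathcal{B}_{k+1}\gamma\Delta t+\gamma'\beta_k\Delta t .
\end{align*}
Here the $\gamma\gamma$-terms combine as $\frac12\gamma'\mathcal{A}_{k+1}\gamma-\frac\theta2\gamma'\Sigma'(\Sigma\Sigma')^{-1}\Sigma\gamma=-\frac12\gamma'\mathcal{B}_{k+1}\gamma$. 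The linear terms collect as follows: $\theta\Xi'\gamma$ comes from the $-\theta(h'\Sigma-\Xi')\gamma$ term, $\Lambda'[\cdots]$ comes from the explicit linear term, and $-\theta\gamma'\Sigma'(\Sigma\Sigma')^{-1}\mu_k$ comes from the cross term. Together these give exactly $\beta_k$. One must check carefully that the $-\theta h'\Sigma\gamma$ term has been absorbed into the $(\mu_k+\Sigma\gamma)$ completion.

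Evaluating at $\gamma^*_k=\mathcal{B}_{k+1}^{-1}\beta_k$ gives
\begin{align*}
F_k(h^*_k,\gamma^*_k,0;X_k)=-\frac\theta2\mu_k'(\Sigma\Sigma')^{-1}\mu_k\Delta t+\frac12\beta_k'\mathcal{B}_{k+1}^{-1}\beta_k\Delta t .
\end{align*}
This uses the symmetry of $\mathcal{B}_{k+1}$, which follows from that of $P_{k+1}$. By Assumption \ref{as:saddlepoint:cond} we have $-\mathcal{A}_{k+1}>0$, so $\mathcal{B}_{k+1}>0$ and is invertible.

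The last step is to expand in $X_k$. Write $\mu_k=a+AX_k$ and
\begin{align*}
\beta_k=\beta^0+M X_k, \qquad M:=\Lambda'P_{k+1}\tilde B-\theta\Sigma'(\Sigma\Sigma')^{-1}A ,
\end{align*}
where
\begin{align*}
\beta^0:=-\theta\Sigma'(\Sigma\Sigma')^{-1}a+\Lambda'(P_{k+1}b\Delta t+p_{k+1})+\theta\Xi .
\end{align*}
Collecting terms then produces each coefficient in turn. The quadratic coefficient is $\mathfrak{Q}_{k+1}=-\theta A'(\Sigma\Sigma')^{-1}A\Delta t+M'\mathcal{B}_{k+1}^{-1}M\Delta t$; expanding $M'\mathcal{B}^{-1}M$ gives the four terms of \eqref{eq:Qfrak}, with the two cross terms merged into the factor $-2\theta$. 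This merging is legitimate as a quadratic form, and it can be symmetrized if needed. The linear coefficient $\mathfrak{q}_{k+1}$ comes from $-\theta A'(\Sigma\Sigma')^{-1}a\Delta t+M'\mathcal{B}^{-1}\beta^0\Delta t$, which matches \eqref{eq:qfrak}. The constant $\mathfrak{l}_{k+1}$ comes from $-\frac\theta2a'(\Sigma\Sigma')^{-1}a\Delta t+\frac12\beta^{0\prime}\mathcal{B}^{-1}\beta^0\Delta t$, which matches \eqref{eq:lfrak}.

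The main obstacle is the bookkeeping of the cross term in $\mathfrak{Q}$. The factor $2$ appears as written only because the statement presents a non-symmetrized matrix; in the proof we will note that only the symmetric part matters for the quadratic form, as in the footnote to Lemma \ref{L1}.
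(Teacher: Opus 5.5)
Your proposal is correct, but it reaches \eqref{eq:Qfrak}--\eqref{eq:lfrak} by a different route than the paper. The paper writes $\gamma^*_k=\mathfrak{K}_{k+1}X_k+\mathfrak{k}_{k+1}$ and $h^*_k=(\Sigma\Sigma')^{-1}[(A+\Sigma\mathfrak{K}_{k+1})X_k+(a+\Sigma\mathfrak{k}_{k+1})]$. It substitutes both affine expressions into every term of $F_k$ and collects the quadratic, linear and constant coefficients in terms of $\mathfrak{K}_{k+1},\mathfrak{k}_{k+1}$. It then asserts that inserting the definitions of $\mathfrak{K}_{k+1},\mathfrak{k}_{k+1}$ gives the stated formulas. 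The collapse in that last step, which implicitly uses $\mathcal{B}_{k+1}\mathfrak{K}_{k+1}=M$ and $\mathcal{B}_{k+1}\mathfrak{k}_{k+1}=\beta^0$ in your notation, is not shown.

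You instead eliminate $\bar h$ first by completing the square. This gives the concentrated quadratic $-\frac{\theta}{2}\mu_k'(\Sigma\Sigma')^{-1}\mu_k\Delta t-\frac12\gamma'\mathcal{B}_{k+1}\gamma\Delta t+\gamma'\beta_k\Delta t$, whose value at $\gamma^*_k=\mathcal{B}_{k+1}^{-1}\beta_k$ is immediate. Your route buys transparency: it shows $\mathcal{B}_{k+1}$ as the Schur complement left after removing the $\bar h$-block, and it explains why all $X_k$-dependence enters through the sandwich $M'\mathcal{B}_{k+1}^{-1}M$. The paper's route is purely mechanical, though it keeps $h^*_k$ and $\gamma^*_k$ separately visible. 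One small slip: your first display of the $\bar h$-part drops the factor $\Delta t$, which you restore in the next line.

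Your remark on the factor $-2\theta$ is well taken. \eqref{eq:Qfrak} is a non-symmetric representative of the quadratic form, valid only as a scalar identity given symmetric $P_{k+1}$ (hence symmetric $\mathcal{B}_{k+1}$). The paper covers this only through the footnote to Lemma \ref{L1}. It matters when $\mathfrak{Q}_{k+1}$ is fed into the recursion for $P_k$ in Theorem \ref{theo:main:recursions}, where one should symmetrize before the next step.
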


\begin{proof}

We start from the definition of $\gamma^*$ at \eqref{eq:gammastar} and write, for convenience
\begin{align}
    \gamma^*_k &= \mathfrak{K}_{k+1} X_k + \mathfrak{k}_{k+1},
\end{align}
where

\begin{align}
    \mathfrak{K}_{k+1} &= \mathcal{B}_{k+1}^{-1}\left\{
        - \theta \Sigma'\left(\Sigma\Sigma'\right)^{-1} A 
        + \Lambda' P_{k+1} \tilde{B}
        \right\}
                                \nonumber\\
    \mathfrak{k}_{k+1} &= \mathcal{B}_{k+1}^{-1}\left\{
        - \theta \Sigma'\left(\Sigma\Sigma'\right)^{-1} a 
        + \Lambda' P_{k+1} b \Delta t 
        + \Lambda'  p_{k+1} 
        + \theta \Xi
    \right\}.
\end{align}
Next, we use equation \eqref{eq:hstar:1} to express $h^*$ in terms of $\gamma^*$. Recalling that $\eta^* = 0$, equation \eqref{eq:hstar:1} implies that 
\begin{align}
    h^*_k 
    =&  \left(\Sigma\Sigma'\right)^{-1}\left[ 
            (a + A X_k)
       +  \Sigma \gamma^*
       \right] 
                        \nonumber\\
    =&  \left(\Sigma\Sigma'\right)^{-1} \left[
        \left( A + \Sigma \mathfrak{K}_{k+1} \right) X_k
       + \left(a + \Sigma \mathfrak{k}_{k+1}\right)
        \right] 
\end{align}
Then, at the saddle point $(h^*_k, \gamma^*_k, \eta^*_k)$, we express the function $F_k$, which was defined at \eqref{eq:auxfunc:F}, as:
\begin{align}\label{eq:F:statpoint:interm2}
    & F_k(h^*_k, \gamma^*_k, \eta^*_k ; X_k)
                                    \nonumber\\
    =&  X_k' \Bigg\{ 
        \frac{\theta}{2} 
        \left( A + \Sigma \mathfrak{K}_{k+1} \right)' 
        \left(\Sigma\Sigma'\right)^{-1}
        \left( A + \Sigma \mathfrak{K}_{k+1} \right)  \Delta t
        - 
        \theta A' \left(\Sigma\Sigma'\right)^{-1}  
        \left( A + \Sigma \mathfrak{K}_{k+1} \right)
        \Delta t
                                    \nonumber\\ 
    & - \theta\left( A + \Sigma \mathfrak{K}_{k+1} \right)'  
            \left(\Sigma\Sigma'\right)^{-1}\Sigma 
            \mathfrak{K}_{k+1} \Delta t
        + \frac{1}{2} 
            \mathfrak{K}_{k+1}'\left[
                \Lambda' P_{k+1} \Lambda \Delta t - I_d
            \right] 
            \mathfrak{K}_{k+1} \Delta t
                                    \nonumber\\ 
    &   + \mathfrak{K}_{k+1}' \Lambda' P_{k+1} \tilde{B} \Delta t                                   
       \Bigg\}X_k
                                    \nonumber\\
    & + \Bigg\{
        \theta 
        \left(a + \Sigma \mathfrak{k}_{k+1}\right)' 
        \left(\Sigma\Sigma'\right)^{-1}
        \left( A + \Sigma \mathfrak{K}_{k+1} \right)
        \Delta t
        - 
        \theta \left(a + \Sigma \mathfrak{k}_{k+1}\right)' \left(\Sigma\Sigma'\right)^{-1} A \Delta t
                                    \nonumber\\ 
    & - \theta a' \left(\Sigma\Sigma'\right)^{-1}  
        \left( A + \Sigma \mathfrak{K}_{k+1} \right) \Delta t
        - 
        \theta \mathfrak{k}_{k+1}'\Sigma'
            \left(\Sigma\Sigma'\right)^{-1}
            \left( A + \Sigma \mathfrak{K}_{k+1} \right) \Delta t
                                    \nonumber\\ 
    & - \theta\left(a + \Sigma \mathfrak{k}_{k+1}\right)'
            \left(\Sigma\Sigma'\right)^{-1}\Sigma 
            \mathfrak{K}_{k+1} \Delta t
        + \theta \Xi' \mathfrak{K}_{k+1} \Delta t
                                    \nonumber\\
    & + \mathfrak{k}_{k+1}' 
            \left[
                \Lambda' P_{k+1} \Lambda \Delta t - I_d
            \right] 
        \mathfrak{K}_{k+1} \Delta t
        + \left[ 
            P_{k+1} b \Delta t +  p_{k+1} 
        \right]' \Lambda \mathfrak{K}_{k+1} \Delta t        
                                    \nonumber\\
    &   + \mathfrak{k}_{k+1}' \Lambda' P_{k+1}\tilde{B} \Delta t
        \Bigg\}X_k
                                    \nonumber\\
    & + \Bigg\{
        \frac{\theta}{2}
        \left(a + \Sigma \mathfrak{k}_{k+1}\right)' 
        \left(\Sigma\Sigma'\right)^{-1}
        \left(a + \Sigma \mathfrak{k}_{k+1}\right)
        \Delta t
        - 
        \theta a' \left(\Sigma\Sigma'\right)^{-1}  
        \left(a + \Sigma \mathfrak{k}_{k+1}\right)\Delta t
                                    \nonumber\\ 
    & - \theta \left(a + \Sigma \mathfrak{k}_{k+1}\right)'
        \left(\Sigma\Sigma'\right)^{-1}\Sigma 
        \mathfrak{k}_{k+1}\Delta t  
      + \theta \Xi'\mathfrak{k}_{k+1}\Delta t  
                                    \nonumber\\ 
    & + \frac{1}{2} \mathfrak{k}_{k+1}' 
            \left[
                \Lambda' P_{k+1} \Lambda \Delta t - I_d
            \right] 
            \mathfrak{k}_{k+1} \Delta t
        + \mathfrak{k}_{k+1}' \Lambda'
            \left[ 
                P_{k+1} b \Delta t +  p_{k+1} 
            \right] \Delta t
       \Bigg\}
\end{align}
The function $F_k$ is quadratic in $X_k$ at the saddle point $(h^*_k,\gamma^*_k,\eta^*_k)$. We can write it as 
\begin{align}
    F_k(h_k^*, \gamma_k^*, \eta_k^*;X_k)  
    =& \frac{1}{2} X_k' \mathfrak{Q}_{k+1} X_k 
        + X_k' \mathfrak{q}_{k+1} 
        + \mathfrak{l}_{k+1}
\end{align}
where $\mathfrak{Q}_{k+1}$, $\mathfrak{q}_{k+1}$, and $\mathfrak{l}_{k+1}$ are respectively the coefficients of the quadratic term, linear term, and zeroth-order term in \eqref{eq:F:statpoint:interm2}.

Identifying coefficients of the quadratic, linear, and constant terms in \eqref{eq:F:statpoint:interm2}, and then substituting the expressions for $\mathfrak{K}_{k+1}$ and $\mathfrak{k}_{k+1}$, gives \eqref{eq:Qfrak}--\eqref{eq:lfrak}. 

\end{proof}

We can now state our main result. 

\begin{theorem}\label{theo:main:recursions}
Under Assumptions \ref{as:sigma:posdef} and \ref{as:saddlepoint:cond}, the value function $u$ has, as mentioned in \eqref{eq:Phi:quadform0}, a quadratic form of the type
\begin{align}\label{eq:Phi:quadform}
    u_k(X_k) = \frac{1}{2} X_k' P_k X_k + X_k' p_k + r_k,
    \quad k=0, \ldots, K.
\end{align}
where $P_k$, $p_k$, and $r_k$ are deterministic and satisfy the following backward recursions
\begin{enumerate}[(i)]   
\item
\begin{align}\label{eq:recursion:P}
    P_k 
    =&  -\theta A'\left(\Sigma\Sigma'\right)^{-1}A\Delta t
    + \theta^2 
        A'\left(\Sigma\Sigma'\right)^{-1} \Sigma 
        \mathcal{B}_{k+1}^{-1}
        \Sigma'\left(\Sigma\Sigma'\right)^{-1} A 
        \Delta t
                                    \nonumber\\
&   - 2\theta A'\left(\Sigma\Sigma'\right)^{-1}\Sigma   
        \mathcal{B}_{k+1}^{-1}
        \Lambda' P_{k+1} \tilde{B}
        \Delta t   
    + \tilde{B}' P_{k+1} \Lambda 
        \mathcal{B}_{k+1}^{-1}
        \Lambda' P_{k+1} \tilde{B}
        \Delta t 
                                    \nonumber\\
    &   + \tilde{B}' P_{k+1} \tilde{B},
                                    \nonumber\\
    P_T &= 0,
\end{align}

\item
\begin{align}\label{eq:recursion:p}
    p_k 
    =&  - \theta A'\left(\Sigma\Sigma'\right)^{-1} a \Delta t
    + \left\{
        - \theta A'\left(\Sigma\Sigma'\right)^{-1} \Sigma 
        + \tilde{B}' P_{k+1} \Lambda 
    \right\}
    \mathcal{B}_{k+1}^{-1}
                        \nonumber\\
&   \cdot 
        \left\{
        - \theta \Sigma'\left(\Sigma\Sigma'\right)^{-1} a
        + \Lambda' \left[ 
            P_{k+1}b \Delta t +  p_{k+1}  
        \right] 
        + \theta \Xi
    \right\} 
    \Delta t
    + \left[ 
            \tilde{B}' \left(P_{k+1} b  
            + p_{k+1}\frac{1}{\Delta t} \right)
            + \theta C'
        \right] \Delta t,
                                \nonumber\\
        p_T =& 0,            
\end{align}

\item
\begin{align}\label{eq:recursion:r}
    r_k 
    =&  r_{k+1}
    -\frac{\theta}{2}
        a'\left(\Sigma\Sigma'\right)^{-1} a \Delta t
    + \frac{1}{2} 
        \left\{
            - \theta a'\left(\Sigma\Sigma'\right)^{-1} \Sigma
            + b' P_{k+1} \Lambda \Delta t
            + p_{k+1}'\Lambda
            + \theta \Xi'
        \right\} 
        \mathcal{B}_{k+1}^{-1}
                        \nonumber\\   
&   \cdot \left\{
            - \theta \Sigma'\left(\Sigma\Sigma'\right)^{-1} a 
            + \Lambda' P_{k+1} b \Delta t 
            + \Lambda'  p_{k+1} 
            + \theta \Xi
        \right\}
        \Delta t,
                                    \nonumber\\
                                    \nonumber\\
    &  + \left[ 
            \frac{1}{2} b' P_{k+1} b \Delta t
            + b' p_{k+1}   
            + \theta c 
            + \frac{\theta}{2} \tr\left(\Psi_k\Sigma\Sigma'\right) 
            - \frac{\theta}{2}\Xi'\Xi
            +\frac{1}{2} \tr\left(\Lambda'P_{k+1}\Lambda\right) 
        \right]\Delta t
                                    \nonumber\\
    & r_T = 0.              
\end{align}
\end{enumerate}

Furthermore, the candidate optimal controls are given by the saddle point $(h^*,\gamma^*,\eta^*)$ of $F(\bar h,\bar\gamma,\bar\eta)$ according to Proposition \ref{prop:Controls} (formulas at \eqref{eq:hstar}-\eqref{eq:etastar}).

\end{theorem}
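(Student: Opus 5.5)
The proof is a backward induction on $k$, starting from the terminal condition \eqref{eq:DPP:TC}. At $k=K$ we have $u_K\equiv 0$, so \eqref{eq:Phi:quadform} holds with $P_K=0$, $p_K=0$, $r_K=0$; these are the terminal values $P_T,p_T,r_T$ in the statement. For the inductive step, assume that for some $k\le K-1$ the function $u_{k+1}$ is quadratic with deterministic coefficients $P_{k+1}$ (taken symmetric), $p_{k+1}$ and $r_{k+1}$. Lemma \ref{L1} then applies and gives $\mathbf{E}^{\bar\gamma,\bar\eta}_{k,X_k}[u_{k+1}(X_{k+1})]$ explicitly. Since $P_{k+1}$ is deterministic, so are $\mathcal{A}_{k+1}$ and $\mathcal{B}_{k+1}$. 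Assumption \ref{as:saddlepoint:cond} at index $k$ is exactly the curvature condition that Propositions \ref{prop:SaddlePoint}, \ref{prop:Controls} and \ref{P2} use at that step, so all three may be invoked with these coefficients.

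By Proposition \ref{prop:SaddlePoint}(i), $u_k(X_k)$ equals $\inf_{\bar h}\sup_{\bar\gamma,\bar\eta}F_k$ plus an explicit remainder that is quadratic in $X_k$. By Proposition \ref{prop:SaddlePoint}(ii)--(iii), the inf-sup is attained at the stationary point $(h_k^*,\gamma_k^*,\eta_k^*)$ of Proposition \ref{prop:Controls}, so it equals $F_k(h_k^*,\gamma_k^*,\eta_k^*;X_k)$. Proposition \ref{P2} evaluates this as $\frac12X_k'\mathfrak{Q}_{k+1}X_k+X_k'\mathfrak{q}_{k+1}+\mathfrak{l}_{k+1}$. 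It then remains to expand the remainder,
\begin{align*}
&\tfrac12(b\Delta t+\tilde BX_k)'P_{k+1}(b\Delta t+\tilde BX_k)+(b\Delta t+\tilde BX_k)'p_{k+1}+\theta(c+CX_k)\Delta t\\
&\quad+\tfrac{\theta}{2}\tr(\Psi_k\Sigma\Sigma')\Delta t-\tfrac{\theta}{2}\Xi'\Xi\Delta t+\tfrac12\tr(\Lambda'P_{k+1}\Lambda)\Delta t+r_{k+1},
\end{align*}
and collect powers of $X_k$:
\begin{enumerate}[(i)]
\item The quadratic part of the remainder is $\frac12X_k'\tilde B'P_{k+1}\tilde BX_k$. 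Adding this to $\mathfrak Q_{k+1}$ gives \eqref{eq:recursion:P}.
\item The linear part is $X_k'[\tilde B'P_{k+1}b\Delta t+\tilde B'p_{k+1}+\theta C'\Delta t]$, where we read $CX_k=X_k'C'$. This matches the bracket in \eqref{eq:recursion:p} once $\Delta t$ is factored out, and adding $\mathfrak q_{k+1}$ gives \eqref{eq:recursion:p}.
\item The constant part, added to $\mathfrak l_{k+1}$ and $r_{k+1}$, gives \eqref{eq:recursion:r}. The $b'P_{k+1}b(\Delta t)^2/2$ term is written as $\frac12 b'P_{k+1}b\Delta t\cdot\Delta t$.
\end{enumerate}
All the new coefficients are built from deterministic matrices and vectors, so they are deterministic. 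If needed, $P_k$ can be replaced by its symmetric part, as in the footnote to Lemma \ref{L1}. This closes the induction, and the controls are those of Proposition \ref{prop:Controls}.

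The main obstacle is justifying that the value of the one-step game really equals $F_k$ at the stationary point, i.e.\ that the pointwise minimax in the dynamic programming recursion \eqref{eq:DPP:DPP} is legitimate over the admissible classes. Two points need checking. First, since $\gamma_k^*$ and $\eta_k^*=0$ are affine in $X_k$, which is Gaussian, they are square integrable, and a piecewise-constant affine-Gaussian drift gives a true martingale in Definition \ref{def:class:AGamma}(iii). For the latter I would use a conditional Novikov argument on each interval $[t_k,t_{k+1})$, where $\gamma$ is $\mathcal F_{t_k}$-measurable and fixed; this also supports Assumption \ref{as:Gamma:Eta}. Second, $h_k^*$ must be square integrable, which holds for the same reason, so it belongs to $\bar{\mathcal A}^H_{\mathrm{expl}}$. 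A secondary risk is bookkeeping in the linear term: the cross terms of $\mathfrak q_{k+1}$ must combine consistently with the remainder. Since Proposition \ref{P2} is given, I would only verify this by matching the $\tilde B'$ and $C'$ contributions.
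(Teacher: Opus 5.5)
Your proposal is the paper's proof with the induction written out. You take $u_K\equiv 0$ as the base case, use Lemma~\ref{L1} and Proposition~\ref{prop:SaddlePoint}(i) to reduce $u_k$ to $\inf\sup F_k$ plus a remainder, evaluate $F_k$ at the stationary point with Proposition~\ref{P2}, and match the quadratic, linear and constant parts of the remainder, exactly as the paper does; your square-integrability/Novikov check only fleshes out the admissibility claim the paper asserts in one line. Be aware, though, that the step both arguments take on trust, Proposition~\ref{prop:SaddlePoint}(ii), needs $F_k$ to be jointly concave in $(\bar\gamma,\bar\eta)$, and because of the cross term $-\theta\bar\eta'\Sigma\bar\gamma\,\Delta t$ this is the Schur-complement condition \eqref{eq:SOC:etastar:Alt}, which is strictly stronger than Assumption~\ref{as:saddlepoint:cond}; for example, at $k=K-1$, where $\mathcal{A}_K=-I_d$, it reads $\Psi_{K-1}^{-1}>\theta(1+\theta)\Sigma\Sigma'\Delta t$ rather than $\Psi_{K-1}^{-1}>\theta\Sigma\Sigma'\Delta t$, and when only the weaker one holds the inner supremum can be $+\infty$, so the ``main obstacle'' you single out is real and is not settled by admissibility alone.
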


\begin{proof} From Propositions \ref{prop:SaddlePoint}  and \ref{P2}, we obtain:
\begin{align}
    u_k(X_k) 
=& \inf_{\bar h } \sup_{\bar \gamma, \bar \eta} \left\{ F_k(\bar h,\bar\gamma,\bar\eta;X_k)
    \right\} 
        + \frac{1}{2}\left(b \Delta t + \tilde{B} X_k\right)' P_{k+1} \left(b \Delta t + \tilde{B} X_k\right)
        + \left(b \Delta t + \tilde{B} X_k \right)'p_{k+1}    
                                    \nonumber\\
    &  + \theta (c + C X_k) \Delta t   
        + \frac{\theta}{2} \tr\left( \Psi_k\Sigma\Sigma'\right)\Delta t 
        - \frac{\theta}{2}\Xi'\Xi \Delta t
        +\frac{1}{2} \tr\left(\Lambda' P_{k+1}\Lambda\right)\Delta t
        + r_{k+1}
                                    \nonumber\\
=&  \frac{1}{2} X_k' \mathfrak{Q}_{k+1} X_k 
        + X_k' \mathfrak{q}_{k+1}  
        + \mathfrak{l}_{k+1}
        + \frac{1}{2}  X_k' \tilde{B}' P_{k+1} \tilde{B} X_k
        + \left[ 
            b' P_{k+1} \tilde{B} 
            + p_{k+1}' \tilde{B}\frac{1}{\Delta t} 
            + \theta C
        \right]X_k \Delta t
                                    \nonumber\\
    &  + \left[ 
            \frac{1}{2} b' P_{k+1} b \Delta t
            + b' p_{k+1}   
            + \theta c 
            + \frac{\theta}{2} \tr\left(\Psi_k\Sigma\Sigma'\right) 
            - \frac{\theta}{2}\Xi'\Xi
            +\frac{1}{2} \tr\left(\Lambda'P_{k+1}\Lambda\right) 
        \right]\Delta t
        + r_{k+1}
\end{align}
with $\mathfrak{Q}_{k+1}$, $\mathfrak{q}_{k+1}$, $\mathfrak{l}_{k+1}$ given at \eqref{eq:Qfrak}--\eqref{eq:lfrak}. Equate this expression with the quadratic form at \eqref{eq:Phi:quadform}. The recursions for $P_k$, $p_k$, and $r_k$ at \eqref{eq:recursion:P}-\eqref{eq:recursion:r} follow.

Moreover, the candidate controls given by the saddle point $(h^*,\gamma^*,\eta^*)$ of $F(\bar h,\bar\gamma,\bar\eta)$ given at \eqref{eq:hstar}-\eqref{eq:etastar} are admissible, in the sense that $h^* \in \bar{\mathcal A}^H_{\text{expl}}$, $\gamma^* \in \bar{\mathcal A}^{\bar \Gamma}$, and $\eta^* \in \bar{\mathcal A}^{\bar \eta}$, so they are optimal. Consequently, 
\begin{align}
    & \sup_{\bar\gamma, \bar\eta} \mathbf{E}^{\bar\gamma, \bar\eta} \left[ -\theta (\bar R^\pi_T-R_0)  
    		-\frac{1}{2} \sum_{k=0}^{K-1} \left( \|\bar\gamma_k\|^2 \Delta t + \bar\eta_k' \Psi_k^{-1} \bar\eta_k \right)
   	 \right]
                                \nonumber\\
    &=\sup_{\bar \gamma \in \mathcal{A}^{\bar \Gamma}, \bar\eta \in \mathcal{A}^{\bar \eta}} \mathbf{E}^{\bar\gamma, \bar\eta} \left[ -\theta (\bar R^\pi_T-R_0)  
    		-\frac{1}{2} \sum_{k=0}^{K-1} \left( \|\bar\gamma_k\|^2 \Delta t + \bar\eta_k' \Psi_k^{-1} \bar\eta_k \right)
   	 \right],
\end{align}
which shows that Assumption \ref{as:Gamma:Eta} holds. 
\end{proof}

\begin{corollary}
    We have
    \begin{align}\label{eq:optI}
        \inf_{\bar h \in \bar{\mathcal A}^{H}_{\mathrm{expl}}} I(H,\theta) = \exp\left\{\frac{1}{2} X_0' P_0 X_0 + X_0' p_0 + r_0 \right\}
    \end{align}
    and 
    \begin{align}\label{eq:optJ}
        \sup_{\bar h \in \bar{\mathcal A}^{H}_{\mathrm{expl}}} J(H,\theta) = -\frac{1}{\theta}\left( \frac{1}{2} X_0' P_0 X_0 + X_0' p_0 + r_0\right)
    \end{align}
    with $P_k$, $p_k$, and $r_k$ given by \eqref{eq:recursion:P},\eqref{eq:recursion:p}, and \eqref{eq:recursion:r}, and optimal controls given by the saddle point $(h^*,\gamma^*,\eta^*)$ according to Proposition \ref{prop:Controls}.
\end{corollary}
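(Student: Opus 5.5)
The corollary follows by combining the game representation \eqref{criterion1} with Theorem \ref{theo:main:recursions}. By \eqref{eq:EEDuality:inf}--\eqref{criterion1}, which rest on the free energy--entropy duality (Proposition \ref{prop:FEEDuality}) together with Assumption \ref{as:Gamma:Eta}, we have $\inf_{\bar h} I(\bar h,\theta)=\exp\{u(T;\theta)\}$. Here $u(T;\theta)$ is the value of the penalized game \eqref{eq:criterion:I:Pbar}. By definition $u(T;\theta)=u_0(X_0)$, where $u_0$ is produced by the backward dynamic programming recursion \eqref{eq:DPP:TC}--\eqref{eq:DPP:DPP}.

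\textbf{Identifying $u_0$.} Theorem \ref{theo:main:recursions} gives $u_0(X_0)=\frac12 X_0'P_0X_0+X_0'p_0+r_0$. The coefficients come from the recursions \eqref{eq:recursion:P}--\eqref{eq:recursion:r}, started at $P_K=0$, $p_K=0$, $r_K=0$. Substituting this into \eqref{criterion1} gives \eqref{eq:optI} immediately. The same theorem also establishes that Assumption \ref{as:Gamma:Eta} holds, because the saddle point controls are admissible. So the restriction to the admissible dual classes used in \eqref{eq:criterion:I:Pbar:step2} is legitimate, and the identity is not circular.

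\textbf{From $I$ to $J$.} Recall that $I=e^{-\theta J}$, so $J=-\frac1\theta\ln I$. Since $\theta>0$, the map $x\mapsto-\frac1\theta\ln x$ is strictly decreasing and continuous on $(0,\infty)$. It therefore converts an infimum over $\bar h$ into a supremum:
\[
\sup_{\bar h}J=-\tfrac1\theta\ln\inf_{\bar h}I .
\]
This is the same monotonicity argument as the one cited from \citet{Meneghini1994} in \eqref{eq:EEDuality:inf}. Applying it to \eqref{eq:optI} yields \eqref{eq:optJ}.

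\textbf{Optimality of the controls and the main obstacle.} The saddle point of Proposition \ref{prop:Controls} attains both the inf and the sup at each stage, by Proposition \ref{prop:SaddlePoint}(iii). A standard verification argument then shows that the policy $(h^*_k)$ attains $u_0(X_0)$. The argument sums the one-step saddle inequalities along the trajectory and takes expectations, which uses that $u_{k+1}$ is quadratic, that $X_{k+1}$ is Gaussian, and the integrability remark. Hence $h^*$ is optimal for the original criterion.

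The only delicate point is making sure the stagewise inf--sup genuinely composes into the global inf--sup over processes, so that the dynamic programming principle is valid for this game. I would handle this with that verification argument. Its ingredients are the convex--concave structure guaranteed by Assumptions \ref{as:sigma:posdef} and \ref{as:saddlepoint:cond}, together with the fact that the feedback controls are affine in $X_k$ and therefore square-integrable.
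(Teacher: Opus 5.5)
Your route is exactly the paper's proof. You identify $\inf_{\bar h}I$ with $\exp\{u(T;\theta)\}$ through \eqref{criterion1}, read off $u_0(X_0)$ from Theorem~\ref{theo:main:recursions}, and pass to $J$ through the decreasing map $x\mapsto-\frac1\theta\ln x$. The last two steps are fine. The first step has a genuine gap, and you share it with the paper.

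You justify \eqref{criterion1} by saying the theorem establishes Assumption~\ref{as:Gamma:Eta} ``because the saddle point controls are admissible.'' That does not follow. Admissibility of $(\gamma^*,\eta^*)$ within the class of $\mathcal{F}_{t_k}$-measurable drifts held constant on $[t_k,t_{k+1})$ says nothing about whether that class attains the supremum over all $\mathbb{P}^\gamma\ll\mathbb{P}$ in Proposition~\ref{prop:FEEDuality}. In general it does not.

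For $k\le K-2$ the continuation value is quadratic in $w_k$, because $X_{k+1}=b\Delta t+\tilde BX_k+\Lambda w_k$. The optimal Gibbs measure $e^{\psi}/\mathbf{E}[e^{\psi}]$ therefore tilts the law of $w_k$ by $\exp\{\frac12 w_k'\Lambda'P_{k+1}\Lambda w_k+\dots\}$, which changes its conditional covariance. Under every $\mathbb{P}^{\bar\gamma,\bar\eta}$, by contrast, $w_k\sim\mathcal{N}(\bar\gamma_k\Delta t,\Delta t\,I_d)$ given $\mathcal{F}_{t_k}$.

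Suppose you compute each stage as an exact Gaussian integral instead. The $X_k$- and $\bar h_k$-dependent terms coincide with those of the mean-shift game, since both produce $-\mathcal{A}_{k+1}^{-1}$; so $P_k$, $p_k$ and $h^*$ survive. The constant, however, picks up $-\frac12\ln\det(I_d-\Lambda'P_{k+1}\Lambda\Delta t)$ instead of the $\frac12\tr(\Lambda'P_{k+1}\Lambda)\Delta t$ in \eqref{eq:recursion:r}. Since $-\ln\det(I-M)\ge\tr M$, with equality only for $M=0$, the correct identity is
$\ln\inf_{\bar h}I=\frac12X_0'P_0X_0+X_0'p_0+r_0+\sum_{k=0}^{K-2}\big[-\frac12\ln\det(I_d-\Lambda'P_{k+1}\Lambda\Delta t)-\frac12\tr(\Lambda'P_{k+1}\Lambda)\Delta t\big]$.
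This strictly exceeds the exponent in \eqref{eq:optI} as soon as some $\Lambda'P_{k+1}\Lambda\neq0$.

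\textbf{Counterexample.} Take $n=m=1$ (so $d=3$), $K=2$, $\Delta t=\theta=1$, $\Sigma=(1,0,0)$, $\Lambda=(0,\sqrt2,0)$, $A=1$, $a=b=c=0$, $B=C=0$, $\Xi=0$, and $\Psi_0=\Psi_1=\frac14$.
\begin{itemize}
\item The recursions give $P_1=-\frac12$ and $\mathcal{A}_1=\mathrm{diag}(-1,-2,-1)$, so Assumptions~\ref{as:sigma:posdef} and~\ref{as:saddlepoint:cond} hold. They also give $P_0=-\frac34$, $p_0=0$ and $r_0=\frac14-\frac12$.
\item Minimizing $\mathbf{E}[e^{-\theta(\bar R^\pi_T-R_0)}]$ directly by backward induction gives $\bar h_1=X_1/2$. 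Then $\mathbf{E}[e^{-\frac14(X_0+\sqrt2 w)^2}]=2^{-1/2}e^{-X_0^2/8}$ for $w\sim\mathcal{N}(0,1)$, and $\bar h_0=X_0/2$.
\item Hence $\ln\inf_{\bar h}I=-\frac38X_0^2+\frac14-\frac12\ln2$, not $-\frac38X_0^2+\frac14-\frac12$.
\end{itemize}

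Your verification argument cannot close this gap. It shows that $h^*$ attains $u_0(X_0)$ in the restricted game, not that the restricted game value equals $\ln\inf_{\bar h}I$. It would also need joint concavity of $F_k$ in $(\bar\gamma,\bar\eta)$. Assumption~\ref{as:saddlepoint:cond} does not provide this, because of the cross term $-\theta\bar\eta'\Sigma\bar\gamma\Delta t$; the right condition is \eqref{eq:SOC:etastar:Alt} together with $\mathcal{A}_{k+1}<0$.

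To obtain a correct version of \eqref{eq:optI}--\eqref{eq:optJ}, you must do one of two things. Either replace the trace term in the $r$-recursion by the log-determinant, or enlarge the dual class to drifts that vary within each rebalancing interval and redo the dynamic programming.
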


\begin{proof}
Start from the criterion $I$:
\begin{align}
    \inf_{\bar h \in \bar{\mathcal A}^{H}_{\mathrm{expl}}} I(H,\theta)
    &\underset{\eqref{eq:EEDuality:inf}}{=} \exp\left\{ \inf_{\bar h \in \bar{\mathcal A}^{H}_{\mathrm{expl}}} \sup_{\bar\gamma \in \mathcal{A}^{\bar \Gamma},\bar\eta \in \mathcal{A}^{\bar \eta}} \mathbf{E}^{\bar\gamma, \bar\eta} \Bigg[ \theta \sum_{k=0}^{K-1} g(X_k, \bar h_k, \bar\eta_k, \bar \gamma_k)\Delta t \Bigg] \right\}
                        \nonumber\\
    &\underset{\eqref{eq:criterion:I:Pbar}}{=}
    \exp \left\{ u_0(X_0) \right\}
                            \nonumber\\
    &\underset{\eqref{eq:Phi:quadform} \, \text{at} \, k=0}{=}
    \exp \left\{ \frac{1}{2} X_0' P_0 X_0 + X_0' p_0 + r_0 \right\},
\end{align} 
which gives \eqref{eq:optI}. Furthermore, 
\begin{align}
    \ln \inf_{\bar h \in \bar{\mathcal A}^{H}_{\mathrm{expl}}} I(H,\theta)
    \underset{\eqref{eq:EEDuality:inf}}{=} 
    \inf_{\bar h \in \bar{\mathcal A}^{H}_{\mathrm{expl}}} \ln I(H,\theta)
    \underset{\eqref{eq:I}}{=} \inf_{\bar h \in \bar{\mathcal A}^{H}_{\mathrm{expl}}} -\theta J(H,\theta)
    = -\theta \sup_{\bar h \in \bar{\mathcal A}^{H}_{\mathrm{expl}}}  J(H,\theta)
                            \nonumber\\
\end{align}
from which \eqref{eq:optJ} follows.
\end{proof}

\subsection{Limiting Case: The Kelly Criterion}\label{sec:KellyandFKS:Kelly}

The Kelly portfolio allocation problem involves the maximization of the expected log utility. It leads to a stochastic control problem that can be seen either as the limiting case of the Linear-Quadratic-Gaussian problem studied above when $\theta \to 0$ or as an LQG control problem with the criterion
\begin{align}\label{eq:J:Kelly} 
J^\text{Kelly}(H)
&:= \mathbf{E} \left[ R_T - R_0\right]
                        \nonumber\\
&= \mathbf{E} \left[ \sum_{k=0}^{K-1} \left\{ 
    \left(- \frac{1}{2} h_k'\Sigma\Sigma'h_k 
    + h_k'a 
    + \frac{1}{2}\Xi'\Xi- c \right)  
    + \left(h_k' A - C \right)X_k\right\} \Delta t
    \right.
                                \nonumber\\
    &\left.
    + \sum_{k=0}^{K-1} \left(h_k'\Sigma - \Xi' \right) w_k
  \right]
                          \nonumber\\
&= \mathbf{E} \left[ \sum_{k=0}^{K-1} \left\{ 
    \left(- \frac{1}{2} h_k'\Sigma\Sigma'h_k 
    + h_k'a 
    + \frac{1}{2}\Xi'\Xi- c \right)  
    + \left(h_k' A - C \right)X_k\right\} \Delta t
    \right].
\end{align}

The instantaneous reward function has a quadratic structure, so the control problem is LQG. We can solve it directly under the measure $\mathbb{P}$. 

We define the optimal value function $u^\text{Kelly}$ for the Kelly portfolio allocation problem as
\begin{align}\label{eq:valuefunction:u:Kelly}
    u^\text{Kelly}_0(X_0) 
    := \sup_{H \in \mathcal{A}^H} \mathbf{E} \left[ R_T-R_0\right] 
    = \sup_{H \in \mathcal{A}^H} \mathbf{E} \left[ \sum_{k=0}^{K-1} g^\text{Kelly}\left(X_k, h_k\right) \Delta t
  \right].
\end{align}
where
\begin{align}\label{eq:g:Kelly}
    g^\text{Kelly}\left(X_k, h_k\right) 
    :=  \left(- \frac{1}{2} h_k'\Sigma\Sigma'h_k 
    + h_k'a 
    + \frac{1}{2}\Xi'\Xi- c \right)  
    + \left(h_k' A - C \right)X_k.
\end{align}
The state process $X$ evolves according to \eqref{eq:state}, that is,  
\begin{align}
    X_{k+1} = b \Delta t + \tilde{B} X_k + \Lambda w_k.
\end{align}
Importantly, the state is uncontrolled.

We apply the Dynamic Programming Principle (DPP) to express the value function recursively as:
\begin{align}\label{eq:DPP:TC:Kelly}
u^\text{Kelly}_T(X_T)=& 0,
\end{align}
and, for $k=K-1,\cdots,0$, 
\begin{align}\label{eq:DPP:DPP:Kelly} 
u^\text{Kelly}_k(X_k)
    =&\sup_{h_k} \mathbf{E}_{k,X_k} \left[
        g^\text{Kelly}\left(X_k, h_k\right) \Delta t
        + u^\text{Kelly}_{k+1}(X_{k+1}) \right]
                                  \nonumber\\ 
    =&\sup_{h_k}  \left\{
        g^\text{Kelly}\left(X_k, h_k\right)\Delta t
        + \mathbf{E}_{k,X_k} \left[ u^\text{Kelly}_{k+1}(X_{k+1}) \right]\right\}.
\end{align}
We can show that $u^\text{Kelly}_k(X_k)$ has a quadratic expression in $X_k$ of the form 
\begin{equation}\label{eq:Phi:quadform0:Kelly}
    u^\text{Kelly}_k(X_k) = \frac{1}{2}X_k' P^\text{Kelly}_k X_k + X_k' p^\text{Kelly}_k + r^\text{Kelly}_k,
\end{equation}
where $P^\text{Kelly}_k$, $p^\text{Kelly}_k$, and $r^\text{Kelly}_k$ can be computed recursively.

Since $X_{k+1}$ does not depend on $h_k$, the continuation value $u^\text{Kelly}_{k+1}(X_{k+1})$ does not affect the first-order condition. Therefore, we can get the optimal control directly from a pointwise maximization of the function $g^\text{Kelly}$. We obtain
\begin{align}\label{eq:h:Kelly}
    h^\text{Kelly}_k = \left(\Sigma\Sigma'\right)^{-1}(a + A X_k).
\end{align}

\section{Interpreting the Risk-Sensitive Investment Management Model}\label{sec:interpretation}

\subsection{Interpreting Assumptions \ref{as:sigma:posdef} and \ref{as:saddlepoint:cond}}\label{sec:OptimalityConditions}

Assumptions \ref{as:sigma:posdef} and \ref{as:saddlepoint:cond} are necessary for Propositions \ref{prop:SaddlePoint} and \ref{prop:Controls} to hold. 

Assumption \ref{as:sigma:posdef} is a standard condition in portfolio optimization. It prevents the investment universe from containing assets with identical risk profiles, such as a commodity and a futures contract on that commodity (e.g., oil and oil futures), or the S\&P 500 alongside its value and growth subindices.

Assumption \ref{as:saddlepoint:cond} requires the block matrix to be negative definite, which is equivalent to satisfying both  $\mathcal{A}_{k+1}  = \Lambda' P_{k+1} \Lambda \Delta t - I_d <0$ and $\theta \Sigma\Sigma' \Delta t - \Psi_k^{-1} < 0$. The first subcondition, $\Lambda' P_{k+1} \Lambda \Delta t - I_d <0$, is equivalent to $\Lambda' P_{k+1} \Lambda \Delta t < I_d $. The term $\Lambda' P_{k+1} \Lambda$ represents the curvature of the value function projected onto the state-noise directions. Hence, the first subcondition requires that the projected curvature of the value function, when restricted to the noise subspace and scaled by $\Delta t$, has all eigenvalues strictly smaller than one. This is a regularity constraint on the propagation of the state noise through the curvature of the value function.

The second subcondition, $\theta \Sigma\Sigma' \Delta t - \Psi_k^{-1} < 0$, imposes the following constraint on the inverse of the covariance matrix of exploration $\Psi_k$: 
\begin{align}\label{cond:explo:covar}
    \Psi_k^{-1} > \theta \Sigma\Sigma' \Delta t.
\end{align}
So $\Psi_k$ is inversely proportional to the investor's risk sensitivity, $\theta$, the diffusion matrix of the asset returns, $\Sigma\Sigma'$, and the time step, $\Delta t$. These relations admit a clear economic interpretation. 

Exploration increases the variability of portfolio weights, so more risk-sensitive investors will explore less. As their risk sensitivity declines, investors have more freedom to explore. In the limit as $\theta \to 0$, the exploration constraint becomes independent of risk sensitivity, so Kelly investors do not face a risk-sensitivity-induced upper bound on exploration.

Asset return volatility already induces variability in portfolio weights and therefore generates indirect exploration of the state variable. Therefore, investors should explore less if their investment universe contains highly volatile securities. 

When the time interval between rebalancings is short, investors can afford greater exploration because they can adjust their asset allocation more frequently. With a longer time interval between rebalancings, departures from the optimal asset allocation become costlier.

\subsection{Optimality Conditions Implied by Proposition \ref{prop:Controls:Alt}}\label{sec:OptimalityConditions:Alt}

Equation \eqref{eq:SOC:hstar:Alt} in the proof of Proposition \ref{prop:Controls:Alt} implies that at the saddle point $(h^*_k,\gamma^*_k,\eta^*_k)$,
\begin{align}
    \mathcal{C}_{k+1}\!(\theta) > 0  
\end{align}
Recalling the definition of $\mathcal{C}_{k+1}$ at \eqref{eq:CalC}, we rewrite this inequality using the initial data for the control problem:
\begin{align}
    \frac{1}{\theta+1}\Sigma\left[
        I_d
        - \theta \left(\Lambda' P_{k+1} \Lambda \Delta t - I_d\right)^{-1} 
        \right]\Sigma' > 0. 
\end{align}
This condition is a generalized \emph{risk-resistance condition}, a condition typically required to solve risk-sensitive control problems \citep{ShaijuPetersenFormulasLQR_LQ_LEQG2008,Whittle1990}. Specifically, the term $I_d - \theta \left(\Lambda' P_{k+1} \Lambda \Delta t - I_d\right)^{-1}$ plays the role of a risk-resistance factor in noise space. The pre- and post-multiplication by 
$\Sigma$ and $\Sigma'$ map this noise-space condition into portfolio space. This operation is intuitive. In our setup, investors control the state variable (the risk factors) through their holdings of risky assets. Moreover, the state variable noise and the noise associated with the risky asset are correlated. 

Additionally, equation \eqref{eq:SOC:etastar:Alt} in the proof of Proposition \ref{prop:Controls:Alt} gives an upper bound for the covariance of exploration in terms of the risk-resistance condition:
\begin{align}
    \Psi_k^{-1} >
        \theta \Sigma
        \left\{
            I_d
            - \theta \left(\Lambda' P_{k+1} \Lambda \Delta t - I_d\right)^{-1} 
        \right\}\Sigma'\Delta t.
\end{align}

To conclude, both sets of conditions ultimately compare the curvature of the value function with the intrinsic geometry of Gaussian noise. Moreover, an important advantage of applying the Free Energy-Entropy Duality is that it only requires Assumptions \ref{as:sigma:posdef} and \ref{as:saddlepoint:cond}. These assumptions are both weaker and more directly verifiable than the classical risk-resistance condition, which typically requires recursive spectral bounds at each step.

\subsection{Optimal Investment Strategies as Kelly Strategies}\label{sec:KellyandFKS}

We now turn our attention to interpreting the optimal asset allocation $h^*$. We start by showing that $h^*$ can be represented as an allocation to the Kelly portfolio $h^\text{Kelly}$ from \eqref{eq:h:Kelly}, penalized by the control $\gamma^*$ that is induced by the free energy-entropy duality penalization and the choice of an optimal measure $\mathbb{P}^{\gamma^*,\eta^*}$:

\begin{corollary}[Corollary to Proposition \ref{prop:SaddlePoint} - Kelly Strategy, Part I: Penalized Kelly Allocation]\label{coro:penalizedKelly}

The optimal investment strategy is to invest in both the Kelly portfolio and an allocation
$\left(\Sigma\Sigma'\right)^{-1}\Sigma \gamma^*$ related to the penalizing control $\gamma^*$ defined at \eqref{eq:gammastar}:  
\begin{align}\label{eq:KellyStrategy1}
h^*_k =  h^\text{Kelly}_k
       +  \left(\Sigma\Sigma'\right)^{-1}\Sigma \gamma^*_k,
\end{align}
    
\end{corollary}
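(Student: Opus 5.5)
The plan is to read the identity directly off the first-order condition for $\bar h$ derived in the proof of Proposition \ref{prop:Controls}, and then combine it with $\eta^*_k=0$ and the Kelly allocation \eqref{eq:h:Kelly}. By Proposition \ref{prop:SaddlePoint}, under Assumptions \ref{as:sigma:posdef} and \ref{as:saddlepoint:cond}, the optimal triple $(h^*_k,\gamma^*_k,\eta^*_k)$ is the unique stationary point of the convex--concave quadratic $F_k(\bar h,\bar\gamma,\bar\eta;X_k)$. It therefore satisfies every first-order condition simultaneously, whatever order we use to compute them.

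The first step is to recall the stationarity condition in $\bar h$, equation \eqref{eq:hstar:1}:
\begin{align*}
\theta\Sigma\Sigma'(h^*_k+\eta^*_k)\Delta t-\theta(a+AX_k)\Delta t-\theta\Sigma\gamma^*_k\Delta t=0 .
\end{align*}
Since $\theta>0$, $\Delta t>0$, and $\Sigma\Sigma'$ is invertible by Assumption \ref{as:sigma:posdef}, this gives
\begin{align*}
h^*_k+\eta^*_k=(\Sigma\Sigma')^{-1}(a+AX_k)+(\Sigma\Sigma')^{-1}\Sigma\gamma^*_k .
\end{align*}

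The second step is to insert $\eta^*_k=0$ from \eqref{eq:etastar}. That value was obtained by subtracting the $\bar h$ first-order condition from the $\bar\eta$ one, which leaves $\Psi_k^{-1}\eta^*_k=0$, and $\Psi_k^{-1}$ is invertible. The third step is to recognise the first term on the right as $h^{\text{Kelly}}_k=(\Sigma\Sigma')^{-1}(a+AX_k)$ from \eqref{eq:h:Kelly}. This yields \eqref{eq:KellyStrategy1}, with $\gamma^*_k$ given explicitly by \eqref{eq:gammastar}.

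No step is a genuine obstacle, since the statement is essentially a relabelling of \eqref{eq:hstar:1}. The only point needing care is that \eqref{eq:hstar:1} holds at the true saddle point and not just at a candidate. This follows from part (iii) of Proposition \ref{prop:SaddlePoint} together with the uniqueness of the stationary point established there. As a consistency check, substituting \eqref{eq:gammastar} into \eqref{eq:KellyStrategy1} should reproduce \eqref{eq:hstar}; the term $-\theta\Sigma\mathcal{B}_{k+1}^{-1}\Sigma'(\Sigma\Sigma')^{-1}(a+AX_k)$ is exactly the Kelly-penalising part of $(\Sigma\Sigma')^{-1}\Sigma\gamma^*_k$.
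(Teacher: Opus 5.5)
Your proposal is correct and follows the paper's proof: solve the $\bar h$ first-order condition \eqref{eq:hstar:1} for $h^*_k+\eta^*_k$, set $\eta^*_k=0$, and recognise $(\Sigma\Sigma')^{-1}(a+AX_k)$ as $h^{\text{Kelly}}_k$ from \eqref{eq:h:Kelly}. Your extra remarks are sound but add nothing essential to the paper's argument: dividing by $\theta\Delta t>0$, invertibility of $\Sigma\Sigma'$, and the consistency check against \eqref{eq:hstar}.
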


\begin{proof}
Equation \eqref{eq:hstar:1} in the proof of Proposition \ref{prop:SaddlePoint} gives the following expression for the mean $(h^* + \eta^*)$ of the distribution of exploratory policies under $\mathbb{P}^{\bar\gamma,\bar\eta}$:
\begin{align*}
    \left(h^* + \eta^*\right)
        =  \left(\Sigma\Sigma'\right)^{-1}\left[ 
            (a + A X_k)
       +  \Sigma \gamma^*_k
       \right]
\end{align*}
Since $\eta^*=0$, and applying the definition of the Kelly portfolio at \eqref{eq:h:Kelly}, we have 
\begin{align}
h^*_k 
=  \left(\Sigma\Sigma'\right)^{-1}(a + A X_k)
        +  \left(\Sigma\Sigma'\right)^{-1}\Sigma \gamma^*_k
 = h^\text{Kelly}_k
        +  \left(\Sigma\Sigma'\right)^{-1}\Sigma \gamma^*_k,
\end{align}
which proves the claim.
\end{proof} 

Intuitively, the structure of equation \eqref{eq:KellyStrategy1} mirrors that of the free energy-entropy duality. The duality expresses a risk-sensitive control problem as a risk-neutral LQG problem, penalized by the antagonistic control $\bar \gamma$. The optimal investment strategy at \eqref{eq:KellyStrategy1} is the Kelly strategy, which is the optimal control for the risk-neutral LQG problem, penalized by a term related to the optimal antagonistic control $\gamma^*$. This penalty term, $\left(\Sigma\Sigma'\right)^{-1}\Sigma \gamma^*$, also has a clean geometric interpretation. It is the image of the $\mathbb{R}^d$-dimensional optimal antagonistic control $\gamma^*$ through the linear map $(\Sigma\Sigma')^{-1}\Sigma$, providing an intuitive connection with least-squares-type linear mappings.

The next result is inspired by the continuous-time risk-sensitive asset management literature and motivates an interpretation of the optimal asset allocation as a fractional Kelly strategy. For later comparison, we first recall the standard definition of fractional Kelly strategies obtained in continuous-time problems \citep[see in particular][]{dall_RSBench,DavisLleoBook2014,LleoRunggaldier_EntropyRegularizationinRLandRSIM_2026}:

\begin{proposition}[Fractional Kelly Strategy (FKS) for Risk-Sensitive Benchmarked Investment Management in Continuous Time - adapted from Proposition 2.10 in \citep{LleoRunggaldier_EntropyRegularizationinRLandRSIM_2026}]\label{prop:FKS}
\;

The continuous-time optimal benchmarked investment strategy $h^*(s,X_s), s \in [t,T]$ consists of an allocation between three funds: $h^\text{Kelly}$, $h^\text{Bench}$, and $h^{IHP}$.

\begin{enumerate}[(i)]

    \item The fund $h^\text{Kelly}$ is a Kelly portfolio with factor-dependent allocation
    \begin{align}\label{eq:Kelly:cont}
        h^\text{Kelly}(X_s) := \left(\Sigma\Sigma'\right)^{-1}\left(a + A X_s \right).
    \end{align}

    \item The fund $h^\text{Bench}$ is a benchmark-tracking portfolio with constant allocation
    \begin{align}\label{eq:BenchTracking:cont}
            h^\text{Bench} := \left(\Sigma\Sigma'\right)^{-1}\Sigma\Xi'.
    \end{align}

    \item The fund $h^{IHP}$ is an Intertemporal Hedging Portfolio (IHP) with factor-dependent allocation
    \begin{align}\label{eq:IHP:cont}
        h^{IHP}(s, X_s) := \left(\Sigma\Sigma'\right)^{-1}\Sigma\Lambda'Du^\text{c}(s,X_s),
    \end{align}
    where $u^\text{c}$ is the optimal value function for the continuous-time risk-sensitive benchmarked investment management problem and where $(Du^\text{c})'(s,x) = \left( \frac{\partial u^\text{c}}{\partial x_1}(s,x), \ldots, \frac{\partial u^\text{c}}{\partial x_n}(s,x)\right)$.
\end{enumerate}

Moreover, the relative allocation of the funds is constant at $f := \frac{1}{\theta+1}$ for $h^\text{Kelly}$, $1-f$ for $h^\text{Bench}$, and $f-1$ for $h^{IHP}$.

\end{proposition}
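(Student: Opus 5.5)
This is the continuous-time analogue of the discrete-time problem, so I would prove it by a direct verification argument on the Hamilton--Jacobi--Bellman equation. The key observation is that the dual/entropy formulation is unnecessary here: the maximizer is obtained by pointwise optimization in $h$.

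\emph{Step 1: the HJB equation.} Following the continuous-time setting of \citet{dall_RSBench,LleoRunggaldier_EntropyRegularizationinRLandRSIM_2026}, the factor dynamics are $dX_s=(b+BX_s)ds+\Lambda dW_s$. The criterion is $-\frac1\theta\ln\mathbf E[e^{-\theta(R_T-R_t)}]$, with $dR_s$ given by \eqref{eq:excess_return:byinterval} with $h_k$ replaced by $h_s$.

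Take the log-transformed value function $u^c$, the continuous-time analogue of $u_k$. I would change measure with the exponential martingale driven by $-\theta(h'\Sigma-\Xi')$ and apply Itô's formula. This gives the risk-sensitive HJB equation
\begin{align*}
\frac{\partial u^c}{\partial s}+\sup_{h}\Big\{ (b+BX-\theta\Lambda(\Sigma'h-\Xi))'Du^c+\tfrac12\tr(\Lambda\Lambda'D^2u^c)-\tfrac{\theta}{2}(Du^c)'\Lambda\Lambda'Du^c - g(X,h)\Big\}=0,
\end{align*}
where
\begin{align*}
g(X,h)=\tfrac{\theta+1}{2}h'\Sigma\Sigma'h-h'(a+AX)-\theta h'\Sigma\Xi+c+CX-\tfrac12\Xi'\Xi+\ldots
\end{align*}
The exact sign convention depends on whether one works with $u^c$ or with $-\theta u^c$. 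I would fix this convention to match \citet{dall_RSBench}, so that $-\theta$ multiplies the cross-variation term.

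\emph{Step 2: pointwise optimization.} The bracket is strictly concave in $h$ because $(\theta+1)\Sigma\Sigma'>0$ by Assumption \ref{as:sigma:posdef}. The first-order condition is
\begin{align*}
(\theta+1)\Sigma\Sigma'h=(a+AX)+\theta\Sigma\Xi-\theta\Sigma\Lambda'Du^c.
\end{align*}
Dividing by $\theta+1$ gives
\begin{align*}
h^*=\tfrac1{\theta+1}(\Sigma\Sigma')^{-1}(a+AX)+\tfrac{\theta}{\theta+1}(\Sigma\Sigma')^{-1}\Sigma\Xi-\tfrac{\theta}{\theta+1}(\Sigma\Sigma')^{-1}\Sigma\Lambda'Du^c .
\end{align*}
With $f=1/(\theta+1)$ we have $\theta/(\theta+1)=1-f$. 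So the three coefficients are $f$, $1-f$ and $-(1-f)=f-1$, multiplying respectively $h^{\text{Kelly}}$, $h^{\text{Bench}}$ and $h^{IHP}$ as defined in (i)--(iii). This is exactly the claimed decomposition.

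\emph{Step 3: well-posedness.} Substituting $h^*$ yields a Riccati-type PDE. I would solve it with the quadratic ansatz $u^c=\frac12x'Q(s)x+x'q(s)+k(s)$, which turns it into a matrix Riccati ODE, a linear ODE for $q$, and a quadrature for $k$. I would invoke the standard existence results for this Riccati equation from \citet{dall_RSBench,DavisLleoBook2014}. A verification theorem then shows that $u^c$ is the value function and that the feedback $h^*(s,X_s)$ is admissible, since it is affine in a Gaussian state.

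The main obstacle is Step 1. Deriving the correct HJB requires careful bookkeeping of the signs and of the change-of-measure drift $-\theta\Lambda(\Sigma'h-\Xi)$ that couples the control with the factor noise. This coupling is what produces the intertemporal hedging term with coefficient $f-1$. Everything after that is an algebraic first-order condition, plus a citation for the Riccati existence.
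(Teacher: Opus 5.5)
Your proof is correct. The paper itself gives no proof of this proposition. It imports the result from \citet{LleoRunggaldier_EntropyRegularizationinRLandRSIM_2026}, where, as the conclusion explains, the continuous-time problem is treated through the Kuroda--Nagai change of measure and the free energy--entropy duality, i.e.\ as a game against an adversarial drift $\gamma$.

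Your route is the classical one of \citet{dall_RSBench}: write the HJB for $\Phi=\sup J$ under the measure generated by $-\theta(h'\Sigma-\Xi')$, then perform one strictly concave pointwise maximization. The two routes give the same answer. In the game, the first-order conditions are $h^*=h^{\text{Kelly}}+(\Sigma\Sigma')^{-1}\Sigma\gamma^*$ (the continuous-time version of Corollary~\ref{coro:penalizedKelly}) and $\gamma^*=\Lambda'Du-\theta(\Sigma'h^*-\Xi)$, with $u$ the game value. Eliminating $\gamma^*$ gives $(\theta+1)\Sigma\Sigma'h^*=a+AX+\theta\Sigma\Xi+\Sigma\Lambda'Du$. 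This is your first-order condition, because $u=\ln\inf I=-\theta\Phi$.

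Your route is shorter. It needs no minimax or saddle-point argument and no analogue of Assumption~\ref{as:saddlepoint:cond}. In fact, without exploration and as $\Delta t\to 0$ the block $\mathcal{A}_{k+1}$ reduces to $-I_d$, so concavity in $\gamma$ is automatic anyway. The duality route buys three things: it reads the deviation from Kelly as the image of Nature's drift, it makes the entropy penalty explicit, and it extends to randomized exploratory controls. That is why the paper adopts it.

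Two points to tighten.

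First, do not describe $u^c$ as ``the continuous-time analogue of $u_k$''. That analogue is $\ln\inf I=-\theta\sup J$. With that normalization the hedging term reads $+\tfrac{1}{\theta+1}(\Sigma\Sigma')^{-1}\Sigma\Lambda'Du$, i.e.\ it has weight $f$ instead of $f-1$. The stated weights $f$, $1-f$, $f-1$ hold exactly for $u^c=\Phi=\sup J$, which is the function your displayed HJB actually governs. So the convention is not cosmetic. (You also rightly read $h^{\text{Bench}}$ as $(\Sigma\Sigma')^{-1}\Sigma\Xi$; the $\Xi'$ in the statement is dimensionally inconsistent with $\Xi\in\mathbb{R}^d$.)

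Second, Step~3 needs no extra hypothesis. After substituting $h^*$, the quadratic coefficient of the Riccati equation is $\theta\Lambda\bigl[I_d-\tfrac{\theta}{\theta+1}\Sigma'(\Sigma\Sigma')^{-1}\Sigma\bigr]\Lambda'\geq\tfrac{\theta}{\theta+1}\Lambda\Lambda'$. The source term is $\tfrac{1}{\theta+1}A'(\Sigma\Sigma')^{-1}A\geq 0$. It is therefore a standard LQ-control Riccati equation, with a global solution on $[t,T]$ for every $\theta>0$, and your citation for existence is legitimate as it stands.
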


In our setting with continuously-priced stocks and discretely-estimated factors, the optimal asset allocation \eqref{eq:hstar:alt} in Proposition \ref{prop:Controls:Alt} can be rewritten as a \emph{rotated and rescaled} fractional Kelly strategy, as shown in the next proposition. 

\begin{proposition}[Rotated and Rescaled Fractional Kelly Strategy]\label{prop:RKS:1}

The optimal investment strategy $h^*$ at \eqref{eq:hstar:alt} consists of a rotated and rescaled allocation to the Kelly portfolio $h^\text{Kelly}$ defined at \eqref{eq:Kelly:cont} but here evaluated at discrete time $k$, benchmark-tracking portfolio $h^\text{Bench}$ defined at \eqref{eq:BenchTracking:cont}, and a discrete-time intertemporal hedging portfolio $h^\text{IHP}$, defined as
\begin{align}\label{eq:IHP:disc}
    h_k^{\text{IHP}}
    :=& \left(\Sigma\Sigma'\right)^{-1}\Sigma\Lambda' 
        \left[ 
            P_{k+1} \left(b \Delta t + \tilde{B} X_k \right) 
            +  p_{k+1} 
        \right]
\end{align}
Moreover, the rotation and rescaling of these portfolios is performed within the optimal investment strategy $h^*$ as:
\begin{align}
    h^*_k = 
    \frac{1}{\theta+1} \mathcal{C}_{k+1}^{-1}\!(\theta)\Sigma\Sigma' h^\text{Kelly}_k 
    + \frac{\theta}{\theta+1} \mathcal{C}_{k+1}^{-1}\!(\theta)\mathcal{A}_{k+1}^{-1}\Sigma\Sigma' h^\text{Bench} 
    - \frac{\theta}{\theta+1} \mathcal{C}_{k+1}^{-1}\!(\theta)\mathcal{A}_{k+1}^{-1}\Sigma\Sigma' h^\text{IHP}_k,
\end{align}
where $\mathcal{C}_{k+1}^{-1}\!(\theta)\Sigma\Sigma'$ is the rescaling factor for the Kelly portfolio and $\frac{\theta}{\theta+1} \mathcal{C}_{k+1}^{-1}\!(\theta)\mathcal{A}_{k+1}^{-1}\Sigma\Sigma'$ is the rescaling factor for the benchmark-tracking and intertemporal hedging portfolioss. 

\end{proposition}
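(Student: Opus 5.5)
The plan is a direct substitution argument starting from the closed form \eqref{eq:hstar:alt} of Proposition \ref{prop:Controls:Alt}. No further optimisation is needed: the statement only regroups the bracketed vector in \eqref{eq:hstar:alt} by writing each of its three pieces as $\Sigma\Sigma'$ applied to one of the three named funds.

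\emph{Step 1: write each piece of the bracket in terms of a fund.} Assumption \ref{as:sigma:posdef} makes $\Sigma\Sigma'$ invertible, so I can multiply each fund definition by $\Sigma\Sigma'$. This gives three identities:
\begin{align*}
a + A X_k &= \Sigma\Sigma' h^\text{Kelly}_k, \\
\Sigma\Xi &= \Sigma\Sigma' h^\text{Bench}, \\
\Sigma\Lambda'\left[P_{k+1}\left(b\Delta t + \tilde B X_k\right) + p_{k+1}\right] &= \Sigma\Sigma' h^\text{IHP}_k.
\end{align*}
The first uses \eqref{eq:Kelly:cont} evaluated at $X_k$ (equivalently \eqref{eq:h:Kelly}). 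The second uses \eqref{eq:BenchTracking:cont}, reading $\Xi'$ there as the column vector $\Xi\in\mathbb{R}^d$ so that the dimensions match. The third uses \eqref{eq:IHP:disc}.

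\emph{Step 2: substitute and distribute.} I insert these three identities into the bracket of \eqref{eq:hstar:alt} and distribute $\frac{1}{\theta+1}\mathcal{C}_{k+1}^{-1}(\theta)$ over the sum. The Kelly part becomes $\frac{1}{\theta+1}\mathcal{C}_{k+1}^{-1}(\theta)\Sigma\Sigma' h^\text{Kelly}_k$. The term $\theta\mathcal{A}_{k+1}^{-1}\Sigma\{\Xi - \Lambda'[\cdots]\}$ splits by linearity into a positive benchmark part and a negative hedging part. After the substitutions these are $\frac{\theta}{\theta+1}\mathcal{C}_{k+1}^{-1}(\theta)\mathcal{A}_{k+1}^{-1}\Sigma\Sigma' h^\text{Bench}$ and $-\frac{\theta}{\theta+1}\mathcal{C}_{k+1}^{-1}(\theta)\mathcal{A}_{k+1}^{-1}\Sigma\Sigma' h^\text{IHP}_k$. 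This yields the stated decomposition. The rescaling matrices can then be read off as the coefficient matrices of the three funds.

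\emph{Step 3: justify the inverses.} Under Assumptions \ref{as:sigma:posdef} and \ref{as:saddlepoint:cond}, the proof of Proposition \ref{prop:Controls:Alt} gives $\mathcal{A}_{k+1}<0$ and $\mathcal{C}_{k+1}(\theta)>0$ via \eqref{eq:SOC:hstar:Alt}. Hence $\mathcal{A}_{k+1}^{-1}$ and $\mathcal{C}_{k+1}^{-1}(\theta)$ exist and every expression above is well defined. The sign pattern of the fund weights, $+$ for benchmark tracking and $-$ for intertemporal hedging, mirrors the $f$ and $f-1$ structure of Proposition \ref{prop:FKS}.

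The main obstacle is dimensional bookkeeping rather than any analytic difficulty. As literally written, $\mathcal{A}_{k+1}^{-1}\Sigma$ composes a $d\times d$ matrix with an $m\times d$ one, and $\mathcal{A}_{k+1}^{-1}\Sigma\Sigma'$ has the same mismatch. I would handle this by reading the product in the only consistent order, namely the noise-space operator $\mathcal{A}_{k+1}^{-1}$ sandwiched by $\Sigma$ on the left, as in $\Sigma\mathcal{A}_{k+1}^{-1}\Sigma'$ inside $\mathcal{C}_{k+1}(\theta)$. Concretely, I would redo the first-order condition for $\bar h$ after eliminating $\gamma^*$ through \eqref{eq:FOC:gammastar}. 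The $\bar h$-gradient of $-\theta(\bar h'\Sigma)\gamma^*$ produces $\Sigma\mathcal{A}_{k+1}^{-1}\{\cdots\}$, so the bracket in \eqref{eq:hstar:alt} should read $\theta\Sigma\mathcal{A}_{k+1}^{-1}\{\Xi - \Lambda'[\cdots]\}$. With that reading, Steps 1–2 go through with $\Sigma\mathcal{A}_{k+1}^{-1}\Xi$ and $\Sigma\mathcal{A}_{k+1}^{-1}\Lambda'[\cdots]$ as the noise-space images of the benchmark and hedging directions. The ``rotation'' is then the action of $\mathcal{A}_{k+1}^{-1}$ before mapping back to portfolio space by $\Sigma$. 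I would state the identity in this consistent form and note that it reduces to the displayed one under the paper's notational convention.
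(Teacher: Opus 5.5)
Your Steps 1--2 are the paper's own proof. It defines the three tilde funds from the bracket of \eqref{eq:hstar:alt} and inserts $\Sigma\Sigma'(\Sigma\Sigma')^{-1}$, a purely formal manipulation with exactly the dimension mismatch you flag ($\mathcal{A}_{k+1}^{-1}$ is $d\times d$, $\Sigma$ is $m\times d$, and $d>m$). The gap is in your repair. Write $\lambda_k:=\Lambda'[P_{k+1}(b\Delta t+\tilde B X_k)+p_{k+1}]$ and $\Pi:=\Sigma'(\Sigma\Sigma')^{-1}\Sigma$. After reordering to $\Sigma\mathcal{A}_{k+1}^{-1}\Xi$ and $\Sigma\mathcal{A}_{k+1}^{-1}\lambda_k$, Step 1 no longer applies, because its identities only rewrite $\Sigma\Xi$ and $\Sigma\lambda_k$. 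In fact $\Sigma\mathcal{A}_{k+1}^{-1}\Xi=\Sigma\mathcal{A}_{k+1}^{-1}\Sigma' h^{\text{Bench}}+\Sigma\mathcal{A}_{k+1}^{-1}(I_d-\Pi)\Xi$. The residual is generally nonzero when $\Xi$ has a non-spanned component (the reason for the extra Brownian dimension) and $\mathcal{A}_{k+1}$ does not commute with $\Pi$. Non-commutation is the generic case when factor and asset noises are correlated. The same applies to $\lambda_k$, which need not lie in $\mathrm{range}(\Sigma')$. So your ``consistent form'' is not a rescaling of $h^{\text{Bench}}$ and $h^{\text{IHP}}_k$, and it does not reduce to the displayed identity. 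The natural replacement $\Sigma\mathcal{A}_{k+1}^{-1}\Sigma'$ for ``$\mathcal{A}_{k+1}^{-1}\Sigma\Sigma'$'' works only under that commutation or spanning condition.

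Second, you say you would redo the first-order condition, but you state its result without carrying it out, and the result is wrong. With $\eta^*_k=0$, \eqref{eq:hstar:1} gives $\Sigma\Sigma'h^*_k=(a+AX_k)+\Sigma\gamma^*_k$. Equation \eqref{eq:FOC:gammastar} gives $\gamma^*_k=\mathcal{A}_{k+1}^{-1}\{\theta\Sigma'h^*_k-\theta\Xi-\lambda_k\}$. Eliminating $\gamma^*_k$ yields
\begin{align*}
(\theta+1)\,\mathcal{C}_{k+1}(\theta)\,h^*_k=(a+AX_k)-\Sigma\mathcal{A}_{k+1}^{-1}\left\{\theta\Xi+\lambda_k\right\}.
\end{align*}
Compared with the bracket you wrote, the benchmark term has the opposite sign and the hedging term has no factor $\theta$. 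A check: take $\Lambda=0$, so $\mathcal{A}_{k+1}=-I_d$ and $\mathcal{C}_{k+1}(\theta)=\Sigma\Sigma'$. Minimizing the one-period Gaussian exponent directly gives $(1+\theta)\Sigma\Sigma'h^*_k=(a+AX_k)+\theta\Sigma\Xi$, which is also what Proposition \ref{prop:Controls} gives. Your bracket instead gives $-\theta\Sigma\Xi$.

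This also undercuts your Step 3 sign reading. In this case the coefficient written with a ``$+$'' in front of $h^{\text{Bench}}$ formally equals $-\frac{\theta}{\theta+1}I_m$, the opposite of the $1-f$ weight in Proposition \ref{prop:FKS}. A sound version has to start from the eliminated equation above and keep $\Sigma\mathcal{A}_{k+1}^{-1}\Xi$ and $\Sigma\mathcal{A}_{k+1}^{-1}\lambda_k$ as the benchmark and hedging directions. It can identify them with $h^{\text{Bench}}$ and $h^{\text{IHP}}_k$ only under the commutation or spanning conditions above.
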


\begin{proof}
We recall the optimal asset allocation given at \eqref{eq:hstar:alt}:
\begin{align}
    h^*_k
    =&  \frac{1}{\theta+1} \mathcal{C}_{k+1}^{-1}\!(\theta)
    \Bigg[
        (a + A X_k)
        + \theta \mathcal{A}_{k+1}^{-1} \Sigma
        \left\{
            \Xi
            - \Lambda'\left[ 
                P_{k+1} \left(b \Delta t + \tilde{B} X_k \right) 
                +  p_{k+1} 
            \right]
        \right\}
    \Bigg],
\end{align}

Define
\begin{align}\label{eq:Kelly:rotatedrescaled:v1}
    h^{\widetilde{\text{Kelly}}}_k :=
        \mathcal{C}_{k+1}^{-1}\!(\theta)(a + A X_k),
\end{align}
\begin{align}\label{eq:BenchTracking:rotatedrescaled:v1}
    h^{\widetilde{\text{Bench}}}_k := 
            \mathcal{C}_{k+1}^{-1}\!(\theta)\mathcal{A}_{k+1}^{-1} \Sigma\Xi.
\end{align}  
and  
\begin{align}\label{eq:IHP:rotatedrescaled:v3}
 h^{\widetilde{\text{IHP}}}_k 
        := -\mathcal{C}_{k+1}^{-1}\!(\theta)\mathcal{A}_{k+1}^{-1}\Sigma\Lambda' 
        \left[ 
            P_{k+1} \left(b \Delta t + \tilde{B} X_k \right) 
            +  p_{k+1} 
        \right]
\end{align}
With these definitions, we express the optimal asset allocation at \eqref{eq:hstar:alt} as:
\begin{align}\label{eq:FKS:Rescaled:Interm}
    h^*_k = 
    \frac{1}{\theta+1} h^{\widetilde{\text{Kelly}}}_k 
    + \frac{\theta}{\theta+1} h^{\widetilde{\text{Bench}}}_k 
    + \frac{\theta}{\theta+1} h^{\widetilde{\text{IHP}}}_k.
\end{align}

Next, we relate $h^{\widetilde{\text{Kelly}}}$, and $h^{\widetilde{\text{Bench}}}$, and $h^{\widetilde{\text{IHP}}}$ with the continuous-time Kelly portfolio $h^\text{Kelly}$ defined at \eqref{eq:Kelly:cont}, the continuous-time benchmark-tracking portfolio $h^\text{Bench}$ defined at \eqref{eq:BenchTracking:cont}, and the discrete-time intertemporal hedging portfolio $h^\text{IHP}$, defined at \eqref{eq:IHP:disc}.

We start with:
\begin{align}
    h^{\widetilde{\text{Kelly}}}_k 
    = \mathcal{C}_{k+1}^{-1}\!(\theta)(a + A X_k)
    = \mathcal{C}_{k+1}^{-1}\!(\theta)(\Sigma\Sigma')(\Sigma\Sigma')^{-1}(a + A X_k)
    = \mathcal{C}_{k+1}^{-1}\!(\theta)\Sigma\Sigma' h^\text{Kelly}_k.
\end{align}
Similarly,
\begin{align}\label{eq:BenchTracking:rotatedrescaled:v2}
    h^{\widetilde{\text{Bench}}}_k 
    = \mathcal{C}_{k+1}^{-1}\!(\theta)\mathcal{A}_{k+1}^{-1} \Sigma\Xi
    = \mathcal{C}_{k+1}^{-1}\!(\theta)\mathcal{A}_{k+1}^{-1}\Sigma\Sigma'h^\text{Bench}
\end{align}  
and
\begin{align}\label{eq:IHP:rotatedrescaled:v1}
 h^{\widetilde{\text{IHP}}}_k 
    = -\mathcal{C}_{k+1}^{-1}\!(\theta)\mathcal{A}_{k+1}^{-1}\Sigma\Lambda' 
        \left[ 
            P_{k+1} \left(b \Delta t + \tilde{B} X_k \right) 
            +  p_{k+1} 
        \right]
    = -\mathcal{C}_{k+1}^{-1}\!(\theta)\mathcal{A}_{k+1}^{-1}\Sigma\Sigma'h^\text{IHP}_k.
\end{align}
Substituting into \eqref{eq:FKS:Rescaled:Interm}, we get
\begin{align}
    h^*_k = 
    \frac{1}{\theta+1} \mathcal{C}_{k+1}^{-1}\!(\theta)\Sigma\Sigma' h^\text{Kelly}_k 
    + \frac{\theta}{\theta+1} \mathcal{C}_{k+1}^{-1}\!(\theta)\mathcal{A}_{k+1}^{-1}\Sigma\Sigma' h^\text{Bench} 
    - \frac{\theta}{\theta+1} \mathcal{C}_{k+1}^{-1}\!(\theta)\mathcal{A}_{k+1}^{-1}\Sigma\Sigma' h^\text{IHP}_k,
\end{align}
which proves the claim.
\end{proof}

\begin{remark}
Note that $h^\text{Bench}$ is constant because $\Sigma$ and $\Xi$ are both assumed constant. By contrast, $h^{\widetilde{\text{Bench}}}_k$ depends on the time index $k$ via the matrices $\mathcal{C}_{k+1}^{-1}\!(\theta)$ and $\mathcal{A}_{k+1}^{-1}$.    
\end{remark}

Therefore, the optimal asset allocation in our model can be interpreted as a fractional Kelly strategy, but with the relative asset allocations of the constituent portfolios---Kelly, benchmark-tracking, and intertemporal hedging---rotated and rescaled by the inverse of the risk-resistance matrix $\mathcal{C}_{k+1}^{-1}\!(\theta)$, discussed above in Section \ref{sec:OptimalityConditions}. Geometrically, this inverse transformation shrinks directions associated with higher resistance (larger eigenvalues of $\mathcal{C}_{k+1}\!(\theta)$) and amplifies directions associated with lower resistance. 

Specifically, the rotation and rescaling factor for the Kelly portfolio is $\frac{1}{\theta+1} \mathcal{C}_{k+1}^{-1}\!(\theta)\Sigma\Sigma'$, while that for the other two constituent portfolios is $\frac{\theta}{\theta+1} \mathcal{C}_{k+1}^{-1}\!(\theta)\mathcal{A}_{k+1}^{-1}\Sigma\Sigma'$. So the rotation and rescaling for the benchmark-tracking and intertemporal hedging portfolios depend also on the inverse of the matrix $\mathcal{A}_{k+1}$, discussed above in Section \ref{sec:OptimalityConditions}. 

For completeness, we also propose a Kelly-like strategy based on the asset allocation in Proposition \ref{prop:Controls}:

\begin{proposition}[Rotated and Rescaled Fractional Kelly Strategy - Part II]\label{prop:RKS:2}
\;

The optimal investment strategy $h^*$ at \eqref{eq:hstar} admits a decomposition into three components: a rotated and rescaled allocation to the Kelly portfolio $h^\text{Kelly}$, a benchmark-tracking component, and an intertemporal hedging component. It can be written as
\begin{align}
    h^*_k = 
    \frac{1}{\theta+1} h^{\widetilde{\text{Kelly}}}_k 
    + \frac{\theta}{\theta+1} h^{\widetilde{\text{Bench}}}_k
    + \frac{\theta}{\theta+1} h^{\widetilde{\text{IHP}}}_k.
\end{align}
where
\begin{align}\label{eq:Kelly:rotatedrescaled:RKS:2}
    h^{\widetilde{\text{Kelly}}}_k :=
    (\theta+1)\left\{
        I_m
        - \theta \left(\Sigma\Sigma'\right)^{-1}\Sigma \mathcal{B}_{k+1}^{-1} \Sigma'
    \right\} h^\text{Kelly}_k,
\end{align}
\begin{align}\label{eq:BenchTracking:rotatedrescaled:RKS:2}
        h^{\widetilde{\text{Bench}}}_k := (\theta+1)\left(\Sigma\Sigma'\right)^{-1}\Sigma \mathcal{B}_{k+1}^{-1} \Xi,
\end{align}
and
\begin{align}\label{eq:IHP:rotatedrescaled:RKS:2}
    h^{\widetilde{\text{IHP}}}_k 
    := \frac{\theta+1}{\theta}\left(\Sigma\Sigma'\right)^{-1} \Sigma \mathcal{B}_{k+1}^{-1}\Lambda' \left\{  
        P_{k+1} \left(b \Delta t + \tilde{B} X_k \right)
        + p_{k+1}
    \right\}
\end{align}
\end{proposition}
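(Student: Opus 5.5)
This is a direct algebraic regrouping of the expression for $h^*_k$ in \eqref{eq:hstar} from Proposition \ref{prop:Controls}, so I would proceed purely by expansion and identification of terms.

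First, distribute the outer factor $\left(\Sigma\Sigma'\right)^{-1}$ in \eqref{eq:hstar}. This splits $h^*_k$ into four pieces:
\begin{align*}
h^*_k =& \left(\Sigma\Sigma'\right)^{-1}(a+AX_k)
- \theta\left(\Sigma\Sigma'\right)^{-1}\Sigma\mathcal{B}_{k+1}^{-1}\Sigma'\left(\Sigma\Sigma'\right)^{-1}(a+AX_k)\\
&+ \theta\left(\Sigma\Sigma'\right)^{-1}\Sigma\mathcal{B}_{k+1}^{-1}\Xi
+ \left(\Sigma\Sigma'\right)^{-1}\Sigma\mathcal{B}_{k+1}^{-1}\Lambda'\left\{P_{k+1}\left(b\Delta t+\tilde B X_k\right)+p_{k+1}\right\}.
\end{align*}

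Next, use $h^\text{Kelly}_k=\left(\Sigma\Sigma'\right)^{-1}(a+AX_k)$ from \eqref{eq:h:Kelly}. The first two pieces combine into $\left\{I_m-\theta\left(\Sigma\Sigma'\right)^{-1}\Sigma\mathcal{B}_{k+1}^{-1}\Sigma'\right\}h^\text{Kelly}_k$. Writing this as $\frac{1}{\theta+1}\cdot(\theta+1)\{\cdots\}h^\text{Kelly}_k$ reproduces $\frac{1}{\theta+1}h^{\widetilde{\text{Kelly}}}_k$ with $h^{\widetilde{\text{Kelly}}}_k$ as in \eqref{eq:Kelly:rotatedrescaled:RKS:2}.

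Then, write the third piece as $\frac{\theta}{\theta+1}(\theta+1)\left(\Sigma\Sigma'\right)^{-1}\Sigma\mathcal{B}_{k+1}^{-1}\Xi$, which is $\frac{\theta}{\theta+1}h^{\widetilde{\text{Bench}}}_k$ by \eqref{eq:BenchTracking:rotatedrescaled:RKS:2}.

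Finally, write the fourth piece as $\frac{\theta}{\theta+1}\cdot\frac{\theta+1}{\theta}(\cdots)$, which is $\frac{\theta}{\theta+1}h^{\widetilde{\text{IHP}}}_k$ by \eqref{eq:IHP:rotatedrescaled:RKS:2}. This step uses $\theta>0$, which holds in the risk-averse case considered here.

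All inverses used are well defined. $\left(\Sigma\Sigma'\right)^{-1}$ exists by Assumption \ref{as:sigma:posdef}. $\mathcal{B}_{k+1}^{-1}$ exists because $\theta\Sigma'\left(\Sigma\Sigma'\right)^{-1}\Sigma\geq0$ and $-\mathcal{A}_{k+1}>0$ by Assumption \ref{as:saddlepoint:cond}, so $\mathcal{B}_{k+1}>0$.

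The only point requiring care is bookkeeping: checking that the placement of $\Sigma'$ and $\left(\Sigma\Sigma'\right)^{-1}$ in the Kelly term matches \eqref{eq:hstar}, so that the factor $\Sigma'\left(\Sigma\Sigma'\right)^{-1}(a+AX_k)$ becomes $\Sigma' h^\text{Kelly}_k$. I do not expect a genuine obstacle.
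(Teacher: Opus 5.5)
Your proposal is correct and takes essentially the same route as the paper. You expand \eqref{eq:hstar}, substitute $h^\text{Kelly}_k=(\Sigma\Sigma')^{-1}(a+AX_k)$ to merge the two $(a+AX_k)$ terms (the paper does this by inserting $(\Sigma\Sigma')(\Sigma\Sigma')^{-1}$), and absorb the factors $(\theta+1)$ and $(\theta+1)/\theta$ into the three component definitions. Your remarks that $\mathcal{B}_{k+1}>0$, because $\Sigma'(\Sigma\Sigma')^{-1}\Sigma\geq 0$ and $-\mathcal{A}_{k+1}>0$, and that $\theta>0$ is needed for the IHP rescaling, add a bit of rigor the paper leaves implicit.
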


\begin{proof}
The decomposition follows directly from equations \eqref{eq:hstar}, that is,
\begin{align}
    h^*_k
    =&  \left(\Sigma\Sigma'\right)^{-1}\Bigg\{ 
        \left\{
            I_m
            - \theta \Sigma \mathcal{B}_{k+1}^{-1} \Sigma'\left(\Sigma\Sigma'\right)^{-1}
        \right\}(a + A X_k)
        + \Sigma \mathcal{B}_{k+1}^{-1}\left\{  
            \Lambda' P_{k+1} \left(b \Delta t + \tilde{B} X_k \right)
            + \Lambda'   p_{k+1} 
            + \theta \Xi
        \right\}
       \Bigg\}.
\end{align}

Define:
\begin{align}\label{eq:Kelly:rotatedrescaled:v2}
    h^{\widetilde{\text{Kelly}}}_k :=
    (\theta+1)\left(\Sigma\Sigma'\right)^{-1} 
    \left\{
        I_m
        - \theta \Sigma \mathcal{B}_{k+1}^{-1} \Sigma'\left(\Sigma\Sigma'\right)^{-1}
    \right\}(a + A X_k),
\end{align}
$h^{\widetilde{\text{Bench}}}_k$ according to \eqref{eq:BenchTracking:rotatedrescaled:RKS:2}, and $h^{\widetilde{\text{IHP}}}_k$ according to \eqref{eq:IHP:rotatedrescaled:RKS:2}. With these definitions, we can express the optimal asset allocation as:
\begin{align}
    h^*_k = 
    \frac{1}{\theta+1} h^{\widetilde{\text{Kelly}}}_k 
    + \frac{\theta}{\theta+1} h^{\widetilde{\text{Bench}}}_k
    + \frac{\theta}{\theta+1} h^{\widetilde{\text{IHP}}}_k.
\end{align}

We proceed as in the proof of Proposition \ref{prop:RKS:1} and relate $h^{\widetilde{\text{Kelly}}}$ with the Kelly portfolio $h^\text{Kelly}$.
\begin{align}
    h^{\widetilde{\text{Kelly}}}_k 
    &= (\theta+1)\left(\Sigma\Sigma'\right)^{-1} 
    \left\{
        I_m
        - \theta \Sigma \mathcal{B}_{k+1}^{-1} \Sigma'\left(\Sigma\Sigma'\right)^{-1}
    \right\}(a + A X_k)
                                \nonumber\\
    &= (\theta+1)\left(\Sigma\Sigma'\right)^{-1} 
    \left\{
        I_m
        - \theta \Sigma \mathcal{B}_{k+1}^{-1} \Sigma'\left(\Sigma\Sigma'\right)^{-1}
    \right\} \left(\Sigma\Sigma'\right) \left(\Sigma\Sigma'\right)^{-1}(a + A X_k)
                                \nonumber\\
    &= (\theta+1)\left\{
        I_m
        - \theta \left(\Sigma\Sigma'\right)^{-1}\Sigma \mathcal{B}_{k+1}^{-1} \Sigma'
    \right\} h^\text{Kelly}_k.
\end{align}
This proves the decomposition and the representation \eqref{eq:Kelly:rotatedrescaled:RKS:2} of the rotated and rescaled Kelly component.
\end{proof}



\section{Numerical Solution via Reinforcement Learning}\label{sec:RL}

So far, we have implicitly assumed that the market model introduced in Section \ref{sec:setup} is an exact representation of financial markets and that all its parameters can be estimated with high accuracy. When these assumptions hold, we can use the recursions in Theorem \ref{theo:main:recursions} to numerically solve the risk-sensitive benchmarked investment management problem. However, in reality, estimating the parameters with sufficient accuracy is extremely difficult. The parameters determining expected returns are a typical example. Accordingly, the main purpose of the market model constructed in Sections \ref{sec:setup} and \ref{sec:solution} is to capture the essential features of financial markets, to characterize their fundamental relations, and to provide a functional form for the policies and value functions. 

We can then use these insights to engineer a reinforcement learning (RL) method that will learn the investment strategy and, possibly, the value function, directly through interactions with the market. One major class of RL techniques is based on policy gradient methods, which start from parametrized families of controls and value functions, where the parameters have to be learned through interaction with the market. On the other hand, for the given model structure in Section \ref{sec:setup}, and assuming that the model parameters are known, we derived explicit expressions for the value functions and the control policies in Section \ref{sec:solution}. This gives us an indication of how to choose parametrized expressions for the value functions and controls with parameters that have to be learned from observing the market. To learn these parameters, we shall apply policy gradient methods. These parametrized expressions are actually approximations to the actual expressions for the value function and controls, and a great deal of the literature is devoted to studying their convergence. In the present section, we discuss how to implement policy gradient methods to solve our investment problem numerically when only the model structure, but not the parameters, is assumed to be known.

Policy gradient methods have gained popularity in recent years following groundbreaking advances in the field of reinforcement learning and a series of successful applications, most notably to Go. We refer the reader to Chapter 13 in the classic text by \citet{SuttonBarto2018}. All policy gradient methods follow the same essential steps. First, approximate the policy using a conveniently parametrized function. Second, compute the gradient of a given performance measure, usually a value function, with respect to the parameters of the function approximating the controls. Third, update the parameters by performing a gradient descent update if the aim is to minimize the performance measure, or a gradient ascent update if the aim is to maximize the performance measure.

In this section, we use the typewriter font for the learnable parameter matrices and vectors
$\mathtt{D},\mathtt{d},\mathtt{E},\mathtt{e},\mathtt{F},\mathtt{f}$, and we use $\phi_k$ and $\boldsymbol{\Phi}$ to denote parameter blocks and their horizon-wise collection.

\subsection{Simple Policy Gradient Approach}

A natural approach to implementing a policy gradient method in our context is to extend the approach by \citet{hamblyPolicyGradientMethods2021,hamblyPolicyGradientMethods2023}. Their work on policy gradient methods for linear quadratic regulators and linear-quadratic games captures the essential features of LQG control problems and games, namely the quadratic cost in the state and controls, and the linear dynamics of the state variable. It also provides strong convergence results. However, this work ignores some features that are important for financial applications, such as the interaction between the state and control in the cost function, first-order terms in the cost function, and the possibility of a mean reversion to a long-term mean different from 0 for the state. Hence, further work is necessary.

\subsubsection{Application to the LQG Game}

In the general case $\theta >0$, we derived an LQG game under an auxiliary measure and showed that its optimal controls are affine in the state, and that the value function is quadratic. We can therefore use the following parametrization\footnote{Note that this is a full parametrization. We could use the fact that $\eta^* = 0$ to substantially simplify the mathematical exposition and reduce the amount of numerical work.}:
\begin{align}
    h_k = \mathtt{D}_k X_k + \mathtt{d}_k
                        \qquad
    \bar\gamma_k = \mathtt{E}_k X_k + \mathtt{e}_k
                        \qquad
    \bar\eta_k = \mathtt{F}_k X_k + \mathtt{f}_k
\end{align} 
for $k=0,\ldots,K-1$, where $\mathtt{D}_k \in \mathbb{R}^{m \times n}$, $\mathtt{d}_k \in \mathbb{R}^{m}$, $\mathtt{E}_k \in \mathbb{R}^{d \times n}$, $\mathtt{e}_k \in \mathbb{R}^{d}$, $\mathtt{F}_k \in \mathbb{R}^{m \times n}$, $\mathtt{f}_k \in \mathbb{R}^{m}$. Let $\phi_k := \left(\mathtt{D}_k,\mathtt{d}_k,\mathtt{E}_k,\mathtt{e}_k,\mathtt{F}_k,\mathtt{f}_k\right)$. Further define $\mathtt{\mathbf{D}} := \left(\mathtt{D}_0, \ldots, \mathtt{D}_{K-1} \right)$ with similar definitions for $\mathtt{\mathbf{d}}, \mathtt{\mathbf{E}}, \mathtt{\mathbf{e}}, \mathtt{\mathbf{F}}, \mathtt{\mathbf{f}}$, and let $\boldsymbol{\Phi} = \left(\mathtt{\mathbf{D}},\mathtt{\mathbf{d}},\mathtt{\mathbf{E}},\mathtt{\mathbf{e}},\mathtt{\mathbf{F}},\mathtt{\mathbf{f}}\right)$.

Based on \eqref{eq:criterion:I:Pbar:step2} and \eqref{eq:criterion:I:Pbar}, the objective function $C$, as a function of the policy parameters, is
\begin{align}
    & C(\mathtt{D},\mathtt{d},\mathtt{E},\mathtt{e}, \mathtt{F},\mathtt{f}) 
                                \\
    =& \mathbf{E}^{\bar\gamma, \bar\eta} 
        \Bigg[  \theta \sum_{k=0}^{K-1} 
        g(\mathtt{D},\mathtt{d},\mathtt{E},\mathtt{e}, \mathtt{F},\mathtt{f};X_k)\Delta t 
        \Bigg]
                                \\
    =& \mathbf{E}^{\bar\gamma, \bar\eta} \Bigg[ 
        - \theta \left(R_T(\mathtt{D},\mathtt{d},\mathtt{E},\mathtt{e}; X_{0:K-1}) - R_0\right) 
        -\frac{1}{2} \sum_{k=0}^{K-1} \left\{\left(\mathtt{E}_k X_k + \mathtt{e}_k\right)' \left(\mathtt{E}_k X_k + \mathtt{e}_k\right) \Delta t 
        \right.
                            \nonumber\\
    & \left.
        + \left(\mathtt{F}_k X_k + \mathtt{f}_k\right)' \Xi_k^{-1} \left(\mathtt{F}_k X_k + \mathtt{f}_k\right)\right\}
    \Bigg],
\end{align}
which we also denote more succinctly as $C(\boldsymbol{\Phi})$ when we do not need to identify each parameter. With this notation, Hambly et al.'s so-called \emph{Natural Policy Gradient} (extended to include a zeroth-order term) for a sequence of episodes $\ell=1,\ldots, M$ produces the following \emph{online} update: 
\begin{align}
    \phi^{(\ell+1)}_k = \phi^{(\ell)}_k \pm \delta_\ell \, \widehat{\nabla}_{\phi_k} C\!\left(\boldsymbol{\Phi}^{(\ell)}\right) \, \left(\widehat{\Lambda}^{(\ell)}_k\right)^{-1},           
\end{align}
for $k=0,1,\ldots,K-1$ and where $\delta_\ell$ is the step size. In the case of unknown model parameters, $\widehat{\nabla}_{\phi_k} C\!$ denotes the empirical policy gradient of $C$ with respect to the parameter block $\phi_k$, and $\widehat{\Lambda}^{(\ell)}_k$ is an empirical estimate for the state covariance matrix $\Lambda_k := \mathbf{E}\left[ X_k X_k'\right]$ at time $k$. Here, the preconditioning by $(\widehat{\Lambda}^{(\ell)}_k)^{-1}$ is understood blockwise on the coefficients multiplying $X_k$; for the zeroth-order terms, the corresponding natural-gradient scaling is defined analogously (equivalently, by augmenting the state with a constant feature). 

The update sign $\pm$ is understood componentwise by parameter block. It is negative (descent) for the minimizing controller parameters $(\mathtt D,\mathtt d)$ and positive (ascent) for the adversarial parameters $(\mathtt E,\mathtt e,\mathtt F,\mathtt f)$. Thus, the learning problem is a finite-dimensional saddle-point problem in the policy parameters, with minimization over $(\mathtt D,\mathtt d)$ and maximization over $(\mathtt E,\mathtt e,\mathtt F,\mathtt f)$.

\subsubsection{Application to the Kelly Portfolio}

When $\theta \to 0$, we obtain the Kelly portfolio, as discussed in Section \ref{sec:KellyandFKS:Kelly}. Here, the implementation is simpler than in the general case for three reasons. First, the problem is set under the initial measure $\mathbb{P}$. Second, the problem is a standard LQG control problem, not a game. Third, the optimal control does not depend on the parameters of the value function.

As was the case for the LQG game, the optimal control is affine in the state. We can therefore use the following affine parametrization to set up the policy gradient method:
\begin{align}
    h^\text{Kelly}_k = \mathtt{D}^{\text{Kelly}}_k X_k + \mathtt{d}^{\text{Kelly}}_k
\end{align} 
for $k=0,\ldots,K-1$, where $\mathtt{D}^{\text{Kelly}}_k \in \mathbb{R}^{m \times n}$, $\mathtt{d}^{\text{Kelly}}_k \in \mathbb{R}^{m}$, and $\phi^{\text{Kelly}}_k := \left(\mathtt{D}^{\text{Kelly}}_k, \mathtt{d}^{\text{Kelly}}_k\right)$. As earlier, we define $\mathtt{\mathbf{D}}^{\text{Kelly}} := \left(\mathtt{D}^{\text{Kelly}}_0, \ldots, \mathtt{D}^{\text{Kelly}}_{K-1} \right)$ with similar definitions for $\mathtt{\mathbf{d}}$, and we let $\boldsymbol{\Phi}^{\text{Kelly}} = \left(\mathtt{\mathbf{D}}^{\text{Kelly}},\mathtt{\mathbf{d}}^{\text{Kelly}}\right)$.

Based on \eqref{eq:valuefunction:u:Kelly} and \eqref{eq:J:Kelly}, the objective function $C$, as a function of the policy parameters, is
\begin{align}
    & C^{\text{Kelly}}(\mathtt{D}^{\text{Kelly}},\mathtt{d}^{\text{Kelly}}) 
                                \\
    =& \mathbf{E} 
        \Bigg[ \sum_{k=0}^{K-1} 
        g^{\text{Kelly}}(\mathtt{D}^{\text{Kelly}},\mathtt{d}^{\text{Kelly}};X_k)\Delta t 
        \Bigg]
                                \\
    =& \mathbf{E} \left[
            \left(R_T(\mathtt{D}^{\text{Kelly}},\mathtt{d}^{\text{Kelly}}; X_{0:K-1}) - R_0\right)  
        \right]
\end{align}
Then Hambly et al.'s so-called \emph{Natural Policy Gradient} updating rule (extended here as well to include a zeroth-order term) for a sequence of episodes $\ell=1,\ldots,M$ is: 
\begin{align}
    \phi_k^{\text{Kelly},(\ell+1)} = \phi_k^{\text{Kelly},(\ell)} + \delta_\ell \,\widehat{\nabla}_{\phi_k} C^{\text{Kelly}}\!\left(\boldsymbol{\Phi}^{\text{Kelly},(\ell)}\right)\, \left(\widehat{\Lambda}^{(\ell)}_k\right)^{-1},           
\end{align}
for $k=0,1,\ldots,K-1$. Here again, the preconditioning by $(\widehat{\Lambda}^{(\ell)}_k)^{-1}$ is understood blockwise on the coefficients multiplying $X_k$; for the zeroth-order term, the corresponding natural-gradient scaling is defined analogously (equivalently, by augmenting the state with a constant feature). This is a gradient-ascent update for the maximizing controller parameters $(\mathtt D^{\text{Kelly}},\mathtt d^{\text{Kelly}})$.

\subsection{From Policy Gradient to Actor-Critic}

Actor-critic methods build on this approach to update both the policy and the value function. The \emph{actor} performs the three policy gradient steps to update the parameters of the policy, at the current estimate for the value function, while the \emph{critic} performs three similar steps to update the parameters of the value function, at the current estimate for the policy.

In the general case of the LQG game, we take the critic as
\begin{align}
u_k(X_k)= \frac{1}{2} X_k' \mathtt{P}_k X_k + X_k'\mathtt{p}_k + \mathtt{r}_k,
\end{align}
where $\mathtt{P}_k \in \mathbb{R}^{n \times n}, \mathtt{p}_k \in \mathbb{R}^{n}$, and $\mathtt{r}_k \in \mathbb{R}$.

The critic parameters $(\mathtt P_k,\mathtt p_k,\mathtt r_k)$ are updated on a faster timescale by \emph{temporal-difference} methods. The key here is to establish that the joint gradient updates will converge to the optimal controls and the optimal value function.

Finally, we note that global convergence results exist for policy-gradient methods for LQR problems--- see \citep{fazelGlobalConvergencePolicy2018,hamblyPolicyGradientMethods2021} for control problems and \citet{hamblyPolicyGradientMethods2023} for linear-quadratic games. On the actor-critic front, \citet{alacaogluNaturalActorCriticFramework2022} obtained promising results for two-player Markov games. 

In the special case of the Kelly portfolio, the policy does not depend on the value function. Hence, adding a critic does not improve the speed or accuracy of the policy estimation. Therefore, a critic only needs to be included if an estimate of the value function is of independent interest to the investor.

\section{Conclusions}

We have considered a risk-sensitive benchmarked asset management problem in the form of a non-standard LEQG problem. The problem falls outside the classic LEQG template because the state process is uncontrolled, and the noise appears linearly within an exponential. The assets and the benchmark are modeled as continuous-time processes, but portfolio rebalancing occurs at a fixed discrete time schedule, leading to a discrete-time stochastic control problem. Furthermore, the dynamics of the assets and the benchmark are affected by a factor process that can be observed at the discrete rebalancing times. In this discrete-time setup, it was rather straightforward to introduce randomized controls for exploration.

We applied the free energy-entropy duality to connect the risk-sensitive control problem, set with respect to an initial measure, to a risk-neutral randomized stochastic game, defined with respect to a transformed measure. This approach automatically implies a penalization in the risk-neutral problem. The penalization is given by the relative entropy of the transformed measure with respect to the original one. This connection also offers a theoretical justification for penalizing the randomization of controls, which has appeared more ad hoc in the literature.

We have solved the resulting risk-neutral penalized stochastic game, assuming the model coefficients are known. We show that, as functions of the observable factors, the optimal controls are linear and the optimal value function is quadratic. We also considered the limiting case in which the risk-sensitivity parameter tends to zero. This leads to a Kelly portfolio. We have shown that the optimal asset allocation can be rewritten as a rescaled fractional Kelly strategy. We also discussed various implications of the assumptions underlying our results.

For the case where the model parameters are not known, we discussed how to adapt our approach using known Reinforcement Learning methods. In particular, we focused on policy gradient and actor-critic methods. The structure we obtained for the optimal strategies and value, with their linear and quadratic dependence on the factors, indicates the parametrization to use for the controls and the value function. This supports applying policy gradient and actor-critic methods to learn the parameters from market observations.

One fundamental study in risk-sensitive stochastic control appears in the paper by \citet{kuna02}. The authors use a change of measure to transform a non-standard discrete-time risk-sensitive LEQG problem, like ours, into a standard discrete-time LEQG problem. This transformation may lead to an alternative two-step approach to our problem. First, the Kuroda-Nagai methodology transforms the non-standard LEQG problem into a standard one. Second, applying the free energy-entropy duality reduces the problem to an equivalent LQG problem. Such a two-step approach has been studied in detail by \citet{LleoRunggaldier_EntropyRegularizationinRLandRSIM_2026}, which is formulated entirely in continuous time and does not consider randomized controls from the outset. It is shown there that a one-step approach, as used here, yields the same results as the two-step approach. We could apply a two-step approach to the present setup, following another of our papers, \citet{LleoRunggaldier_ExploratoryRandomization_2026}, but omit it to avoid overburdening this paper.

%
%


\end{document}